\documentclass[11pt,a4paper,english]{amsart}
\usepackage[utf8]{inputenc}
\usepackage[english]{babel}

\usepackage{amsmath}
\usepackage{amsthm}
\usepackage{graphicx}
\usepackage{xcolor}
\usepackage{amsfonts}
\usepackage[biblabel]{cite}
\usepackage{bm}
\usepackage[hidelinks]{hyperref}
\usepackage{amssymb}
\usepackage{amscd}
\usepackage{etoolbox}
\patchcmd{\thmhead}{(#3)}{#3}{}{}
\usepackage{enumitem}
\usepackage{breqn}
\usepackage[nameinlink,capitalise,noabbrev]{cleveref}
\usepackage{braket}

\usepackage{mathtools}
\DeclarePairedDelimiter\abs{\lvert}{\rvert}
\DeclarePairedDelimiter\norm{\lVert}{\rVert}
\makeatletter
\let\oldabs\abs
\def\abs{\@ifstar{\oldabs}{\oldabs*}}
\let\oldnorm\norm
\def\norm{\@ifstar{\oldnorm}{\oldnorm*}}
\makeatother

\newtheorem{theorem}{Theorem}[section]
\newtheorem{proposition}[theorem]{Proposition}
\newtheorem{corollary}[theorem]{Corollary}
\newtheorem{lemma}[theorem]{Lemma}
\theoremstyle{definition}
\newtheorem{definition}[theorem]{Definition}
\newtheorem{remark}[theorem]{Remark}
\newtheorem{example}[theorem]{Example}

\DeclareMathOperator{\Rep}{Rep}

\newcommand{\vladut}{Vl\u{a}du\c{t}}
\newcommand{\F}{{\mathbb{F}}}

\newcommand{\Tr}{\operatorname{Tr}}
\newcommand{\wt}{\operatorname{wt}}
\newcommand{\supp}{\operatorname{supp}}
\newcommand{\Span}{\operatorname{span}}

\newcommand{\diag}{\operatorname{diag}}
\newcommand{\cC}{\mathcal{C}}
\newcommand{\cA}{\mathcal{A}}
\newcommand{\cQ}{\mathcal{Q}}

\newcommand{\one}{\mathbf{1}}
\newcommand{\size}[1]{\left|#1\right|}

\title[Asymptotically good binary triorthogonal codes]{Asymptotically good binary triorthogonal codes and higher-level transversal gates}
\author{Rodrigo San-José}
\address[Rodrigo San-José]{Department of Mathematics\\ Virginia Tech\\ Blacksburg, VA USA}
\email{rsanjose@vt.edu}

\thanks{The author was partially supported by the NSF grant DMS-2401558, the Commonwealth Cyber Initiative, an AMS-Simons Travel Grant, and by Grant PID2022-138906NB-C21 funded by MICIU/AEI/10.13039/501100011033 and by ERDF/EU}
\subjclass[2020]{Primary: 94B05. Secondary: 81P70, 94B27, 81P65}
\keywords{CSS codes, triorthogonal, CSS-T, non-Clifford, algebraic geometry codes}

\begin{document}

\begin{abstract}
For each level of the Clifford hierarchy from the third level onward, we construct explicit asymptotically good binary CSS codes for which a strongly transversal diagonal rotation, followed by a correction from the preceding level, implements the corresponding logical rotation. We achieve this using algebraic geometry codes over binary extension fields and performing an alphabet reduction that translates orthogonality conditions over the extension field to binary overlap conditions. In particular, we obtain the first asymptotically good family of binary triorthogonal codes. We also give a stronger construction with worse parameters for which the strongly transversal physical rotation implements the desired logical rotation, with no subsequent correction. Finally, adapting an error-correction-based distillation protocol, our binary $T$-gate families yield direct constant-overhead $T$-state block distillation from noisy $T$-state inputs under sufficiently weak input noise and ideal stabilizer operations.
\end{abstract}

\maketitle

\section{Introduction}
Transversal gates are among the simplest fault-tolerant logical operations, because a fault in one physical gate cannot spread within a code block. We focus on the single-qubit non-Clifford diagonal rotations, whose first case is the \(T\) gate. A central distinction in this subject is between preserving the code space and inducing the desired non-Clifford logical action. CSS-T codes, introduced in \cite{calderbankclassicalcsst,rengaswamyOptimalityCSST}, are CSS codes that are preserved under the action of a transversal $T$ gate. While they are interesting in their own right and have been extensively studied recently \cite{sanjose_CSST,ruano_csst_cartesian,kashyap_asymptotically_good_csst,sanjose_allerton_triorthogonal}, the logical action they implement need not be non-Clifford; e.g., see \cite{albert_T_gate,kashyap_asymptotically_good_csst}. For example, \cite{berardini_asymptotically_csst} constructs asymptotically good binary CSS-T codes whose logical action is the identity.

Triorthogonal codes, introduced in \cite{bravyiTriorthogonalOriginal}, are a class of CSS codes which, up to a Clifford correction, implement a logical $T$ gate when one applies a transversal $T$ gate on the physical qubits. This is crucial if we want to achieve universal quantum fault-tolerant computation, since the Clifford group, together with the $T$ gate, forms a universal gate set. One way to perform the $T$ gate is via state injection, which requires magic states produced via magic state distillation. The original paper \cite{bravyiTriorthogonalOriginal} already explains how to perform magic state distillation starting from a triorthogonal code. Thus, constructing good triorthogonal codes in the finite-length regime is a topic of considerable interest \cite{albert_T_gate,rosnes_triorthogonal}. Moreover, the distillation overhead is related to the scaling exponent $\gamma=\log(N/K)/\log d_Z$. Finding codes with a small scaling exponent, or families for which $\gamma$ tends to zero \cite{wills_csst_decreasing_improved}, is also an active research area. We stress that $\gamma\to0$ in this sense does not by itself imply constant-overhead magic state distillation. Recently, it was shown in \cite{wills_constant_overhead_msd} that constant overhead can be achieved by combining asymptotically good triorthogonal codes over binary extension fields with a new single-round distillation protocol based on error correction rather than post-selection.

Constructions in which $\gamma$ tends to zero are already known \cite{wills_csst_decreasing_improved}, and asymptotically good triorthogonal codes over binary extension fields have recently been used to obtain constant-overhead magic state distillation \cite{wills_constant_overhead_msd}. Still, it remains interesting to find asymptotically good \emph{binary} triorthogonal codes. Indeed, the codes in \cite{wills_constant_overhead_msd} are naturally defined over a binary field extension of degree $m$; after identifying a qudit with $m$ qubits, their transversal gate becomes an $m$-qubit block gate rather than single-qubit transversal $T^{\otimes N}$ gates. 

In \cite{guruswami_good_transversal}, the authors construct asymptotically good families of codes with a transversal $CCZ$. For prime dimensions $q\geq 5$, their codes also have a transversal diagonal non-Clifford gate which may be regarded as a qudit generalization of the $T$ gate. Their construction also works for qubits, but the corresponding single-qubit diagonal gate is Clifford. A simultaneous result \cite{nguyen_good_css_ccz} gives asymptotically good binary codes with transversal $CCZ$. In \cite{wills_asymptotically_good_non_clifford}, the authors further obtain asymptotically good binary codes with transversally addressable $CCZ$, allowing a logical $CCZ$ to be applied to any chosen triple of logical qubits across one or several code blocks. Their related framework of addressable orthogonality also extends formally to transversal, addressable $T$ gates up to Clifford corrections, although no asymptotically good instantiation for $T$ is obtained \cite{wills_addressable_non_clifford} (in particular, see their Open Problems 3 and 4). Obtaining asymptotically good constructions for the single-qubit $T$ gate, with the desired logical action $T$, has remained elusive; even the recent constructions in \cite{wills_csst_decreasing_improved} only achieve a constant rate with sub-polynomial distance. Very recent work also highlights the existence of asymptotically good CSS codes with a strongly transversal \(T\) gate as an open problem \cite{kashyap_z_rotations}.

This paper closes that gap: we construct the first asymptotically good family of binary triorthogonal codes, showing that linear rate and linear distance can be achieved simultaneously. We use explicit algebraic geometry (AG) codes over binary extension fields and an alphabet-reduction map that preserves orthogonality conditions. Moreover, the general constructions that we give also allow us to obtain asymptotically good CSS codes with transversal diagonal gates at any fixed level $h\geq 3$ of the Clifford hierarchy. We also give a second construction, with worse parameters, in which no subsequent correction is required. Adapting the error-correction-based protocol of \cite{wills_constant_overhead_msd}, our binary triorthogonal families give a direct constant-overhead \(T\)-state block-distillation protocol, using noisy \(T\)-state inputs directly rather than intermediate multi-qubit magic-state conversions. Here and throughout, explicit means that we can construct the generator matrices of the codes deterministically from the specified tower level, and we can construct the concatenation map.

The construction proceeds as follows. We consider asymptotically good AG codes over a binary extension such that a high enough Schur power of the code is contained in its dual. We then shorten and puncture these codes on a suitable set of coordinates, and use an additive concatenation map to pass from the extension field to binary codes. The concatenation map, which we prove always exists, is chosen so that the relevant Schur-orthogonality identities of the original AG code are transferred to overlap identities for its binary image. After puncturing, these identities express the overlaps of the resulting binary rows in terms of the deleted coordinates. By choosing suitable rows associated with these coordinates, together with the binary image of the shortened code, we obtain binary codes \(C_2\subset C_1\) with the weight and overlap conditions required for a transversal diagonal gate to induce the desired logical action. The distance of the resulting CSS codes is controlled by the distances of the original AG code and its dual. The main difference with previous constructions, such as the ones in \cite{guruswami_good_transversal,nguyen_good_css_ccz}, which use multiplication-friendly embeddings, is that our alphabet-reduction maps are designed to translate the higher-order Schur-orthogonality identities into the binary overlap conditions needed for single-qubit diagonal gates.

With respect to the structure of the paper, in \cref{s:preliminaries} we introduce the basic tools that we need for the rest of the paper: AG codes, CSS codes, and transversal gates, and lemmas about the phases induced by diagonal gates. In \cref{sec:outer}, we describe the family of AG codes that we consider, their parameters, and their orthogonality properties. In \cref{ss:map}, we introduce the maps we will use to reduce to the binary alphabet. In \cref{s:concatenate}, we construct asymptotically good CSS codes for which a strongly transversal diagonal rotation, followed by a correction from the preceding level of the Clifford hierarchy, implements the desired logical rotation at every fixed level $h\geq 3$. To complement this result, in \cref{sec:correction-free}, we give another construction of asymptotically good CSS codes with diagonal strongly transversal gates, but in this case we do not require a correction, at the cost of producing worse parameters. Finally, in \cref{sec:families}, we give some particular examples of the concatenation maps we can use for the case of the $T$ gate, obtaining explicit constructions of asymptotically good triorthogonal codes. Two of them require a Clifford correction, while the last one does not require any correction. Adapting the arguments from \cite{wills_constant_overhead_msd}, we also show that the $T$-gate families yield constant-overhead magic-state distillation in \cref{ss:constant-overhead}.

\section{Preliminaries}\label{s:preliminaries}
Let $q=2^m$ be a power of $2$, and let $\Tr=\Tr_{\F_{2^m}/\F_2}$ denote the absolute trace. A \textit{linear code} $C\subset \F_q^n$ is a linear subspace of $\F_q^n$. Given $c\in C$, we denote by $\wt(c):=|\{ i:c_i\neq 0 \}|$ the \textit{Hamming weight} of $c$. The dimension of $C$ is its dimension as a linear subspace, and its \textit{minimum distance} is $d:=\min \{\wt(c):c\in C,\;c\neq 0\}$. The code $C^\perp$ denotes the Euclidean dual of $C$, and we denote $\one:=(1,\dots,1)$. 

\subsection{Schur products and AG codes}\label{ss:ag_codes}

For vectors over a field, $x\star y$ denotes coordinate-wise multiplication. If $C$ is a linear code, $C^{\star t}$ is the span of all $t$-fold Schur products of codewords of $C$. This is an operation that appears naturally in the context of CSS-T and triorthogonal codes \cite{rengaswamyOptimalityCSST,sanjose_CSST,sanjose_transversal_css}.

Let $\mathcal{F}/\F_{2^m}$ be an algebraic function field of genus $g$. A divisor of $\mathcal{F}$ is a finite formal sum $G=\sum_P n_P P$, with $n_P\in \mathbb{Z}$, over the places of $\mathcal{F}$. Its support is $\supp(G):=\{P:n_P\neq 0\}$ and its degree is $\deg G= \sum_P n_P\deg P$. We say that $G$ is \textit{effective} if $n_P\geq 0$ for all $P$, and we write $G\geq 0$.  Given a function $f$ or a differential $\omega$, let $(f)$ and $(\omega)$ denote their associated divisors. 

Let $D=P_1+\cdots+P_n$ be a sum of distinct rational places, and let $G$ be a divisor whose support is disjoint from that of $D$. The Riemann--Roch space associated to the divisor $G$ is then $L(G):=\{f \in \mathcal{F}: (f)+G\geq 0\}\cup \{0\}$. The AG code associated to $D$ and $G$ is
\[
 C_L(D,G):=\{(f(P_1),\ldots,f(P_n)):f\in L(G)\}\subset \F_q^n.
\]
When $2g-1\leq\deg G<n$, the standard estimates for their parameters are
\begin{equation}\label{eq:ag-parameters}
 \dim C_L(D,G)=\deg G+1-g,
 \; d\bigl(C_L(D,G)\bigr)\geq n-\deg G.
\end{equation}
Furthermore,
\begin{equation}\label{eq:star-product}
 C_L(D,G)^{\star t}\subset C_L(D,tG).
\end{equation}
Choose a differential $\omega$ with a simple pole and residue one at each $P_a$, and put $R=(\omega)$. Then
\begin{equation}\label{eq:ag-dual}
 C_L(D,G)^\perp=C_L(D,R+D-G),
 \;
 d\bigl(C_L(D,G)^\perp\bigr)\geq\deg G+2-2g.
\end{equation}
These facts are standard; see \cite[Chs.~2 and 8]{stichtenoth_book}.

\subsection{CSS codes, gates, and triorthogonal codes}
For binary codes $C_2\subset C_1\subset\F_2^N$, we denote the corresponding CSS code by $\operatorname{CSS}(C_1,C_2)$. If $g^{(1)},\ldots,g^{(K)}$ represent a basis of $C_1/C_2$, the corresponding logical computational-basis states are
\[
 |\bar u\rangle
 =\frac1{\sqrt{|C_2|}}
 \sum_{c\in C_2}
 \left|c+\sum_{a=1}^K u_ag^{(a)}\right\rangle,
 \; u\in\F_2^K.
\]
Thus, it encodes $K:=\dim C_1-\dim C_2$ qubits, and we have
\[
 d_X=\min\{\wt(x):x\in C_1\setminus C_2\},\;
 d_Z=\min\{\wt(z):z\in C_2^\perp\setminus C_1^\perp\}.
\]
The parameters of the corresponding quantum code are given by $[[N,K,\Delta]]_2$, where $\Delta=\min\{d_X,d_Z\}$. For a CSS code $\cQ$, in general, we denote by $d_X(\cQ)$ and $d_Z(\cQ)$ its $X$ and $Z$ minimum distances, respectively, and $\Delta:=\min\{d_X(\cQ),d_Z(\cQ)\} $.

\begin{definition}
We say that a family of quantum codes $\{\cQ_i\}_i$ with parameters $[[N_i,K_i,\Delta_i]]_2$ is \textit{asymptotically good} if 
$$
\lim_{i\to \infty}N_i=\infty,\;\liminf_{i\to \infty }\frac{K_i}{N_i}>0 \text{ and }\liminf_{i\to \infty }\frac{\Delta_i}{N_i}>0.
$$ 
\end{definition}

Let $\mathfrak C_h^{(n)}$ denote the $h$th level of the $n$-qubit Clifford hierarchy:
$\mathfrak C_1^{(n)}$ is the $n$-qubit Pauli group $\mathcal{P}_n$, and, recursively,
\[
\mathfrak C_{h+1}^{(n)}
=
\left\{
U\in U(2^n):
UPU^\dagger\in\mathfrak C_h^{(n)}
\text{ for every }P\in\mathcal P_n
\right\}.
\]
In particular, $\mathfrak C_2^{(n)}$ is the $n$-qubit Clifford group. When the number of qubits is clear from context, we simply write $\mathfrak{C}_h$. We focus mainly on diagonal gates. For $h\geq3$, we consider $\zeta_{2^h}=e^{2\pi i/2^h}$ and
\[
 R_h:=\diag(1,\zeta_{2^h})=\begin{pmatrix}
     1&0\\
     0&\zeta_{2^h}
 \end{pmatrix}.
\]
Then $R_h\in\mathfrak C_h$, and the $T$ gate is the particular case $R_3$. If a gate is of the form $U_1\otimes \cdots \otimes U_N$, then we say it is \textit{transversal}, and if $U_1=\cdots=U_N$, we say it is \textit{strongly transversal}. Strongly transversal diagonal gates implementing a non-identity logical gate on an error-detecting CSS code are restricted to powers of gates of this form \cite[Thm.~1]{anderson_classification_transversal_gates}.

A binary matrix is \textit{triorthogonal} if the coordinate-wise product of any two distinct rows and of any three distinct rows has even weight. If its first $K$ rows have odd weight and all remaining rows have even weight, the span of all rows and the span of the even-weight rows define a CSS code encoding $K$ qubits. In \cite{bravyiTriorthogonalOriginal}, the authors prove that a diagonal Clifford correction turns physical $T^{\otimes N}$ into logical $T^{\otimes K}$. We give a general form of the correction in \cref{lem:higher-correction}. More generally, we consider the following definition.

\begin{definition}\label{def:binary-h-orthogonality}
Let $G$ be a binary matrix whose first $K$ rows have odd weight and whose remaining rows have even weight. We call $G$ \textit{$h$-orthogonal} if
\[
 \wt\left(\mathop{\bigstar}_{a\in S}g^{(a)}\right)
 \equiv0\bmod 2,
\]
for every set $S$ of row indices with $2\leq |S|\leq h$. This is the binary, unweighted specialization of the definition in \cite[Sec.~4.1]{wills_addressable_non_clifford}. For $h=3$ it recovers triorthogonality.
\end{definition}

\subsection{Diagonal phases}

For $\nu=(\nu_1,\ldots,\nu_N)\in\mathbb Z^N$ and
$x\in\F_2^N$, write
\[
 \wt_\nu(x):=\sum_{j=1}^N\nu_jx_j.
\]
For the $T$ gate, we denote $T_\nu:=\bigotimes_{j=1}^N T^{\nu_j}$. We use the following standard weight-expansion identity; see, for example, \cite[Lem. III.2]{haah_towers_divisible_codes}.

\begin{lemma}\label{lem:weight-expansion}
Let $g^{(1)},\ldots,g^{(M)}\in\F_2^N$. For every
$u=(u_1,\ldots,u_M)\in\F_2^M$ and every $\nu\in\mathbb Z^N$, we have
\begin{equation*}\label{eq:weight-expansion}
 \wt_\nu\left(\sum_{a=1}^M u_ag^{(a)}\right)
 =
 \sum_{\varnothing\neq S\subset\{1,\ldots,M\}}
 (-2)^{|S|-1}
 \left(\prod_{a\in S}u_a\right)
 \wt_\nu\left(\mathop{\bigstar}_{a\in S}g^{(a)}\right).
\end{equation*}
In particular, modulo $2^h$, only the terms with $|S|\leq h$ contribute.
\end{lemma}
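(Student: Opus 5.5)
Since $\wt_\nu$ is linear in its argument over $\mathbb Z$, the identity reduces to the case of a single coordinate. Writing $x=\sum_a u_ag^{(a)}\in\F_2^N$, we have $\wt_\nu(x)=\sum_j\nu_j x_j$. Likewise the right-hand side equals $\sum_j\nu_j\sum_{S}(-2)^{|S|-1}\prod_{a\in S}u_a\prod_{a\in S}g^{(a)}_j$, because the $j$th coordinate of $\bigstar_{a\in S}g^{(a)}$ is $\prod_{a\in S}g^{(a)}_j$, viewed as an integer in $\{0,1\}$. It therefore suffices to prove, for each fixed $j$, the scalar identity
\[
 \Bigl(\textstyle\sum_a u_ag^{(a)}_j \bmod 2\Bigr)=\sum_{\varnothing\neq S}(-2)^{|S|-1}\prod_{a\in S}y_a,
\]
where $y_a:=u_ag^{(a)}_j\in\{0,1\}$ is regarded as an integer. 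The left side is the parity of $y_1,\dots,y_M$, taken as an integer in $\{0,1\}$.

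For the scalar identity, I will use the integer formula $\mathrm{parity}(y)=\tfrac12\bigl(1-\prod_a(1-2y_a)\bigr)$, which holds because $1-2y_a=(-1)^{y_a}$ for $y_a\in\{0,1\}$. Expanding the product gives $\prod_a(1-2y_a)=\sum_{S}(-2)^{|S|}\prod_{a\in S}y_a$, where the $S=\varnothing$ term equals $1$. Hence
\[
 \tfrac12\Bigl(1-\prod_a(1-2y_a)\Bigr)=-\tfrac12\sum_{S\neq\varnothing}(-2)^{|S|}\prod_{a\in S}y_a=\sum_{S\neq\varnothing}(-2)^{|S|-1}\prod_{a\in S}y_a,
\]
which is exactly the required identity. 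Multiplying by $\nu_j$ and summing over $j$ yields the lemma.

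For the final claim, each term with $|S|\ge h+1$ carries the factor $(-2)^{|S|-1}$, which is divisible by $2^h$. These terms vanish modulo $2^h$, so only the terms with $|S|\le h$ contribute. No step is a real obstacle; the only point needing care is to keep the distinction between the $\F_2$-sum $x$ and the integer-valued coordinates when passing to $\wt_\nu$.
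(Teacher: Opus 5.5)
Your proposal is correct and follows essentially the same route as the paper: reduce to a single coordinate via $\wt_\nu(x)=\sum_j\nu_jx_j$, apply the integer parity identity with $y_a=u_ag^{(a)}_j$, then multiply by $\nu_j$ and sum. The paper states that scalar identity without proof, whereas you also prove it via $1-2y_a=(-1)^{y_a}$ and expansion of $\prod_a(1-2y_a)$, which is a valid justification.
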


\begin{proof}
For $z_1,\ldots,z_M\in\F_2$, regarded as integers, we have
\[
 (z_1+\cdots+z_M)\bmod2
 =\sum_{\varnothing\neq S\subset\{1,\ldots,M\}}
 (-2)^{|S|-1}\prod_{a\in S}z_a.
\]
Apply this identity at each coordinate with
$z_a=u_ag_j^{(a)}$, multiply by $\nu_j$, and sum over $j$.
\end{proof}

The next result is the higher-level analog of the Bravyi--Haah phase argument. Similar arguments already appear in the literature in the context of quasitransversal gates; see \cite[Prop.~9, App.~B, and Sec.~III-C]{vuillot_quantum_pin_codes} and \cite[Def.~4 and Lem.~1]{campbell_msd_synthesis}. We include the proof to specify the correction in our notation.

\begin{lemma}\label{lem:higher-correction}
Let $h\geq3$, and let $G\in\F_2^{M\times N}$ be a full-rank $h$-orthogonal binary matrix whose first $K$ rows have odd weight. Let $C_1$ be the row space of $G$, and let $C_2$ be the span of its even-weight rows. Then there is a computable diagonal correction $U\in\mathfrak C_{h-1}$ such that
$$
 UR_h^{\otimes N}\big|_{\operatorname{CSS}(C_1,C_2)}
 =\overline R_h^{\otimes K}.
$$
\end{lemma}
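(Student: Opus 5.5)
We compute the action of $R_h^{\otimes N}$ on the logical basis states and then cancel the unwanted phases with a diagonal gate built from lower-level rotations. Write the rows of $G$ as $g^{(1)},\dots,g^{(M)}$, with $g^{(1)},\dots,g^{(K)}$ odd and the rest spanning $C_2$. For $u\in\F_2^K$ and $w\in\F_2^{M-K}$ put $x(u,w)=\sum_{a\le K}u_ag^{(a)}+\sum_{b>K}w_bg^{(b)}$, so that $|\bar u\rangle\propto\sum_w|x(u,w)\rangle$. Since $R_h^{\otimes N}|x\rangle=\zeta_{2^h}^{\wt(x)}|x\rangle$, we need $\wt(x(u,w))\bmod 2^h$. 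By \cref{lem:weight-expansion} with $\nu=\one$,
\[
\wt(x(u,w))\equiv\sum_{\varnothing\ne S,\;|S|\le h}(-2)^{|S|-1}\Bigl(\prod_{a\in S}z_a\Bigr)\wt\Bigl(\mathop{\bigstar}_{a\in S}g^{(a)}\Bigr)\pmod{2^h},
\]
where $z=(u,w)$.

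The next step is to sort these terms by level. For $|S|=1$, the terms give $\sum_a z_a\wt(g^{(a)})$. For $2\le|S|\le h$, $h$-orthogonality makes $\wt(\bigstar_{a\in S}g^{(a)})$ even, so each such term is divisible by $2^{|S|}$. Only the singleton terms carry an odd factor, and only for $a\le K$. I would therefore split
\[
\wt(x(u,w))\equiv |u|+\phi(z)\pmod{2^h},
\]
where $|u|$ is the Hamming weight of $u$. Here $\phi(z)=\sum_{a\le K}u_a(\wt(g^{(a)})-1)+\sum_{b>K}w_b\wt(g^{(b)})+\sum_{|S|\ge2}(\cdots)$, and every coefficient of $\phi$, viewed as a multilinear polynomial in $z$, is even. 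Since $\zeta_{2^h}^{|u|}$ is exactly the eigenvalue of $\overline R_h^{\otimes K}$ on $|\bar u\rangle$, it suffices to produce a diagonal $U$ with $U|x(u,w)\rangle=\zeta_{2^h}^{-\phi(z)}|x(u,w)\rangle$. Because $G$ has full rank, $z$ is a well-defined linear function of $x\in C_1$, so this prescription is well defined on the support of the code space.

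To construct $U$, write $\zeta_{2^h}^{-\phi(z)}=\zeta_{2^{h-1}}^{-\phi(z)/2}$, where $\phi/2$ is an integer multilinear polynomial whose degree-$k$ coefficients are divisible by $2^{k-1}$ (a degree-$k$ term carries $(-2)^{k-1}$ times an even weight, hence is divisible by $2^k$). Choose linear functionals $\ell_a\colon\F_2^N\to\F_2$ with $\ell_a(x(u,w))=z_a$, which exist since the rows are independent. The phase $\zeta_{2^{h-1}}^{c\prod_{a\in S}\ell_a(x)}$ with $2^{|S|-1}\mid c$ is a generalized multi-controlled phase, namely $C^{|S|-1}$-$Z$-type with angle $2\pi c/2^{h-1}$, applied after a CNOT basis change that computes the parities $\ell_a$. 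I would then show that it lies in $\mathfrak C_{h-1}$. The standard criterion is that a diagonal gate $\zeta_{2^{h'}}^{f(x)}$ with $f$ an integer polynomial is in $\mathfrak C_{h'}$ once each degree-$k$ monomial has coefficient divisible by $2^{k-1}$. Diagonal gates commute, and this divisibility class is closed under sums, so $U:=\prod_S(\cdots)$ lies in $\mathfrak C_{h-1}$. CNOT conjugation preserves each level of the hierarchy, so the basis change does not affect this. $U$ is computable directly from $G$.

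The main obstacle is this membership claim: that the parity-function phases with the stated 2-adic divisibility lie in $\mathfrak C_{h-1}$, rather than only in $\mathfrak C_h$. I would prove it by induction on $h$. Conjugating $\zeta_{2^{h'}}^{f(x)}$ by $X_j$ gives $\zeta_{2^{h'}}^{f(x+e_j)-f(x)}$ up to Pauli factors. The difference $f(x+e_j)-f(x)$, after rewriting $x_j\mapsto 1-x_j$ in the integer lift, has degree-$k$ coefficients divisible by $2^{k}$, so it equals $2f'$ with $f'$ satisfying the level-$(h'-1)$ condition. The base case $h'=1$ covers Pauli phases and $h'=2$ covers diagonal Cliffords, namely $S$ and $CZ$ phases. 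One subtlety is that the $\ell_a$ are $\F_2$-parities, not integer sums; I would handle this by expressing everything in the CNOT-transformed basis where the $\ell_a$ are coordinates.
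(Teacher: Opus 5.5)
Your proposal is correct and is essentially the paper's proof. It uses the same weight-expansion phase polynomial (your $\phi/2$ is the paper's $F$), the same correction $\zeta_{2^{h-1}}^{-F(\ell_1(x),\ldots,\ell_M(x))}$ built from a left inverse of $G$, and the same divisibility criterion from \cite[Thm.~3]{gottesman_diagonal_clifford_hierarchy}.

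The one real difference is how you handle the parity arguments $\ell_a(x)$. You pass to a CNOT-transformed basis and use that Clifford conjugation preserves each level. The paper instead substitutes the integer expansions of the parities from \cref{lem:weight-expansion} and checks that the $2^{d-1}$ divisibility survives composition. Both routes work.

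If you prove the criterion yourself, you must conjugate by every $X^a$, not only by each $X_j$. This is because $\mathfrak C_{h'-1}$ is not closed under products. The same substitution $x_j\mapsto 1-x_j$ applied on all of $\supp(a)$ handles the general case.
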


\begin{proof}
Let $\epsilon_a=1$ for $a\leq K$ and $\epsilon_a=0$ otherwise, and consider
\[
 \wt(g^{(a)})=\epsilon_a+2\lambda_a,
 \;
 \wt\left(\mathop{\bigstar}_{a\in S}g^{(a)}\right)=2\lambda_S, \;
 2\leq|S|\leq h.
\]
By \cref{lem:weight-expansion}, there is a polynomial $F:\F_2^M\to \mathbb Z/2^{h-1}\mathbb Z$ such that
\begin{equation}\label{eq:higher-correction-phase}
 \wt\left(\sum_{a=1}^M u_ag^{(a)}\right)
 \equiv\sum_{a=1}^K u_a+2F(u)\bmod2^h.
\end{equation}
More precisely, we may take
\begin{equation*}\label{eq:higher-correction-polynomial}
 F(u)
 =\sum_{a=1}^M\lambda_au_a
 +\sum_{\substack{S\subset\{1,\ldots,M\}\\2\leq|S|\leq h-1}}
 (-2)^{|S|-1}\lambda_S\prod_{a\in S}u_a
 \bmod2^{h-1}.
\end{equation*}
The terms with $|S|=h$ vanish modulo $2^{h-1}$. In particular, the coefficient of every monomial in $F$ of degree $r\geq1$ is divisible by $2^{r-1}$.

Let $G$ be the matrix with rows $g^{(a)}$, and choose
$B\in\F_2^{M\times N}$ such that $BG^{\mathsf T}=I_M$. For
$x\in\F_2^N$, let $L_a(x)$ be the $a$th coordinate of $Bx^{\mathsf T}$,
and define the diagonal operator
\[
 U|x\rangle =\zeta_{2^{h-1}}^{-F(L_1(x),\ldots,L_M(x))}|x\rangle=\zeta_{2^{h}}^{-2F(L_1(x),\ldots,L_M(x))}|x\rangle.
\]

We next check that $U\in\mathfrak C_{h-1}$. By \cref{lem:weight-expansion}, the integer polynomial representing each $L_a$ has its degree-$d$ coefficients divisible by $2^{d-1}$. Consider a monomial of degree $r$ of $F$. After the substitution $F(L_1(x),\dots,L_M(x))$, take one of the terms arising from this monomial, with degrees $d_1,\dots,d_r$ from the corresponding $L_a$. Its coefficient is divisible by
\[
 2^{r-1}\prod_{a=1}^r2^{d_a-1}
 =2^{d_1+\cdots+d_r-1}.
\]
Replacing powers $x_p^b$ by $x_p$ can only decrease the degree.
Hence the coefficient of every degree-$d$ monomial in
$F(L_1(x),\ldots,L_M(x))$ is divisible by $2^{d-1}$.
The claim follows from
\cite[Thm.~3]{gottesman_diagonal_clifford_hierarchy}.

For $u\in\F_2^M$, put $x=\sum_{a=1}^M u_ag^{(a)}\in C_1$. By \cref{eq:higher-correction-phase}, we have
$$
R_h^{\otimes N}\ket{x}=\zeta_{2^{h}}^{\wt(x)}|x\rangle=\zeta_{2^{h}}^{\sum_{a=1}^K u_a+2F(u)}|x\rangle.
$$
Since $L_a(\sum_b u_b g^{(b)})=u_a$, we obtain
$$
UR_h^{\otimes N}\ket{x}=\zeta_{2^{h}}^{\sum_{a=1}^K u_a}|x\rangle.
$$
Therefore, for any logical basis state $\ket{\bar u }$, we get
\[
\begin{aligned}
UR_h^{\otimes N}|\bar u\rangle
=
\frac1{\sqrt{|C_2|}}
\sum_{c\in C_2}
UR_h^{\otimes N}
\left|c+\sum_{a=1}^K u_ag^{(a)}\right\rangle&=
\zeta_{2^h}^{u_1+\cdots+u_K}
\frac1{\sqrt{|C_2|}}
\sum_{c\in C_2}
\left|c+\sum_{a=1}^K u_ag^{(a)}\right\rangle\\
&=
\zeta_{2^h}^{u_1+\cdots+u_K}|\bar u\rangle.
\end{aligned}
\]
\end{proof}

\begin{example}
For $h=3$, the correction from the previous result is a diagonal Clifford correction, while for $h=4$ it belongs to the third level of the Clifford hierarchy.
The fact that we have to consider conditions modulo increasing powers of $2$ was already observed in \cite{sanjose_transversal_css}. 
\end{example}

\section{Asymptotically good AG codes over binary extension fields}\label{sec:outer}
We use the Galois tower described in \cite[Thm.~7.4.15, Cor.~7.4.16, and the proof of Thm.~8.4.9]{stichtenoth_book}, which is obtained from the Garcia--Stichtenoth construction \cite{garcia_stichtenoth_tower}; see also \cite[Thm.~3.6 and Rem.~3.8]{nguyen_good_css_ccz} and \cite{wills_asymptotically_good_non_clifford}. In concatenation terms, these correspond to the \textit{outer codes}. Let $q=\ell^2$, $\ell=2^r\geq8$, so that $m=2r$ and $\F_{2^m}=\F_q$ is a binary extension field of even degree. There is an explicit sequence of function fields $\mathcal{F}_i/\F_{2^m}$ with $n_i\to \infty$ rational points, genus $g_i$, effective divisors $A_i,B_i$, and increasing even integers $\alpha_i,\beta_i \geq 2$ such that
\begin{align}
 g_i&=1+\frac{n_i}{\ell-1}
 \left(1-\frac1{\alpha_i}-\frac1{\beta_i}\right),
 \label{eq:tower-genus}\\
 \alpha_i\deg A_i&=\beta_i\deg B_i=\frac{n_i}{\ell-1},
 \label{eq:tower-degrees}\\
 R_i&=(\ell\alpha_i-2)A_i+(\beta_i-2)B_i-D_i.
 \label{eq:tower-canonical}
\end{align}
Here $D_i$ is the sum of the evaluation places, $R_i$ is a suitable canonical divisor, and the supports of $A_i,B_i$ are disjoint from that of $D_i$. More precisely, in the notation of \cite[Proof of Thm.~8.4.9]{stichtenoth_book}, we have $D_i=(u_0)_0$ and $R_i=(du_0/u_0)$. Since the zeros of $u_0$ in $D_i$ are simple, the differential $du_0/u_0$ has a simple pole and residue one at every evaluation place. Thus this choice of $R_i$ satisfies \cref{eq:ag-dual}.

Let $\tau$ be a power of $2$ such that $4\leq\tau\leq\ell/2$. Set
\begin{equation}\label{eq:G-choice-general}
 a_i^{(\tau)}=\frac{\ell}{\tau}\alpha_i-1,
 \;
 b_i^{(\tau)}=\left\lfloor\frac{\beta_i-2}{\tau}\right\rfloor,
 \;
 G_i^{(\tau)}=a_i^{(\tau)}A_i+b_i^{(\tau)}B_i,
\end{equation}
and let
\[
 \cC_i^{(\tau)}:=C_L(D_i,G_i^{(\tau)}).
\]

\begin{remark}\label{r:bounds_bi}
Note that $b_i^{(\tau)}\leq \beta_i/\tau$, and, since $\beta_i$ is even and $\tau$ is a power of $2$, we also have $b_i^{(\tau)}+1\geq \beta_i/\tau$.
\end{remark}

\begin{proposition}\label{prop:outer}
For every $i$, the code $\cC_i^{(\tau)}$ satisfies
\begin{equation*}\label{eq:outer-schur-condition}
 \bigl(\cC_i^{(\tau)}\bigr)^{\star(\tau-1)}
 \subset\bigl(\cC_i^{(\tau)}\bigr)^\perp, \text{ or, equivalently, }
 \sum_{j=1}^{n_i}c_j^{(1)}\cdots c_j^{(\tau)}=0
\end{equation*}
for all $c^{(1)},\ldots,c^{(\tau)}\in\cC_i^{(\tau)}$. Moreover,
\begin{align}
 \frac{\dim_{\F_{2^m}}\cC_i^{(\tau)}}{n_i}
 &\geq\rho_\ell^{(\tau)}
 :=\frac{\ell-(\tau-1)}{\tau(\ell-1)},
 \notag \\
 \frac{d(\cC_i^{(\tau)})}{n_i}
 &\geq A_\ell^{(\tau)}
 :=\frac{(\tau-1)\ell-(\tau+1)}{\tau(\ell-1)},
 \notag\\
 \frac{d((\cC_i^{(\tau)})^\perp)}{n_i}
 &\geq B_\ell^{(\tau)}
 :=\frac{\ell-(2\tau-1)}{\tau(\ell-1)}.
 \label{eq:outer-dual-general}
\end{align}
\end{proposition}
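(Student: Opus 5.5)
The plan is to obtain the Schur condition from a divisor inequality, using \cref{eq:star-product} and \cref{eq:ag-dual}. The three parameter bounds will then follow from \cref{eq:ag-parameters} and \cref{eq:ag-dual}, with the tower data \cref{eq:tower-genus}--\cref{eq:tower-canonical} substituted in and the rounding in $b_i^{(\tau)}$ controlled by \cref{r:bounds_bi}. Throughout I drop the indices $i,\tau$ and write $N_0:=n/(\ell-1)$, so that $\deg A=N_0/\alpha$ and $\deg B=N_0/\beta$.

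\emph{Orthogonality.} The two formulations are equivalent: $\sum_j c^{(1)}_j\cdots c^{(\tau)}_j$ is the Euclidean inner product of $c^{(\tau)}$ with the Schur product $c^{(1)}\star\cdots\star c^{(\tau-1)}$, and these products span $\cC^{\star(\tau-1)}$. By \cref{eq:star-product}, $\cC^{\star(\tau-1)}\subset C_L(D,(\tau-1)G)$. By \cref{eq:ag-dual}, $\cC^\perp=C_L(D,R+D-G)$. It therefore suffices to show $(\tau-1)G\le R+D-G$, i.e. $\tau G\le R+D$. Using \cref{eq:tower-canonical}, this becomes the pair of coefficientwise inequalities
\[
\tau a=\ell\alpha-\tau\le \ell\alpha-2
\quad\text{and}\quad
\tau b=\tau\lfloor(\beta-2)/\tau\rfloor\le\beta-2 .
\]
The first holds since $\tau\ge4$, and the second is the floor. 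Since $A,B\ge0$, the divisor inequality gives $L((\tau-1)G)\subset L(R+D-G)$, which yields the inclusion.

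\emph{Degree estimates.} From \cref{r:bounds_bi}, $\beta/\tau-1\le b\le\beta/\tau$. This gives
\[
N_0\Big(\tfrac{\ell+1}{\tau}-\tfrac1\alpha-\tfrac1\beta\Big)\le\deg G=\Big(\tfrac{\ell}{\tau}\alpha-1\Big)\tfrac{N_0}{\alpha}+b\tfrac{N_0}{\beta}\le N_0\,\tfrac{\ell+1}{\tau}-\tfrac{N_0}{\alpha}.
\]
In particular $\deg G<n$, because $(\ell+1)/\tau<\ell-1$ when $\tau\ge4$ and $\ell\ge8$.

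\emph{Dimension and distances.} Since $\deg G<n$, evaluation is injective on $L(G)$. Riemann's inequality then gives $\dim\cC\ge\deg G+1-g$. With \cref{eq:tower-genus}, the terms $1/\alpha+1/\beta$ cancel and one gets
\[
\dim\cC\ge N_0\Big(\tfrac{\ell+1}{\tau}-1\Big)=n\rho_\ell^{(\tau)} .
\]
This step uses only Riemann's inequality, so the lower hypothesis $2g-1\le\deg G$ of \cref{eq:ag-parameters} is not needed. For the minimum distance, $d(\cC)\ge n-\deg G$ together with the upper degree bound gives
\[
d(\cC)\ge n-N_0\tfrac{\ell+1}{\tau}=nA_\ell^{(\tau)} .
\]
For the dual, \cref{eq:ag-dual} gives $d(\cC^\perp)\ge\deg G+2-2g$. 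Inserting the lower degree bound and the genus formula yields
\[
d(\cC^\perp)\ge N_0\Big(\tfrac{\ell+1}{\tau}-2+\tfrac1\alpha+\tfrac1\beta\Big)\ge N_0\,\tfrac{\ell+1-2\tau}{\tau}=nB_\ell^{(\tau)} .
\]

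\emph{Main obstacle.} The delicate points are bookkeeping rather than conceptual. One is choosing the correct direction of the floor bound on $b$ in each estimate: the lower bound on $b$ is used for the dimension and the dual distance, the upper bound for the minimum distance. The other is checking that the bounds of \cref{eq:ag-parameters} and \cref{eq:ag-dual} apply without extra hypotheses. The dual-distance bound is the Goppa bound for $C_\Omega(D,G)$ and holds whenever it is positive. So I would verify only $\deg G<n$, as done above.
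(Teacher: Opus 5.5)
Your proposal is correct and follows the paper's proof: it uses the same divisor inequality $\tau G\le R+D$ for the Schur condition, and the same substitution of the tower data together with the floor bounds from \cref{r:bounds_bi} for the three parameter estimates. The only deviation is in the dimension bound, where you use Riemann's inequality instead of first checking $\deg G>2g-2$ to get the exact formula $\deg G+1-g$; this is a harmless shortcut, since only the lower bound is used later in the paper.
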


\begin{proof}
The choices from \cref{eq:G-choice-general} give
\[
 \tau G_i^{(\tau)}
 \leq(\ell\alpha_i-\tau)A_i+(\beta_i-2)B_i
 \leq (\ell\alpha_i-2)A_i+(\beta_i-2)B_i = R_i+D_i.
\]
By \cref{eq:star-product,eq:ag-dual},
\[
 \bigl(\cC_i^{(\tau)}\bigr)^{\star(\tau-1)}
 \subset C_L(D_i,(\tau-1)G_i^{(\tau)})
 \subset C_L(D_i,R_i+D_i-G_i^{(\tau)})
 =\bigl(\cC_i^{(\tau)}\bigr)^\perp.
\]
Using \cref{r:bounds_bi,eq:tower-degrees}, we get
\[
 \frac{\deg G_i^{(\tau)}}{n_i}
 =\frac1{\ell-1}\left(
 \frac\ell\tau-\frac1{\alpha_i}
 +\frac{b_i^{(\tau)}}{\beta_i}\right)
 \leq\frac{\ell+1}{\tau(\ell-1)}\leq \frac{1}{4}\left(1+\frac{2}{7}\right)<1,
\]
where we have used the fact that $4\leq \tau$ and $\ell\geq 2 \tau$. This implies that the evaluation map defining $\cC_i^{(\tau)}$ is injective. On the other hand, using \cref{eq:tower-genus,r:bounds_bi}, we have
\begin{align*}
 \frac{\deg G_i^{(\tau)}-(2g_i-2)}{n_i}
 &=\frac1{\ell-1}\left(
 \frac\ell\tau-2+\frac1{\alpha_i}
 +\frac{b_i^{(\tau)}+2}{\beta_i}\right)\\
 &\geq\frac{\ell-(2\tau-1)}{\tau(\ell-1)}>0.
\end{align*}
Thus $\deg G_i^{(\tau)}>2g_i-2$, and by \cref{eq:ag-parameters} we have
\[
 \dim_{\F_{2^m}}\cC_i^{(\tau)}
 =\deg G_i^{(\tau)}+1-g_i.
\]
Substituting \cref{eq:tower-genus,eq:tower-degrees} and using \cref{r:bounds_bi}, we obtain
\[
 \frac{\dim_{\F_{2^m}}\cC_i^{(\tau)}}{n_i}
 =\frac1{\ell-1}\left(
 \frac\ell\tau-1+\frac{b_i^{(\tau)}+1}{\beta_i}\right)
 \geq\frac{\ell-(\tau-1)}{\tau(\ell-1)}.
\]
The designed distance from \cref{eq:ag-parameters} gives
\begin{align*}
 \frac{d(\cC_i^{(\tau)})}{n_i}
 \geq1-\frac{\deg G_i^{(\tau)}}{n_i}\geq\frac{(\tau-1)\ell-(\tau+1)}{\tau(\ell-1)}.
\end{align*}
Finally, from \cref{eq:ag-dual} and the computations above, we get
\begin{align*}
 \frac{d((\cC_i^{(\tau)})^\perp)}{n_i}
 &\geq \frac{\deg G_i^{(\tau)}-(2g_i-2)}{n_i}\geq \frac{\ell-(2\tau-1)}{\tau(\ell-1)},
\end{align*}
which proves \cref{eq:outer-dual-general}.
\end{proof}

\section{Alphabet reduction}\label{ss:map}
In this section, we introduce the maps we use to reduce the alphabet from $\F_{2^m}$ to $\F_2$, which, in concatenation terms, would be related to the \textit{inner codes}. This would be the analog of the multiplication-friendly embeddings considered in \cite{guruswami_good_transversal,nguyen_good_css_ccz}. For a motivating example, for $h=3$, suppose that an additive map
$\phi:\F_{2^m}\to\F_2^s$ satisfies
\[
 \sum_{p=1}^s\phi_p(a)\phi_p(b)\phi_p(c)=\Tr(a^2bc+ab^2c+abc^2)
 =\Tr\bigl(abc(a+b+c)\bigr).
\]
There are three binary factors, while the polynomial inside the trace is homogeneous of degree four. Thus, if $\cC\subset\F_{2^m}^n$ is a linear code satisfying $\cC^{\star3}\subset\cC^\perp$, then $\sum_{j=1}^{n}c_j^{(1)} c_j^{(2)}  c_j^{(3)}c_j^{(4)}=0$ for any $c^{(1)},c^{(2)},c^{(3)},c^{(4)}\in \cC$, and the block-wise image
\(
 \Phi(\cC)
 =\{(\phi(c_1),\ldots,\phi(c_n)):c\in\cC\}
\)
satisfies $\Phi(\cC)^{\star2}\subset\Phi(\cC)^\perp$.

\begin{remark}
Note that if we take $c=a$, we get 
\(
 \sum_{p=1}^s\phi_p(a)\phi_p(b)\phi_p(a)=\sum_{p=1}^s\phi_p(a)\phi_p(b)\), and 
\(
 \Tr\bigl(aba(a+b+a)\bigr)=\Tr\bigl((a^2b^2)\bigr)=\Tr(ab).
\)
Thus, both expressions reduce to products of $2$ symbols. Moreover, if we had taken $c=b$, we would have obtained the same result. If $a=b=c$, we get \(
 \sum_{p=1}^s\phi_p(a)\phi_p(a)\phi_p(a)=\sum_{p=1}^s\phi_p(a)\) and \(
 \Tr\bigl(aaa(a+a+a)\bigr)=\Tr(a).
\)
This behavior motivates the definition in \cref{eq:canonical-overlap-product}: the trace expressions are compatible with the reduction of repeated factors in binary overlaps. A natural alternative such as $\Tr(abc)$ does not, in general, have this compatibility. Indeed, binary overlaps necessarily satisfy
\(
 \sum_{p=1}^s\phi_p(a)^2\phi_p(b)
 =\sum_{p=1}^s\phi_p(a)\phi_p(b)^2,
\)
whereas $\Tr(a^2b)$ and $\Tr(ab^2)$ need not agree.
\end{remark}

The following result lets us transfer the required Schur-product condition to the binary image. We state it in a form that applies to all the constructions considered below.

\begin{lemma}\label{lem:schur-transfer}
Let $r\geq2$ and $\tau\geq2$, and let $\phi:\F_{2^m}\to\F_2^s$ be an additive map. Suppose that there is a homogeneous polynomial $P\in\F_{2^m}[X_1,\ldots,X_r]$ of degree $\tau$ such that
\begin{equation}\label{eq:general-map-identity}
 \sum_{p=1}^s\prod_{a=1}^r\phi_p(x_a)
 =\Tr\bigl(P(x_1,\ldots,x_r)\bigr)
\end{equation}
for all $x_1,\ldots,x_r\in\F_{2^m}$. Let $\cC\subset\F_{2^m}^n$ be a linear code satisfying $\cC^{\star(\tau-1)}\subset\cC^\perp$. Then
\(
 \Phi(\cC)
 =\{(\phi(c_1),\ldots,\phi(c_n)):c\in\cC\}
\)
satisfies
\[
 \Phi(\cC)^{\star(r-1)}\subset\Phi(\cC)^\perp.
\]
\end{lemma}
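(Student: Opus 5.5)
Since $\Phi(\cC)$ is an $\F_2$-linear space (because $\phi$ is additive), the inclusion $\Phi(\cC)^{\star(r-1)}\subset\Phi(\cC)^\perp$ is equivalent to the vanishing of every $r$-fold overlap of elements of $\Phi(\cC)$. Indeed, $\Phi(\cC)^{\star(r-1)}$ is spanned by $(r-1)$-fold products, and orthogonality of such a product to a further element of $\Phi(\cC)$ is exactly an $r$-fold overlap being even. So we fix $c^{(1)},\ldots,c^{(r)}\in\cC$ and aim to show
\[
 \sum_{j=1}^n\sum_{p=1}^s\prod_{a=1}^r\phi_p\bigl(c^{(a)}_j\bigr)=0\in\F_2 .
\]
First apply \cref{eq:general-map-identity} blockwise at each coordinate $j$. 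The left side then becomes $\sum_j\Tr\bigl(P(c^{(1)}_j,\ldots,c^{(r)}_j)\bigr)$. By additivity of the trace, this equals $\Tr\bigl(\sum_j P(c^{(1)}_j,\ldots,c^{(r)}_j)\bigr)$.

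Next, write $P=\sum_{e}\lambda_e X_1^{e_1}\cdots X_r^{e_r}$, with the sum over $e_1+\cdots+e_r=\tau$. It suffices to show that each monomial contributes zero:
\[
 \sum_j \bigl(c^{(1)}_j\bigr)^{e_1}\cdots\bigl(c^{(r)}_j\bigr)^{e_r}=0 .
\]
Each such summand is a product of exactly $\tau$ coordinates, taken from the vectors $c^{(a)}$ with multiplicity $e_a$. So the sum is $\sum_j w^{(1)}_j\cdots w^{(\tau)}_j$, where $w^{(1)},\ldots,w^{(\tau)}$ lists $c^{(1)}$ repeated $e_1$ times, $c^{(2)}$ repeated $e_2$ times, and so on. All the $w^{(b)}$ lie in $\cC$.

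The hypothesis $\cC^{\star(\tau-1)}\subset\cC^\perp$ says precisely that $\sum_j w^{(1)}_j\cdots w^{(\tau)}_j=0$ for any $\tau$ codewords, so each monomial sum vanishes. Multiplying by $\lambda_e$ and summing over $e$ gives $\sum_jP(\cdots)=0$, hence its trace is $0$, which is what we wanted.

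The only point requiring care is that repetition of codewords is allowed in the Schur-orthogonality condition. This holds because $\cC^{\star(\tau-1)}$ contains products with repeated factors, which matches the equivalent formulation in \cref{prop:outer}. Homogeneity of degree exactly $\tau$ is what makes every monomial a $\tau$-fold product. Note that no Frobenius twisting is needed, because the exponents enter through genuine repetition rather than powers like $x^2$ being treated semilinearly. The remaining step, reducing $(r-1)$-fold Schur powers to $r$-fold overlaps by bilinearity, is routine.
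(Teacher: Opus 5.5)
Your proof is correct and follows the paper's argument essentially verbatim: you apply the identity blockwise and pull the trace outside the sum. You then note that each monomial of the degree-$\tau$ homogeneous $P$ evaluates to a Schur product of exactly $\tau$ codewords (repetitions allowed), so $\cC^{\star(\tau-1)}\subset\cC^\perp$ forces the coordinate sum to vanish; the only addition is your explicit reduction of $\Phi(\cC)^{\star(r-1)}\subset\Phi(\cC)^\perp$ to $r$-fold overlaps, which the paper leaves implicit.
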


\begin{proof}
For $c^{(1)},\ldots,c^{(r)}\in\cC$, let $z=P(c^{(1)},\ldots,c^{(r)})$, evaluated coordinate-wise. Every monomial of $P$ has total degree $\tau$, so its evaluation is a scalar multiple of a Schur product of exactly $\tau$ codewords of $\cC$, allowing repetitions. Thus $z\in\cC^{\star\tau}$. The assumption $\cC^{\star(\tau-1)}\subset\cC^\perp$ implies that $\sum_{i=1}^n z_i=0$. Consequently,
\[
 \sum_{i=1}^n\sum_{p=1}^s
 \prod_{a=1}^r\phi_p(c_i^{(a)})
 =
 \Tr\!\left(\sum_{i=1}^n z_i\right)
 =0,
\]
which proves the result.
\end{proof}

\begin{remark}\label{rem:inhomogeneous-schur-transfer}
In \cref{lem:schur-transfer}, the condition that $P$ is homogeneous of degree $\tau$ may be replaced by $\deg P\leq\tau$ if one additionally assumes that $\one\in\cC$. Indeed, the evaluation of a monomial of total degree $d\leq\tau$ belongs to $\cC^{\star\tau}$ after inserting $\tau-d$ copies of $\one$ into the Schur product. In particular, the AG codes $\cC_i^{(\tau)}$ from \cref{prop:outer} contain $\one$.
\end{remark}

We now show that maps satisfying the hypotheses of \cref{lem:schur-transfer} exist for every number of factors. For $j\geq1$, define
\begin{equation}\label{eq:canonical-overlap-product}
 H_j(X_1,\ldots,X_j)
 =\prod_{\substack{S\subset\{1,\ldots,j\}\\ |S|\text{ odd}}}
 \left(\sum_{a\in S}X_a\right).
\end{equation}
This polynomial is symmetric, additive in each variable in characteristic two, and homogeneous of degree $2^{j-1}$. 
\begin{example}
For $j=2,3,4$ we have
\begin{align*}
 H_2(a,b)&=ab,\\
 H_3(a,b,c)&=abc(a+b+c),\\
 H_4(a,b,c,d)&=abcd(a+b+c)(a+b+d)(a+c+d)(b+c+d).
\end{align*}
\end{example}

These properties and the following proposition are proved directly in \cref{app:general-map}.

\begin{proposition}\label{prop:higher-map}
Let $j\geq2$ and $m\geq1$. There is an additive injective map
$\phi:\F_{2^m}\longrightarrow\F_2^s$, with
$s\leq\sum_{a=1}^{\min\{j,m\}}\binom ma$, such that
\begin{equation}\label{eq:higher-map-parity}
 \wt(\phi(x))=\Tr(x)\bmod2.
\end{equation}
Moreover, for every $x_1,\ldots,x_j\in\F_{2^m}$,
\begin{equation}\label{eq:higher-map-identity}
 \sum_{p=1}^s\prod_{a=1}^j\phi_p(x_a)
 =\Tr\bigl(H_j(x_1,\ldots,x_j)\bigr).
\end{equation}
In particular, the identity and homogeneity condition in \cref{lem:schur-transfer}
hold with $r=j$ and $\tau=2^{j-1}$.
\end{proposition}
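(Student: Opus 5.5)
The plan is to build $\phi$ out of linear functionals indexed by small subsets of an $\F_2$-basis, choose which of them to keep by Möbius inversion, and check afterwards that \cref{eq:higher-map-parity} and injectivity come for free. The identity \cref{eq:higher-map-identity} is read in $\F_2$. Since $\phi$ is additive, each coordinate $\phi_p$ is an $\F_2$-linear functional $\ell_p$ on $\F_{2^m}$, so we must write the $j$-linear form $\Tr(H_j(x_1,\ldots,x_j))$ as a sum $\sum_p\ell_p(x_1)\cdots\ell_p(x_j)$. Here we use that $H_j$ is additive in each variable, as noted after \cref{eq:canonical-overlap-product}.

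Fix an $\F_2$-basis $e_1,\ldots,e_m$ of $\F_{2^m}$ with coordinate functionals $y_1,\ldots,y_m$. For each nonempty $A\subset\{1,\ldots,m\}$ with $|A|\leq\min\{j,m\}$, put $\ell_A=\sum_{i\in A}y_i$, and let $\phi$ consist of those $\ell_A$ with $c_A=1$, for coefficients $c_A\in\F_2$ still to be chosen. This gives $s\leq\sum_{a=1}^{\min\{j,m\}}\binom ma$.

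Expanding by multilinearity,
\[
\sum_A c_A\prod_{a=1}^j\ell_A(x_a)=\sum_{\sigma:[j]\to[m]}\Bigl(\sum_{A\supseteq\operatorname{im}\sigma}c_A\Bigr)\prod_{a=1}^j y_{\sigma(a)}(x_a),
\]
while
\[
\Tr H_j(x_1,\ldots,x_j)=\sum_{\sigma}\Tr H_j(e_{\sigma(1)},\ldots,e_{\sigma(j)})\prod_a y_{\sigma(a)}(x_a).
\]
Two facts then finish the construction.

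\textbf{First: $\Tr H_j(e_{\sigma(1)},\ldots,e_{\sigma(j)})$ depends only on the set $B=\operatorname{im}\sigma$.} This is the key step, and the main obstacle. The claim to prove is the reduction identity
\[
H_j(x,x,x_3,\ldots,x_j)=H_{j-1}(x,x_3,\ldots,x_j)^2 .
\]
To see it, split the odd subsets $S\subset[j]$ according to $S\cap\{1,2\}$:
\begin{itemize}
\item For $S=\{1,2\}\cup S'$ and for $S=S'$ (with $S'\subset\{3,\ldots,j\}$ odd), the two sums coincide once $x_1=x_2$.
\item For $S=\{1\}\cup S'$ and $S=\{2\}\cup S'$ (with $S'$ even), the two sums also coincide.
\end{itemize}
The pairs together are exactly the odd subsets of $\{1,3,\ldots,j\}$, each counted twice, which gives the square. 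Since $\Tr(z^2)=\Tr(z)$, symmetry of $H_j$ and induction show that $\Tr H_j$ at any tuple equals $\Tr H_{|B|}$ at the distinct entries of $B$. Call this value $d_B$.

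\textbf{Second: the coefficients $c_A$ can be solved for.} The system $\sum_{A\supseteq B,\,|A|\leq j}c_A=d_B$, over nonempty $B$ with $|B|\leq\min\{j,m\}$, is unitriangular with respect to reverse inclusion. It is therefore solvable by downward induction on $|B|$, i.e.\ Möbius inversion. This yields \cref{eq:higher-map-identity}.

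Parity \cref{eq:higher-map-parity} follows by setting all $x_a=x$. Over $\F_2$ we have $\ell_p(x)^j=\ell_p(x)$, so the left side becomes $\wt(\phi(x))\bmod 2$. On the right, iterating the reduction identity gives $H_j(x,\ldots,x)=H_1(x)^{2^{j-1}}=x^{2^{j-1}}$, whose trace is $\Tr(x)$.

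For injectivity, suppose $\phi(x)=0$. Then the left side of \cref{eq:higher-map-identity} vanishes whenever $x_1=x$. Taking $x_2=\cdots=x_j=x'$, the reduction identity gives $H_j(x,x',\ldots,x')=H_2(x,x')^{2^{j-2}}$. Hence $\Tr(xx')=0$ for all $x'$, and nondegeneracy of the trace form forces $x=0$.

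The final claim follows because $H_j$ is homogeneous of degree $2^{j-1}$. Hence \cref{lem:schur-transfer} applies with $r=j$ and $\tau=2^{j-1}$.
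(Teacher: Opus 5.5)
Your proposal is correct and follows the paper's proof essentially step by step. Both use the subset functionals $\ell_U$, the unitriangular system $\sum_{U\supseteq I}c_U=t_I$, and the collision identity to make $\Tr H_j$ on basis tuples depend only on the index set, with parity and injectivity obtained as consequences; you extract these from the $j$-fold identity by repeating arguments, while the paper reads them off the $r=1,2$ cases of its overlap computation. The only other difference is how the collision identity is proved: you pair odd subsets according to $S\cap\{1,2\}$, whereas the paper factors $H_j=H_{j-1}\,L_W(X_j)$ through a subspace polynomial with $L_W(X_{j-1})=H_{j-1}$, and that factorization also proves the additivity of $H_j$ which you take as given from the text.
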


The length bound in \cref{prop:higher-map} is not intended to be optimal. The explicit maps in \cref{sec:families} satisfy the required identities with much smaller lengths, and give better parameters for the $T$ gate.

\begin{remark}\label{rem:four-cases}
\Cref{prop:outer} is independent of the particular binary map used afterward. To compare the different cases, let $j$ denote the number of binary factors, so that the desired binary condition is $\mathcal D^{\star(j-1)}\subset\mathcal D^\perp$, and let $\tau$ be the corresponding parameter in \cref{lem:schur-transfer,prop:outer}. The values provided by \cref{prop:higher-map} for the two constructions below, together with their $h=3$ specializations, are
\[
\begin{array}{c|c|c}
 \text{construction}&j&\tau\\ \hline
 T\text{ with a correction}&3&4\\
 T\text{ without a correction}&4&8\\
 R_h\text{ with a correction}&h&2^{h-1}\\
 R_h\text{ without a correction}
 &2^{h-1}&2^{2^{h-1}-1}
\end{array}
\]
In each row, the outer condition is $\cC^{\star(\tau-1)}\subset\cC^\perp$. The first two rows are the cases $h=3$ used in \cref{sec:families}.
\end{remark}

\section{Construction with a lower-level correction}\label{s:concatenate}
In this section, we describe the puncturing and binary concatenation techniques we use to derive the binary codes for the CSS construction. Related constructions based on puncturing and shortening classical codes appear in \cite{guruswami_good_transversal,nguyen_good_css_ccz,wills_asymptotically_good_non_clifford}. We allow a correction from the preceding level of the Clifford hierarchy, as in \cref{lem:higher-correction}.

\subsection{Bounds for the distances}

We first separate the distance argument, which is the same for the two constructions in this paper. We start by recalling the notion of an adjoint map. For every additive map $\phi:\F_{2^m}\to\F_2^s$ and every $y\in\F_2^s$, the map $a\mapsto y\cdot\phi(a)$ is an $\F_2$-linear functional on $\F_{2^m}$. By the non-degeneracy of the trace pairing
\cite[Thm.~2.24]{lidl_niederreiter_finite_fields}, there is a unique
$\rho_\phi(y)\in\F_{2^m}$ such that
\begin{equation}\label{eq:adjoint}
 y\cdot\phi(a)=\Tr\bigl(\rho_\phi(y)a\bigr)
\end{equation}
for every $a\in\F_{2^m}$. Uniqueness also shows that $\rho_\phi:\F_2^s\to\F_{2^m}$ is $\F_2$-linear. Indeed, we have $(y+y')\cdot \phi(a)=\operatorname{Tr}\bigl(\rho_\phi(y)a\bigr)+\operatorname{Tr}\bigl(\rho_\phi(y')a\bigr)=\operatorname{Tr}\bigl(\left(\rho_\phi(y)+\rho_\phi(y')\right)a\bigr)$, and thus $\rho_\phi(y)+\rho_\phi(y')=\rho_\phi(y+y')$ by uniqueness. The map $\rho_\phi$ will be key in the next lemma to bound the $Z$-distance of the quantum codes.

\begin{lemma}\label{lem:puncturing-distances}
Let $\F_{2^m}/\F_2$ be a finite extension, let $\cC\subset\F_{2^m}^n$ be an $\F_{2^m}$-linear code, and let $\phi:\F_{2^m}\to\F_2^s$ be additive and injective. Let
\[
 d_{\rm in}=\min_{a\neq0}\wt(\phi(a)).
\]
Choose $P$ coordinates in an information set of $\cC$, consider $L=n-P$, and let $\cA_1,\cA_2\subset\F_{2^m}^L$ be respectively the puncture and the shortening of $\cC$ on the selected coordinates. Let
\[
 \Phi:\F_{2^m}^L\longrightarrow\F_2^{sL}, \; \Phi(a_1,\dots,a_L)=(\phi(a_1),\dots,\phi(a_L)). 
\]
Suppose that $C_2\subset C_1\subset\F_2^{sL}$ satisfy
\[
 C_2=\Phi(\cA_2),
 \;
 C_1\subset\Phi(\cA_1).
\]
Then
\begin{align}
 d_X\bigl(\operatorname{CSS}(C_1,C_2)\bigr)
 &\geq d_{\rm in}\bigl(d(\cC)-P\bigr),
 \label{eq:puncturing-distance-x}\\
 d_Z\bigl(\operatorname{CSS}(C_1,C_2)\bigr)
 &\geq d(\cC^\perp)-P.
 \label{eq:puncturing-distance-z}
\end{align}
\end{lemma}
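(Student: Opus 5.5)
The two bounds are proved separately. For the $X$-distance, take $x\in C_1\setminus C_2$ and write $x=\Phi(a)$ with $a\in\cA_1$. The vector $a$ is the puncture of a unique codeword $c\in\cC$; uniqueness holds because the $P$ deleted coordinates lie in an information set, so puncturing there is injective. We need $c\neq0$ and $c$ nonzero somewhere off the selected coordinates. If $a=0$ then $x=0\in C_2$, a contradiction, so $a\neq0$. Hence $c$ has at least $d(\cC)-P$ nonzero entries among the $L$ remaining coordinates, i.e. $\wt(a)\geq d(\cC)-P$. Since $\phi$ is injective and additive, each nonzero $a_k$ contributes $\wt(\phi(a_k))\geq d_{\rm in}$. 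This gives $\wt(x)\geq d_{\rm in}(d(\cC)-P)$. Note that this bound uses only $x\in\Phi(\cA_1)$ and $x\neq0$, which is stronger than needed.

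For the $Z$-distance, take $z\in C_2^\perp\setminus C_1^\perp$ and split it into blocks $z=(y_1,\dots,y_L)$ with $y_k\in\F_2^s$. Define $w=(\rho_\phi(y_1),\dots,\rho_\phi(y_L))\in\F_{2^m}^L$, using the adjoint \cref{eq:adjoint}. Then for all $a\in\F_{2^m}^L$,
\[
 z\cdot\Phi(a)=\sum_k\Tr(\rho_\phi(y_k)a_k)=\Tr(w\cdot a).
\]
Also $\wt(w)\leq\wt(z)$, since $y_k=0$ forces $\rho_\phi(y_k)=0$.

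Next we show that $w\in\cA_2^\perp$. Since $z\perp\Phi(\cA_2)$, we have $\Tr(w\cdot a)=0$ for all $a\in\cA_2$. Because $\cA_2$ is $\F_{2^m}$-linear, replacing $a$ by $\lambda a$ gives $\Tr(\lambda\, w\cdot a)=0$ for every $\lambda\in\F_{2^m}$. Nondegeneracy of the trace then gives $w\cdot a=0$. The dual of a shortened code is the puncture of the dual, so $w$ extends to some $\tilde w\in\cC^\perp$ by filling in values on the $P$ deleted coordinates.

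It remains to show $\tilde w\neq0$, which amounts to $w\neq0$. If $w=0$, then $z\cdot\Phi(a)=\Tr(0)=0$ for all $a\in\cA_1$. Since $C_1\subset\Phi(\cA_1)$, this gives $z\in C_1^\perp$, a contradiction. Therefore $\wt(z)\geq\wt(w)\geq\wt(\tilde w)-P\geq d(\cC^\perp)-P$.

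The main obstacle is the duality step: moving from the $\F_2$-orthogonality $z\perp\Phi(\cA_2)$ to genuine $\F_{2^m}$-orthogonality $w\perp\cA_2$. This is handled by the scalar-multiplication trick above, combined with the standard identity $(\text{shortening})^\perp=\text{puncture of }\cC^\perp$ on the same coordinates. The information-set hypothesis is needed only for the $X$ side, where it makes the lift from $\cA_1$ to $\cC$ unique; the $Z$ side does not use it.
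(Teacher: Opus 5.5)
Your proof is correct and follows essentially the same route as the paper: for $d_X$ you lift to a nonzero codeword of $\cC$ and count nonzero blocks, and for $d_Z$ you use the block-wise trace adjoint with the $\lambda$-scaling trick, the identity $(\text{shortening})^\perp=\text{puncture of }\cC^\perp$, and the fact that $w\neq0$ because $C_1\subset\Phi(\cA_1)$. One side remark is wrong but harmless. Lying in an information set does not make puncturing injective; for example, puncturing $\cC=\F_{2^m}^n$ at one coordinate is not injective. Injectivity needs that no nonzero codeword is supported on the deleted coordinates, which holds for instance when $P<d(\cC)$. Your $X$-bound only uses the existence of a nonzero lift, so it survives, and in fact neither distance bound uses the information-set hypothesis at all.
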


\begin{proof}
Every word of $C_1\setminus C_2$ is $\Phi(a)$ for some $a\in\cA_1\setminus\cA_2$. Puncturing $P$ coordinates decreases distance by at most $P$, and every nonzero block $\phi(a_j)$ has weight at least $d_{\rm in}$. This proves \cref{eq:puncturing-distance-x}.

For the bound for $d_Z$, define the block-wise adjoint map
\[
 \rho_{\phi,L}:\F_2^{sL}\longrightarrow\F_{2^m}^L,
 \;
 \rho_{\phi,L}(y)
 =
 \bigl(\rho_\phi(y^{(1)}),\ldots,\rho_\phi(y^{(L)})\bigr),
\]
where $y=(y^{(1)},\ldots,y^{(L)})$ and $y^{(r)}\in\F_2^s$. Consider
\[
 y\in C_2^\perp\setminus C_1^\perp.
\]
We first show that $\rho_{\phi,L}(y)\in\cA_2^\perp$. Since $y\in C_2^\perp$ and $C_2=\Phi(\cA_2)$, for every $a=(a_1,\ldots,a_L)\in\cA_2$ we have
\begin{equation}\label{eq:adjoint-prod-general}
 0
 =y\cdot\Phi(a)
 =\sum_{r=1}^L y^{(r)}\cdot\phi(a_r)
 =\Tr\left(\sum_{r=1}^L\rho_\phi(y^{(r)})a_r\right).
\end{equation}
The code $\cA_2$ is $\F_{2^m}$-linear. Hence, for every
$\lambda\in\F_{2^m}$, we may replace $a$ by $\lambda a$ and obtain
\[
 0=\Tr\left(\lambda\sum_{r=1}^L\rho_\phi(y^{(r)})a_r\right).
\]
By the non-degeneracy of the trace, this implies
\[
 \sum_{r=1}^L\rho_\phi(y^{(r)})a_r=0
\]
for every $a\in\cA_2$. Therefore, $\rho_{\phi,L}(y)\in\cA_2^\perp$.

We next show that $\rho_{\phi,L}(y)\notin\cA_1^\perp$. Suppose, to the contrary, that $\rho_{\phi,L}(y)\in\cA_1^\perp$. Then, as in \cref{eq:adjoint-prod-general}, we would get that $y$ is orthogonal to $\Phi(\cA_1)$. Since $C_1\subset\Phi(\cA_1)$, this would imply $y\in C_1^\perp$, a contradiction with the choice of $y$.

Consequently, $\rho_{\phi,L}(y)\in\cA_2^\perp\setminus\cA_1^\perp$. If the $r$th coordinate of $\rho_{\phi,L}(y)$ is nonzero, then
$y^{(r)}\neq0$. Hence
\[
 \wt(y)
 \geq\bigl|\{r:y^{(r)}\neq0\}\bigr|
 \geq\wt\bigl(\rho_{\phi,L}(y)\bigr).
\]
The dual of a shortening is a puncturing of the dual, and puncturing at $P$ coordinates can decrease distance by at most $P$. Thus $d(\cA_2^\perp)\geq d(\cC^\perp)-P$.
It follows that
\[
 \wt(y)
 \geq\wt\bigl(\rho_{\phi,L}(y)\bigr)
 \geq d(\cA_2^\perp)
 \geq d(\cC^\perp)-P,
\]
which proves \cref{eq:puncturing-distance-z}.
\end{proof}

\subsection{The puncturing construction}
In this section, we give the puncturing construction used to obtain binary quantum codes. This is the construction that, instantiated with the codes from \cref{sec:outer}, provides asymptotically good quantum codes. We start with a technical definition. 
\begin{definition}\label{def:frame}
Let $h\geq3$ and let $\phi:\F_{2^m}\to\F_2^s$ be additive and injective.
A \emph{compatible set} for $\phi$ and $h$ is a tuple
$u_1,\ldots,u_t\in\F_{2^m}$ such that
\begin{equation*}\label{eq:general-frame-pair}
 \phi(u_a)\cdot\phi(u_b)=\delta_{ab}
\end{equation*}
for any $1\leq a,b\leq t$, and, for $3\leq r \leq h$,
\begin{equation}\label{eq:general-frame-overlap}
 \wt\bigl(\phi(u_{a_1})\star\cdots\star\phi(u_{a_r})\bigr)=0\bmod2
\end{equation}
whenever $a_1,\ldots,a_r$ are pairwise distinct.
\end{definition}

The first condition implies that a compatible set is $\F_2$-linearly independent, so $t\leq m$. It also implies that $\wt(\phi(u_a))$ is odd, which will make the corresponding rows odd-weighted. For $h=3$, the second condition only concerns triples of distinct elements.

\begin{remark}
There is always a compatible set of size $t=1$ for the maps $\phi$ from \cref{prop:higher-map}. Indeed, the condition from \cref{eq:general-frame-overlap} is vacuous, and we only require $\phi(u_1)\cdot\phi(u_1)=1$. By \cref{eq:higher-map-parity}, this holds whenever $\Tr(u_1)=1$. Since the absolute trace is a nonzero binary linear functional, exactly half of the field elements have trace one, and any of them can be chosen as $u_1$. Nevertheless, finding compatible sets with larger cardinality yields quantum codes with better parameters.
\end{remark}

\begin{theorem}\label{thm:general-puncture}
Let $h\geq3$ and $\tau\geq2$, let $\cC\subset\F_{2^m}^n$ be an $\F_{2^m}$-linear code satisfying $\cC^{\star(\tau-1)}\subset\cC^\perp$, and let $\phi:\F_{2^m}\to\F_2^s$ be additive and injective, satisfying the identity and homogeneity condition in \cref{lem:schur-transfer} with $r=h$. Write
\[
 d_{\rm in}=\min_{a\neq0}\wt(\phi(a)),
\]
and let $u_1,\ldots,u_t$ be a compatible set for $\phi$ and $h$.

Choose $P<\min\{\dim_{\F_{2^m}}\cC,d(\cC)\}$ coordinates in an information set. Then there are binary codes $C_2\subset C_1\subset\F_2^{s(n-P)}$ such that there is an $h$-orthogonal generator matrix $G$ for $C_1$ with exactly $tP$ odd-weight rows, and the remaining rows span $C_2$. Moreover,
\begin{align}
 \dim_{\F_2}(C_1/C_2)&=tP,
 \notag\\
 d_X\bigl(\operatorname{CSS}(C_1,C_2)\bigr)
 &\geq d_{\rm in}\bigl(d(\cC)-P\bigr),
 \label{eq:general-x}\\
 d_Z\bigl(\operatorname{CSS}(C_1,C_2)\bigr)
 &\geq d(\cC^\perp)-P.
 \label{eq:general-z}
\end{align}
A diagonal correction $U\in\mathfrak C_{h-1}$, computable from $G$, satisfies
\begin{equation*}\label{eq:logical-action}
 UR_h^{\otimes s(n-P)}\big|_{\operatorname{CSS}(C_1,C_2)}
 =\overline R_h^{\otimes tP}.
\end{equation*}
\end{theorem}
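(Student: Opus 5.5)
Let $I=\{i_1,\ldots,i_P\}$ be the selected coordinates in an information set, $L=n-P$, and $\cA_1,\cA_2\subset\F_{2^m}^L$ the puncture and shortening of $\cC$ on $I$. I will take $C_2=\Phi(\cA_2)$ and enlarge it by $tP$ odd-weight rows, one for each pair $(k,a)$ with $1\leq k\leq P$ and $1\leq a\leq t$. Since $I$ lies in an information set, for each $k$ and each $\lambda\in\F_{2^m}$ there is a codeword $c\in\cC$ with $c_{i_k}=\lambda$ and $c_{i_{k'}}=0$ for $k'\neq k$. Let $c^{(k,a)}$ be the codeword with $c^{(k,a)}_{i_k}=u_a$, and let $g^{(k,a)}=\Phi(\pi(c^{(k,a)}))$, where $\pi$ denotes puncturing at $I$. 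Put $C_1=C_2+\Span\{g^{(k,a)}\}$, which lies in $\Phi(\cA_1)$. The distance bounds \cref{eq:general-x,eq:general-z} then follow immediately from \cref{lem:puncturing-distances}, and the logical action follows from \cref{lem:higher-correction} once $G$ is shown to be full rank and $h$-orthogonal. So the real work is the overlap computation.

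For the overlaps, take any $2\leq r\leq h$ codewords $c^{(1)},\dots,c^{(r)}\in\cC$; repetitions are allowed, which handles $r$-fold products and, via $x^2=x$ for binary vectors, weights of single rows. Pad to $h$ factors by repeating, so that the identity of \cref{lem:schur-transfer} with $r=h$ applies. The computation in the proof of \cref{lem:schur-transfer} shows that the full sum over all $n$ coordinates $\sum_{i}\sum_p\prod\phi_p(c^{(a)}_i)$ vanishes. Hence the binary overlap of the punctured images equals the contribution of the deleted coordinates modulo $2$:
\[
\wt\Bigl(\mathop{\bigstar}_a\Phi(\pi(c^{(a)}))\Bigr)\equiv\sum_{k=1}^P\wt\Bigl(\mathop{\bigstar}_a\phi(c^{(a)}_{i_k})\Bigr)\bmod 2 .
\]
Rows of $C_2$ vanish on every $i_k$, so any product involving one of them has even weight; in particular the rows of $C_2$ have even weight. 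For rows $g^{(k,a)}$ only, a deleted coordinate contributes a nonzero term only if all factors share the same $k$. In that case the contribution is $\wt(\phi(u_{a_1})\star\cdots\star\phi(u_{a_r}))$. Reduce this to distinct indices using idempotence of binary Schur products and apply \cref{def:frame}: the weight is odd when all indices coincide (since $\phi(u_a)\cdot\phi(u_a)=1$) and even otherwise, whether the reduced set has size $2$ or lies between $3$ and $h$. Thus each $g^{(k,a)}$ has odd weight, and every product of at least two distinct rows of $G$ has even weight. This is $h$-orthogonality. The main subtlety is the bookkeeping for padding and repetition: the identity \cref{eq:general-map-identity} is stated for exactly $h$ arguments, so fewer rows must be handled by repeating arguments, which is legitimate because the left side of \cref{eq:general-map-identity} is a product of $0/1$ values.

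For independence and $\dim C_1/C_2=tP$, note that binary overlaps give a symmetric pairing. Under the pairwise overlap $x\cdot y$, the $g^{(k,a)}$ are orthonormal modulo $2$ and orthogonal to $C_2$; in particular $C_2$ is self-orthogonal under this pairing. A nontrivial combination of the $g^{(k,a)}$ lying in $C_2$ would therefore have nonzero overlap with some $g^{(k,a)}$, contradicting its membership in $C_2$. Finally, take a basis of $C_2$ given by images of an $\F_2$-basis of $\cA_2$; $\Phi$ is injective because $\phi$ is, and puncturing at $P<d(\cC)$ coordinates is injective. Placing the $tP$ odd rows first yields the full-rank matrix $G$ needed for \cref{lem:higher-correction}.
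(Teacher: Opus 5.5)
Your proposal is correct and follows the paper's proof essentially step by step. It uses the same rows $\Phi(\pi(c^{(k,a)}))$; the paper takes the systematic lift $u_a(e_k\mid h_k)$, but any codeword with the prescribed values on the deleted coordinates gives the same $C_1$. It also uses the same transfer of overlaps to the deleted coordinates via \cref{lem:schur-transfer}, the same orthonormality argument for full rank, and the same appeal to \cref{lem:puncturing-distances,lem:higher-correction}, with your uniform padding of repeated factors being just a compact version of the paper's separate treatment of pairs and of $3\leq r\leq h$ distinct rows.
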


\begin{proof}
After permuting coordinates, write a generator matrix of $\cC$ as
\[
 \begin{pmatrix}
 I_P&H_1\\
 0&H_0
 \end{pmatrix}.
\]
Let $L=n-P$ and define
\[
 \cA_1:=\operatorname{rowsp}_{\F_{2^m}}(H_1,H_0),
 \;
 \cA_2:=\operatorname{rowsp}_{\F_{2^m}}(H_0).
\]
Thus $\cA_1$ is the puncture and $\cA_2$ the shortening of $\cC$ on the selected coordinates. Let
\[
 \Phi:\F_{2^m}^L\longrightarrow\F_2^{sL},
 \;
 \Phi(a_1,\ldots,a_L)=(\phi(a_1),\ldots,\phi(a_L)).
\]
Set $C_2=\Phi(\cA_2)$. If $h_p$ is the $p$th row of $H_1$, define
\begin{equation*}\label{eq:general-logical-rows}
 g_{p,a}=\Phi(u_ah_p),
 \; 1\leq p\leq P,\; 1\leq a\leq t,
\end{equation*}
and
\begin{equation*}\label{eq:general-C1}
 C_1=C_2+\Span_{\F_2}\{g_{p,a}:1\leq p\leq P, 1\leq a\leq t\}.
\end{equation*}

Let $G$ be the matrix whose first rows are given by $g_{p,a}$, and the rest of its rows are given by a basis of $C_2$. We will prove the overlap conditions for this matrix. First, note that a word $v\in\cA_2$ has the shortened lift
\[
 \widehat v=(0,\ldots,0\mid v)\in\cC,
\]
whereas the lift associated with $g_{p,a}$ is
\[
 \lambda_{p,a}=u_a(e_p\mid h_p)\in\cC.
\]
Let $\pi:\cC\to\F_{2^m}^L$ puncture the selected coordinates.
The lift is unique since both $\pi|_{\cC}$ and $\Phi$ are injective.
Let
\[
 \widehat{\Phi}:\F_{2^m}^{n}\longrightarrow\F_2^{sn},
 \;
 \widehat{\Phi}(a_1,\ldots,a_n)
 =(\phi(a_1),\ldots,\phi(a_n)).
\]
By \cref{lem:schur-transfer}, the image $\mathcal D=\widehat{\Phi}(\cC)$ satisfies $\mathcal D^{\star(h-1)}\subset\mathcal D^\perp$. Consequently, the product of any $r\leq h$ words of $\mathcal D$ has even weight (in the $h$-fold product, we may repeat one of the words when $r<h$). For clarity, we start with two codewords, and then move to the general case. For $x,y\in\cC$, we have
\begin{equation}\label{eq:punctured-pair}
 \Phi(\pi(x))\cdot\Phi(\pi(y))
 =\sum_{b=P+1}^n\phi(x_b)\cdot\phi(y_b)
 =\sum_{b=1}^P\phi(x_b)\cdot\phi(y_b).
\end{equation}
This is zero if either $\pi(x)$ or $\pi(y)$ belongs to the shortening $\cA_2$, while if $\Phi(\pi(x))=g_{p_1,a}$ and $\Phi(\pi(y))=g_{p_2,b}$ (i.e., $x=\lambda_{p_1,a}$ and $y=\lambda_{p_2,b}$), we have
\begin{equation}\label{eq:product_gs}
 g_{p_1,a}\cdot g_{p_2,b}
 =\phi(u_a)\cdot\phi(u_b)\delta_{p_1p_2}
 =\delta_{ab}\delta_{p_1p_2}.
\end{equation}
This proves the overlap condition for pairs of rows, and the fact that the first $tP$ rows of $G$ have odd weight. The remaining rows have even weight by applying \cref{eq:punctured-pair} to a shortened lift twice.

Similarly, for the overlap of $r$ distinct rows, with $3\leq r\leq h$, take their lifts $x^{(1)},\ldots,x^{(r)}\in\cC$. Since their binary images belong to $\mathcal D$, we have
\begin{equation}\label{eq:punctured-overlap}
 \sum_{b=P+1}^n\sum_{\ell=1}^s
 \prod_{a=1}^r\phi_\ell(x_b^{(a)})
 =
 \sum_{b=1}^P\sum_{\ell=1}^s
 \prod_{a=1}^r\phi_\ell(x_b^{(a)}).
\end{equation}
If one of $\pi(x^{(1)}),\ldots,\pi(x^{(r)})$ belongs to $\cA_2$, this sum is equal to 0. It remains to consider the case in which
\[
 \Phi(\pi(x^{(b)}))=g_{p_b,a_b},
 \; 1\leq b\leq r.
\]
Then $x^{(b)}=\lambda_{p_b,a_b}$, and its entries in the removed
coordinates are
\[
 x_c^{(b)}=u_{a_b}\delta_{cp_b},
 \; 1\leq c\leq P.
\]
Consequently, the right-hand side of \cref{eq:punctured-overlap} is
\[
 \left(\sum_{\ell=1}^s
 \prod_{b=1}^r\phi_\ell(u_{a_b})\right)\delta_{p_1\cdots p_r},
\]
where $\delta_{p_1\cdots p_r}=1$ if $p_1=\cdots=p_r$, and is zero otherwise. Thus, if the indices are not equal, the overlap is zero. If $p_1=\cdots=p_r$, then the indices $a_1,\ldots,a_r$ are pairwise distinct, and \cref{eq:general-frame-overlap} gives
\[
 \sum_{\ell=1}^s\prod_{b=1}^r\phi_\ell(u_{a_b})=0.
\]
This proves the required overlap conditions.

By \cref{eq:product_gs}, we have $g_{p_1,a}\cdot g_{p_2,b}=\delta_{ab}\delta_{p_1p_2}$. Thus, the first $tP$ rows are linearly independent. If we had $\sum_{p,a}\mu_{p,a}g_{p,a}\in C_2$, multiplying by $g_{c,b}$ we get $\mu_{c,b}=0$, for all $c,b$ (recall we have already proven the inner product of any two different rows is 0, in particular, $g_{c,b}$ is orthogonal to $C_2$). Thus $G$ is full rank.

Since $C_2=\Phi(\cA_2)$ and $C_1\subset\Phi(\cA_1)$, \cref{lem:puncturing-distances} gives \cref{eq:general-x,eq:general-z}. The logical action follows from \cref{lem:higher-correction}.
\end{proof}

\subsection{Asymptotic parameters}

Combining \cref{prop:outer,thm:general-puncture} gives the following
general asymptotic construction.

\begin{theorem}\label{thm:general-asymptotic}
Let $h\geq3$, let $m\geq6$ be even, and put $\ell=2^{m/2}$. Let $\tau$ be a power of $2$ such that $4\leq\tau\leq\ell/2$, and let $\{\cC_i^{(\tau)}\}_i$ be the family of codes over $\F_{2^m}$ from \cref{prop:outer}, with lengths $n_i$. Let $\phi:\F_{2^m}\to\F_2^s$ be additive and injective, satisfying the identity and homogeneity condition in \cref{lem:schur-transfer} with $r=h$, write
\[
 d_{\rm in}=\min_{a\neq0}\wt(\phi(a)),
\]
and let $u_1,\ldots,u_t$ be a compatible set for $\phi$ and $h$.
Fix $0<\kappa<B_\ell^{(\tau)}$ and let $P_i=\lfloor\kappa n_i\rfloor$. There is an explicit family $\{\cQ_i\}_i$ with parameters $[[N_i,K_i,\Delta_i]]_2$, $N_i=s(n_i-P_i)$, $K_i=tP_i$, for which
\begin{align*}
 d_X(\cQ_i)
 &\geq d_{\rm in}\bigl(A_\ell^{(\tau)} n_i-P_i\bigr),\\
 d_Z(\cQ_i)
 &\geq B_\ell^{(\tau)} n_i-P_i,
\end{align*}
and
\begin{align*}
 \liminf_{i\to \infty}\frac{K_i}{N_i}
 &\geq\frac{t\kappa}{s(1-\kappa)},\\
 \liminf_{i\to \infty}\frac{\Delta_i}{N_i}
 &\geq\frac{B_\ell^{(\tau)}-\kappa}{s(1-\kappa)}.
\end{align*}
Moreover, these codes have $h$-orthogonal generator matrices, and, for each $i$, a diagonal correction $U_i\in\mathfrak C_{h-1}$ can be computed from the defining $h$-orthogonal matrix such that $U_iR_h^{\otimes N_i}$ acts as logical $\overline R_h^{\otimes K_i}$.
\end{theorem}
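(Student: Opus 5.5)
The plan is to apply \cref{thm:general-puncture} to each code $\cC_i^{(\tau)}$ from \cref{prop:outer} with $P=P_i=\lfloor\kappa n_i\rfloor$, and then pass to the limit using the normalized bounds of \cref{prop:outer}.

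\textbf{Applicability of \cref{thm:general-puncture}.} By \cref{prop:outer}, $\cC_i^{(\tau)}$ satisfies $(\cC_i^{(\tau)})^{\star(\tau-1)}\subset(\cC_i^{(\tau)})^\perp$. We also need $P_i<\min\{\dim\cC_i^{(\tau)},d(\cC_i^{(\tau)})\}$. For this, I would check that $B_\ell^{(\tau)}\leq\rho_\ell^{(\tau)}$ and $B_\ell^{(\tau)}\leq A_\ell^{(\tau)}$; both follow directly from the explicit formulas.
\begin{itemize}
\item The first holds because $\ell-(2\tau-1)<\ell-(\tau-1)$.
\item The second holds because $(\tau-1)\ell-(\tau+1)-(\ell-2\tau+1)=(\tau-2)\ell+\tau-2>0$.
\end{itemize}
Then $P_i\leq\kappa n_i<B_\ell^{(\tau)}n_i\leq\min\{\dim,d\}$. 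Note that $\ell\geq 2\tau\geq8$ is guaranteed by $\tau\leq\ell/2$, consistent with the tower hypotheses; I would also remark that $m\geq6$ even ensures $\ell\geq8$. One also needs an information set containing the chosen $P_i$ coordinates. Any $P_i\leq\dim$ coordinates inside some information set suffice, and such a set exists and is computable by Gaussian elimination. This gives explicitness.

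\textbf{Parameters.} \cref{thm:general-puncture} yields codes with $N_i=s(n_i-P_i)$ and $K_i=tP_i$. It gives $d_X\geq d_{\rm in}(d(\cC_i^{(\tau)})-P_i)\geq d_{\rm in}(A_\ell^{(\tau)}n_i-P_i)$ and $d_Z\geq d((\cC_i^{(\tau)})^\perp)-P_i\geq B_\ell^{(\tau)}n_i-P_i$. It also gives the $h$-orthogonal generator matrix and the correction $U_i\in\mathfrak C_{h-1}$.

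\textbf{Asymptotics.} Since $P_i/n_i\to\kappa$, we get $K_i/N_i=tP_i/(s(n_i-P_i))\to t\kappa/(s(1-\kappa))$. For the distance, $d_{\rm in}\geq1$ by injectivity and $A_\ell^{(\tau)}\geq B_\ell^{(\tau)}$, so $\Delta_i\geq B_\ell^{(\tau)}n_i-P_i$. Dividing by $N_i$ and letting $i\to\infty$ gives the liminf $(B_\ell^{(\tau)}-\kappa)/(s(1-\kappa))>0$. Finally, $n_i\to\infty$ gives $N_i\to\infty$.

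I expect the only mildly delicate points to be the verification $B_\ell^{(\tau)}\leq\min\{\rho_\ell^{(\tau)},A_\ell^{(\tau)}\}$ and the handling of floors in the limits. Both are routine.
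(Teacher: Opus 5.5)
Your proposal is correct and follows the paper's proof essentially verbatim: you apply \cref{thm:general-puncture} to each $\cC_i^{(\tau)}$ with $P=P_i$, justified by $B_\ell^{(\tau)}<\min\{\rho_\ell^{(\tau)},A_\ell^{(\tau)}\}$. You then note that the $d_Z$ bound is the binding one, since $d_{\rm in}\geq1$ and $A_\ell^{(\tau)}>B_\ell^{(\tau)}$, and pass to the limit using $P_i/n_i\to\kappa$.
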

\begin{proof}
By \cref{prop:outer} and the assumption
$\kappa<B_\ell^{(\tau)}$, we have
\[
 P_i<\min\{\dim_{\F_{2^m}}\cC_i^{(\tau)},d(\cC_i^{(\tau)})\},
\]
since
\[
 B_\ell^{(\tau)}<
 \rho_\ell^{(\tau)}
 \text{ and }
 B_\ell^{(\tau)}<A_\ell^{(\tau)}.
\]
We may therefore apply \cref{thm:general-puncture} to $\cC_i^{(\tau)}$ with
$P=P_i$. This gives
\[
 N_i=s(n_i-P_i),\; K_i=tP_i,
\]
and
\begin{align*}
 d_X(\cQ_i)
 &\geq d_{\rm in}\bigl(d(\cC_i^{(\tau)})-P_i\bigr)
 \geq d_{\rm in}\bigl(A_\ell^{(\tau)} n_i-P_i\bigr),\\
 d_Z(\cQ_i)
 &\geq d((\cC_i^{(\tau)})^\perp)-P_i
 \geq B_\ell^{(\tau)} n_i-P_i.
\end{align*}
Moreover,
\[
 d_{\rm in}(A_\ell^{(\tau)} n_i-P_i)
 \geq A_\ell^{(\tau)} n_i-P_i
 >B_\ell^{(\tau)} n_i-P_i,
\]
so the smaller of the two displayed distance bounds is the one for
$d_Z(\cQ_i)$. Hence $\Delta_i\geq B_\ell^{(\tau)} n_i-P_i$. Dividing by $N_i$ and using $P_i/n_i\to\kappa$ gives the stated
asymptotic bounds. The statement about the logical action follows directly
from \cref{thm:general-puncture}.
\end{proof}

\begin{corollary}\label{thm:higher-with-correction}
For every fixed $h\geq3$, there is an explicit asymptotically good family of binary CSS codes with parameters $[[N_i,K_i,\Delta_i]]_2$, with an $h$-orthogonal generator matrix, and for which the gate $R_h^{\otimes N_i}$, followed by a diagonal correction from $\mathfrak C_{h-1}$, acts logically as $\overline R_h^{\otimes K_i}$.
\end{corollary}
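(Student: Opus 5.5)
The plan is to instantiate \cref{thm:general-asymptotic} with the map from \cref{prop:higher-map} and a compatible set of size one. Fix $h\geq3$ and take $j=h$, so that \cref{prop:higher-map} supplies an additive injective $\phi:\F_{2^m}\to\F_2^s$ satisfying \cref{eq:higher-map-identity} with $P=H_h$. This is homogeneous of degree $\tau=2^{h-1}$, which is a power of $2$ with $\tau\geq4$ because $h\geq3$. Thus the identity and homogeneity hypotheses of \cref{lem:schur-transfer} hold with $r=h$, matching the requirement in \cref{thm:general-asymptotic}.

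The first step is to choose the field. I take $m$ even with $\ell=2^{m/2}\geq\max\{8,2\tau\}=\max\{8,2^h\}$; concretely, $m=2\max\{3,h\}$ works. This gives $4\leq\tau\leq\ell/2$, so \cref{prop:outer} applies. Moreover,
\[
B_\ell^{(\tau)}=\frac{\ell-(2\tau-1)}{\tau(\ell-1)}>0,
\]
since $\ell\geq 2\tau$. The length $s\leq\sum_{a\leq\min\{h,m\}}\binom ma$ is a constant depending only on $h$, and $d_{\rm in}\geq1$ by injectivity.

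The second step is the compatible set. By the remark following \cref{def:frame}, any $u_1$ with $\Tr(u_1)=1$ gives a compatible set with $t=1$. Indeed, \cref{eq:higher-map-parity} gives $\phi(u_1)\cdot\phi(u_1)=\wt(\phi(u_1))\equiv1$, and the higher-overlap conditions are vacuous for a single element. Such a $u_1$ can be found by direct search.

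Finally, fix any $0<\kappa<B_\ell^{(\tau)}$, say $\kappa=B_\ell^{(\tau)}/2$, and apply \cref{thm:general-asymptotic}. Since $n_i\to\infty$ and $N_i=s(n_i-P_i)\geq s(1-\kappa)n_i$, we get $N_i\to\infty$. We also get
\[
\liminf K_i/N_i\geq\kappa/(s(1-\kappa))>0
\quad\text{and}\quad
\liminf\Delta_i/N_i\geq(B_\ell^{(\tau)}-\kappa)/(s(1-\kappa))>0.
\]
The $h$-orthogonal generator matrix and the diagonal correction $U_i\in\mathfrak C_{h-1}$ come directly from that theorem. Explicitness follows from the explicit tower and the explicit map.

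The proof has no real obstacle, since everything is already packaged in the earlier results. The only point that needs care is checking the numerical constraints: $\tau=2^{h-1}$ must be a power of two in $[4,\ell/2]$, and $m\geq6$ must be even. Both are ensured by the choice of $m$ above.
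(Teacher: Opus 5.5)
Your proposal is correct and follows the paper's proof essentially verbatim. You take $j=h$ and $\tau=2^{h-1}$ in \cref{prop:higher-map}, choose an even $m$ with $\ell=2^{m/2}\geq2\tau$ (your $m=2\max\{3,h\}=2h$ is exactly the paper's choice $m\geq2h$), obtain a compatible set of size one from \cref{eq:higher-map-parity}, and apply \cref{thm:general-asymptotic} with $t=1$ and any $0<\kappa<B_\ell^{(\tau)}$. Your additional checks, that $N_i\to\infty$ and that the rate and relative-distance bounds are strictly positive, are harmless details the paper leaves implicit.
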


\begin{proof}
Take $\tau=2^{h-1}$ and an even $m\geq2h$, so that
$\ell=2^{m/2}\geq2\tau$. Apply \cref{prop:higher-map} with $j=h$.
By \cref{eq:higher-map-parity}, there is an element $u$ with odd image
weight, which gives a compatible set of size one. Thus
\cref{thm:general-asymptotic} applies with $t=1$ and any
$0<\kappa<B_\ell^{(\tau)}$.
\end{proof}

\section{A correction-free construction}\label{sec:correction-free}
The families in \cref{s:concatenate} implement the desired logical gate after a diagonal correction from the preceding level. In this section, we strengthen the condition on the binary image so that no correction is required.

\subsection{Weighted phases}

The following weighted-phase criterion is a variant of \cite[Lem.~III.2 and~V.3]{haah_towers_divisible_codes}, allowing arbitrary integer physical weights and logical gates $R_h$ or $R_h^{-1}$. We include a proof for completeness.

\begin{lemma}\label{lem:higher-weighted-phase}
Let $h\geq3$ and let $g^{(1)},\ldots,g^{(M)}\in\F_2^N$ be independent, with $K\leq M$, and set
\[
 C_1=\Span\{g^{(1)},\ldots,g^{(M)}\},\;
 C_2=\Span\{g^{(K+1)},\ldots,g^{(M)}\}.
\]
Suppose that, for some $\nu\in \mathbb{Z}^N$, we have that
\[
 \wt_\nu(g^{(a)})=\epsilon_a\bmod2^h,
 \; \epsilon_a\in\{1,2^h-1\},
 \; a\leq K,
\]
that $\wt_\nu(g^{(a)})=0\bmod2^h$ for $a>K$, and that the weighted overlap of every $r$ distinct rows is zero modulo $2^{h-r+1}$ for $2\leq r\leq h$. Then
\[
 \bigotimes_{p=1}^N R_h^{\nu_p}
\]
preserves the corresponding CSS code and acts on its logical basis as
\[
 \overline R_h^{\epsilon_1}\otimes\cdots\otimes
 \overline R_h^{\epsilon_K}.
\]
\end{lemma}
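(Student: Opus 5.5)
The plan is to compute the phase that $\bigotimes_p R_h^{\nu_p}$ applies to each computational basis state $|x\rangle$ with $x\in C_1$, and to show that it depends only on the logical label of $x$. First, on a basis state the operator acts as $\bigotimes_p R_h^{\nu_p}|x\rangle=\zeta_{2^h}^{\sum_p\nu_px_p}|x\rangle=\zeta_{2^h}^{\wt_\nu(x)}|x\rangle$. Writing $x=\sum_{a=1}^M u_ag^{(a)}$ with $u\in\F_2^M$, which is unique by independence, I will apply \cref{lem:weight-expansion} with the weight vector $\nu$. This gives
\[
 \wt_\nu(x)=\sum_{\varnothing\neq S}(-2)^{|S|-1}\Bigl(\prod_{a\in S}u_a\Bigr)\wt_\nu\Bigl(\mathop{\bigstar}_{a\in S}g^{(a)}\Bigr),
\]
and I will reduce this identity modulo $2^h$.

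Next I will dispose of the terms by the size of $S$. For $|S|=r$ with $2\leq r\leq h$, the hypothesis gives $\wt_\nu(\bigstar_{a\in S}g^{(a)})\equiv0\bmod 2^{h-r+1}$. Multiplying by $2^{r-1}$ makes the term divisible by $2^h$, so it vanishes. For $|S|\geq h+1$, the factor $2^{|S|-1}$ is already divisible by $2^h$, so these terms vanish too (this is the last sentence of \cref{lem:weight-expansion}). The only terms left are the singletons, which give $\wt_\nu(x)\equiv\sum_a u_a\wt_\nu(g^{(a)})\equiv\sum_{a\leq K}u_a\epsilon_a\bmod 2^h$, since $\wt_\nu(g^{(a)})\equiv0$ for $a>K$. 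The only care needed here is to use the products of distinct rows appearing in the expansion (the sets $S$), so that the hypothesis on distinct rows suffices. This is the heart of the argument, but it is a routine divisibility check.

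Finally, for the logical basis state $|\bar u\rangle$ with $u\in\F_2^K$, every term $c+\sum_{a\leq K}u_ag^{(a)}$ with $c\in C_2$ has coefficient vector whose first $K$ entries equal $u$. By the previous step, each such term therefore acquires the same phase $\zeta_{2^h}^{\sum_{a\leq K}u_a\epsilon_a}$. Hence $\bigotimes_p R_h^{\nu_p}|\bar u\rangle=\prod_{a\leq K}\zeta_{2^h}^{\epsilon_au_a}|\bar u\rangle$, which is exactly $\overline R_h^{\epsilon_1}\otimes\cdots\otimes\overline R_h^{\epsilon_K}$ applied to $|\bar u\rangle$, because $R_h^{\epsilon}=\diag(1,\zeta_{2^h}^{\epsilon})$. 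Since the logical basis states span the code space and each is mapped to a multiple of itself, the code space is preserved and the logical action is as claimed. I do not expect a genuine obstacle; the one point to check carefully is that the phase is constant on each coset of $C_2$.
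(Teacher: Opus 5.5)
Your proposal is correct and follows essentially the same route as the paper: you apply \cref{lem:weight-expansion} modulo $2^h$, kill the terms with $2\leq|S|\leq h$ using the divisibility hypothesis and those with $|S|>h$ using the coefficient $(-2)^{|S|-1}$, and then read off the phase $\zeta_{2^h}^{\sum_{a\leq K}\epsilon_a u_a}$ on each logical basis state. This is exactly the paper's argument.
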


\begin{proof}
For $u=(u_1,\ldots,u_M)\in\F_2^M$, apply \cref{lem:weight-expansion} modulo $2^h$. The term corresponding to a set of $r$ distinct rows has coefficient $(-2)^{r-1}$, and hence vanishes by the assumed divisibility when $2\leq r\leq h$; terms with $r>h$ vanish from the coefficient alone. Therefore
\[
 \wt_\nu\left(\sum_{a=1}^M u_ag^{(a)}\right)
 \equiv\sum_{a=1}^K\epsilon_au_a\bmod2^h.
\]
For $\bar u=(u_1,\ldots,u_K)\in\F_2^K$, the corresponding basis state is
\[
 \lvert\bar u\rangle
 =
 \frac{1}{\sqrt{|C_2|}}
 \sum_{y\in C_2}
 \left|y+\sum_{a=1}^K u_ag^{(a)}\right\rangle.
\]
Since
\[
 \left(\bigotimes_{p=1}^N R_h^{\nu_p}\right)\lvert x\rangle
 =\zeta_{2^h}^{\wt_\nu(x)}\lvert x\rangle,
\]
the preceding congruence shows that
\[
 \left(\bigotimes_{p=1}^N R_h^{\nu_p}\right)\lvert\bar u\rangle
 =
 \zeta_{2^h}^{\sum_{a=1}^K\epsilon_au_a}
 \lvert\bar u\rangle,
\]
which is precisely the action of
\(
 \overline R_h^{\epsilon_1}\otimes\cdots\otimes
 \overline R_h^{\epsilon_K}.
\)
In particular, $\bigotimes_{p=1}^N R_h^{\nu_p}$ preserves
$\operatorname{CSS}(C_1,C_2)$.
\end{proof}

We obtain the weighted congruences before puncturing from the following result. Its proof is given in \cref{app:higher-completion}.

\begin{lemma}\label{lem:higher-odd-completion}
Let $h\geq2$ and let $D\subset\F_2^N$ be a linear code such that
\[
 D^{\star(2^{h-1}-1)}\subset D^\perp.
\]
Then one can compute
\[
 \nu\in\{1,3,\ldots,2^h-1\}^N
\]
such that, for $1\leq r\leq h$,
\begin{equation}\label{eq:higher-weighted-congruences}
 \wt_\nu(x^{(1)}\star\cdots\star x^{(r)})
 =0\bmod 2^{h-r+1}
\end{equation}
for all $x^{(1)},\ldots,x^{(r)}\in D$.
\end{lemma}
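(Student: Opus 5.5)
The plan is to induct on $h$, using the fact that the hypothesis passes to the Schur square. Set $E=D^{\star2}$ (the span of all $x\star y$ with $x,y\in D$). Then
\[
E^{\star(2^{h-2}-1)}\subset D^{\star(2^{h-1}-2)},
\]
and a product of $2^{h-2}$ elements of $E$ is a product of $2^{h-1}$ elements of $D$, which has even weight by hypothesis. Hence $E^{\star(2^{h-2}-1)}\subset E^\perp$, so $E$ satisfies the hypothesis of the lemma at level $h-1$. For the base case $h=2$ (where $D$ is self-orthogonal) I take $\nu'=\one$ and go directly to the lifting step below. For $h\geq3$, induction gives an odd $\nu'$ with
\[
\wt_{\nu'}(y^{(1)}\star\cdots\star y^{(r)})\equiv0\bmod 2^{h-r}
\]
for all $y^{(a)}\in E$ and $1\leq r\leq h-1$.

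Next I check that $\nu'$ already handles $r\geq2$ for $D$. A product of $r\geq2$ elements of $D$ is a product of $\lceil r/2\rceil$ elements of $E$, repeating one factor when $r$ is odd. So its $\nu'$-weight is divisible by $2^{h-\lceil r/2\rceil}$. Since $h-\lceil r/2\rceil\geq h-r+1$ for $r\geq2$, this gives the required congruence modulo $2^{h-r+1}$. These moduli are at most $2^{h-1}$, so they are unchanged if I replace $\nu'$ by $\nu=\nu'+2^{h-1}w$ for any binary $w$. Such a $\nu$ stays odd and can be reduced into $\{1,3,\ldots,2^h-1\}$.

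It remains to arrange the case $r=1$, namely $\wt_\nu(x)\equiv0\bmod 2^h$ for all $x\in D$.
\begin{itemize}
\item Every $x\in D$ satisfies $x=x\star x\in E$, so $\wt_{\nu'}(x)\equiv0\bmod2^{h-1}$.
\item Fix a basis $g^{(1)},\ldots,g^{(M)}$ of $D$ and apply \cref{lem:weight-expansion} modulo $2^h$. The term for a set $S$ with $|S|\geq2$ has coefficient $(-2)^{|S|-1}$ times an overlap divisible by $2^{h-|S|+1}$, by the $r\geq2$ congruences above, so it vanishes.
\item Therefore $x\mapsto f(x):=\wt_{\nu'}(x)/2^{h-1}\bmod 2$ is an $\F_2$-linear functional on $D$.
\item Extend $f$ to $\F_2^N$ and represent it as $x\mapsto w\cdot x$ for some $w\in\F_2^N$, which is found by solving a linear system.
\item With this $w$, $\wt_\nu(x)\equiv\wt_{\nu'}(x)+2^{h-1}\,w\cdot x\equiv 2^{h-1}\bigl(f(x)+f(x)\bigr)\equiv 0\bmod 2^h$.
\end{itemize}
This also handles the base case $h=2$ with $\nu'=\one$: self-orthogonality gives the $r=2$ condition modulo $2$ and even weights for all $x\in D$. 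All steps are linear algebra, so $\nu$ is computable.

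The main obstacle I expect is the bookkeeping in the reduction to $E$. I need to verify the exponent inequality $h-\lceil r/2\rceil\geq h-r+1$ in all cases, including $r=h$ and odd $r$. I also need to make sure the induction hypothesis is applied to $E$ with the right range of $r$, so that every product needed for $D$ is covered.
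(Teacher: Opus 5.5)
Your proof is correct and has the same structure as the paper's: induction on $h$ through $D^{\star2}$, the same self-orthogonal base case, and the same lift $\nu=\nu'\pm 2^{h-1}w$ that forces $\wt_\nu(x)\equiv0\bmod 2^h$ on $D$. The one local difference is in the $r\geq2$ congruences. You get them directly from the full inductive hypothesis by regrouping an $r$-fold product into $\lceil r/2\rceil$ elements of $D^{\star2}$ (using $x\star x=x$). The paper instead uses only the pair congruence $\wt_{\nu^{(0)}}(x\star y)\equiv0\bmod 2^{h-1}$, plus an inner induction on $r$ via $\wt_{\nu}(z+z')=\wt_{\nu}(z)+\wt_{\nu}(z')-2\wt_{\nu}(z\star z')$.
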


\begin{example}
For $h=3$, the hypothesis in \cref{lem:higher-odd-completion} is $D^{\star3}\subset D^\perp$, and the moduli in \cref{eq:higher-weighted-congruences} are $8$, $4$, and $2$ for one, two, and three factors, respectively. These are the divisibility conditions used in \cref{lem:higher-weighted-phase}. We will use this to construct codes with a weighted transversal $T$ gate with no subsequent Clifford correction (also see \cite[Thm.~4 and Cor.~4]{rengaswamyOptimalityCSST}).
\end{example}

\subsection{The correction-free construction}
We now combine the preceding results to obtain the main result of this section.
\begin{theorem}\label{thm:correction-free-general}
Let $h\geq3$, let $m\geq6$ be even, and let $\ell=2^{m/2}$. Let $\tau$ be a power of $2$ such that $4\leq\tau\leq\ell/2$, and let $\{\cC_i^{(\tau)}\}_i$ be the family from \cref{prop:outer}. Let $\phi:\F_{2^m}\to\F_2^s$ be additive and injective, satisfying the identity and homogeneity condition in \cref{lem:schur-transfer} with $r=2^{h-1}$. Suppose that $\wt(\phi(u))$ is odd for some $u\in\F_{2^m}$, and set
\[
d_{\rm in}=\min_{a\neq0}\wt(\phi(a)).
\]
For every $0<\kappa<B_\ell^{(\tau)}/(2^{h-1}-1)$ there is an explicit family $\{\cQ_i\}_i$ of CSS codes with parameters $[[N_i,K_i,\Delta_i]]_2$, $K_i=\lfloor\kappa n_i\rfloor$, with $2^{h-1}$-orthogonal generator matrices, vectors $\nu_i'\in\{1,3,\ldots,2^h-1\}^{N_i}$, and integers $P_i$ with $K_i\leq P_i\leq(2^{h-1}-1)K_i$ such that
\begin{align}
N_i&=s(n_i-P_i),
\label{eq:correction-free-length}\\
d_X(\cQ_i)
&\geq d_{\rm in}(A_\ell^{(\tau)}n_i-P_i),
\label{eq:correction-free-x}\\
d_Z(\cQ_i)
&\geq B_\ell^{(\tau)} n_i-P_i.
\label{eq:correction-free-z}
\end{align}
In a suitable choice of logical basis,
$\bigotimes_{p=1}^{N_i}R_h^{(\nu_i')_p}$ acts as
$\overline R_h^{\otimes K_i}$ up to a global phase, with no subsequent
correction.
Moreover,
\begin{align}
\liminf_{i\to \infty}\frac{K_i}{N_i}
&\geq\frac{\kappa}{s(1-\kappa)},
\label{eq:correction-free-rate}\\
\liminf_{i\to \infty}\frac{\Delta_i}{N_i}
&\geq\frac{B_\ell^{(\tau)}-(2^{h-1}-1)\kappa}
{s\bigl(1-(2^{h-1}-1)\kappa\bigr)}.
\label{eq:correction-free-relative-distance}
\end{align}
\end{theorem}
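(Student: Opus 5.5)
The plan is to run the puncturing construction of \cref{thm:general-puncture} with a weighted phase in place of the unweighted one. Let $\cC=\cC_i^{(\tau)}$, $n=n_i$, and $\mathcal D=\widehat\Phi(\cC)\subset\F_2^{sn}$. By \cref{lem:schur-transfer} with $r=2^{h-1}$ and \cref{prop:outer}, we have $\mathcal D^{\star(2^{h-1}-1)}\subset\mathcal D^\perp$. Then \cref{lem:higher-odd-completion} gives $\nu\in\{1,3,\dots,2^h-1\}^{sn}$ with $\wt_\nu(x^{(1)}\star\cdots\star x^{(r)})\equiv0\bmod 2^{h-r+1}$ for $1\le r\le h$ and all $x^{(a)}\in\mathcal D$. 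The vector $\nu_i'$ will be the restriction of $\nu$ to the kept blocks.

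Fix $u$ with $\wt(\phi(u))$ odd. For a coordinate $p$ of an information set, let $w_p$ be the $\nu$-weight of $\phi(u)$ inside block $p$. Then $w_p$ is a sum of an odd number of odd integers, hence odd.

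The basic identity, as in \cref{eq:punctured-overlap}, is the following: for lifts $x^{(a)}\in\cC$,
\[
\wt_{\nu'}\Bigl(\mathop{\bigstar}_a\Phi(\pi(x^{(a)}))\Bigr)=\wt_\nu\Bigl(\mathop{\bigstar}_a\widehat\Phi(x^{(a)})\Bigr)-(\text{contribution of the removed blocks}),
\]
and the first term on the right vanishes modulo $2^{h-r+1}$. Shortened lifts vanish on the removed blocks, so $C_2=\Phi(\cA_2)$ contributes nothing there. A row $\lambda_i=u\sum_{p\in S_i}e_p+(\ast)$ lifted from $\cC$, with the sets $S_i$ pairwise disjoint, has removed-block contribution $\sum_{p\in S_i}w_p$ for every product of copies of itself, because binary vectors are idempotent. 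Products of distinct such rows contribute zero there. Hence the punctured rows $g_i=\Phi(\pi(\lambda_i))$ satisfy all overlap conditions of \cref{lem:higher-weighted-phase}, and $\wt_{\nu'}(g_i)\equiv-\sum_{p\in S_i}w_p\bmod 2^h$.

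It remains to choose the groups $S_i$ so that this sum is $\pm1\bmod 2^h$; I expect this to be the main obstacle, and I would handle it by a pigeonhole grouping. I would take each group to consist of $k$ coordinates with the same odd residue $w\bmod 2^h$, where $k$ is odd and is the representative in $\{1,\dots,2^{h-1}-1\}$ of $w^{-1}$ or $-w^{-1}$; one of the two lies in this range. Then $kw\equiv\pm1$. Grouping greedily inside the information set (size about $\rho_\ell^{(\tau)}n$) wastes at most $2^{h-1}\cdot 2^{h-1}$ coordinates. Since $(2^{h-1}-1)\kappa<B_\ell^{(\tau)}<\rho_\ell^{(\tau)}$, for large $n$ we obtain $K_i=\lfloor\kappa n\rfloor$ groups. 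This uses $P_i=\sum|S_i|$ removed coordinates with $K_i\le P_i\le(2^{h-1}-1)K_i$ and $P_i<\min\{\dim\cC,d(\cC)\}$. Set $C_2=\Phi(\cA_2)$ and $C_1=C_2+\Span\{g_i\}$.

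For the remaining claims, the unweighted version of the same identity (using $\mathcal D^{\star(2^{h-1}-1)}\subset\mathcal D^\perp$) shows that overlaps of up to $2^{h-1}$ distinct rows are even. It also gives $g_i\cdot g_{i'}\equiv|S_i\cap S_{i'}|\wt(\phi(u))\equiv\delta_{ii'}\bmod 2$, since each $|S_i|$ is odd. Thus the matrix is $2^{h-1}$-orthogonal, and, exactly as in \cref{thm:general-puncture}, it is full rank with $\dim C_1/C_2=K_i$. \cref{lem:higher-weighted-phase} then shows that $\bigotimes_p R_h^{(\nu_i')_p}$ acts as $\overline R_h^{\epsilon_1}\otimes\cdots\otimes\overline R_h^{\epsilon_{K_i}}$ with $\epsilon_a=\pm1$. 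For each $a$ with $\epsilon_a=-1$, relabel the logical basis of that qubit by exchanging $\ket{\bar0}$ and $\ket{\bar1}$; since $R_h^{-1}=\zeta_{2^h}^{-1}XR_hX$, this turns the action into $\overline R_h^{\otimes K_i}$ up to a global phase. Finally, \cref{lem:puncturing-distances} with $P=P_i$, together with \cref{prop:outer}, gives \cref{eq:correction-free-x,eq:correction-free-z}. The rate bound uses $P_i\ge K_i$, and the relative distance bound uses $P_i\le(2^{h-1}-1)K_i$ together with the fact that the $d_Z$ bound is the smaller one, as in \cref{thm:general-asymptotic}.
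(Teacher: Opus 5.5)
Your proposal is correct and follows essentially the same route as the paper's proof: odd completion of $\widehat\Phi(\cC_i^{(\tau)})$ via \cref{lem:higher-odd-completion}, grouping information coordinates of equal odd residue into blocks of odd size whose residues sum to $\pm1 \bmod 2^h$, lifting $u\one_{J}$ through the systematic form on the union of the groups, and then concluding with \cref{lem:higher-weighted-phase}, the $X$-relabeling of logical qubits with $\epsilon_a=-1$, and \cref{lem:puncturing-distances}. The only cosmetic difference is in how odd weight of the logical rows is obtained: you use that $|S_i|$ is odd together with the unweighted pairing identity, whereas the paper reduces the weighted congruences modulo $2$, using that $\nu_i'$ has odd entries.
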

\begin{proof}
Let
\[
\widehat{\Phi}_i:\F_{2^m}^{n_i}\longrightarrow\F_2^{sn_i},
\;
 \widehat{\Phi}_i(a_1,\ldots,a_{n_i})=(\phi(a_1),\ldots,\phi(a_{n_i})),
\]
and consider $\mathcal D_i=\widehat{\Phi}_i(\cC_i^{(\tau)})$. By \cref{lem:schur-transfer}, applied with $r=2^{h-1}$, we have $\mathcal D_i^{\star(2^{h-1}-1)}\subset\mathcal D_i^\perp$, and \cref{lem:higher-odd-completion} gives $\nu_i\in\{1,3,\ldots,2^h-1\}^{sn_i}$. Choose $u\in\F_{2^m}$ for which $\wt(\phi(u))$ is odd, as in the hypothesis. For each coordinate $1\leq p\leq n_i$, let
\[
\omega_p=\wt_{\nu_i^{(p)}}(\phi(u))\bmod2^h,
\]
where $\nu_i^{(p)}$ is the restriction to the $p$th inner block of length $s$. Since the entries of $\nu_i$ are odd and $\wt(\phi(u))$ is odd too, we have $\omega_p\in\{1,3,\ldots,2^h-1\}$.

Let $I_i$ be an information set of $\cC_i^{(\tau)}$, and for
$a\in\{1,3,\ldots,2^h-1\}$ set
\[
I_{i,a}=\{p\in I_i:\omega_p\equiv a\bmod2^h\}.
\]
For each such $a$, choose $b_a\in\{1,\ldots,2^{h-1}-1\}$ such that
\[
b_aa\equiv1\ \text{or}\ -1\bmod2^h.
\]
Such a $b_a$ exists: take the smaller of the inverse of $a$ modulo $2^h$ and its negative. Divide each $I_{i,a}$ into disjoint sets of $b_a$ elements, discarding the final incomplete set. Any of the sets $J$ constructed in this way satisfies
\begin{equation*}\label{eq:group-residue}
\sum_{p\in J}\omega_p\equiv1\ \text{or}\ -1\bmod2^h.
\end{equation*}
\begin{example}
For $h=3$, this gives $b_1=b_7=1$ and $b_3=b_5=3$: we use the coordinates of residue $1$ or $7$ individually, and group those of residue $3$ or $5$ in triples, since $3+3+3\equiv1\bmod8$ and $5+5+5\equiv-1\bmod8$.
\end{example}
The total number of sets $J$ is
\[
\sum_{a\in\{1,3,\ldots,2^h-1\}}
\left\lfloor\frac{|I_{i,a}|}{b_a}\right\rfloor.
\]
We discard at most $b_a-1$ coordinates from each $I_{i,a}$, hence at most
\[
c_h=\sum_{a\in\{1,3,\ldots,2^h-1\}}(b_a-1)
\]
altogether. If $k_i=\dim_{\F_{2^m}}\cC_i^{(\tau)}$, we are considering at least $k_i-c_h$ coordinates, and each set has at most $2^{h-1}-1$ elements. Thus, the total number of sets is at least
\[
\left\lceil\frac{k_i-c_h}{2^{h-1}-1}\right\rceil
\geq\frac{\rho_\ell^{(\tau)}n_i-c_h}{2^{h-1}-1}.
\]
Moreover, $\rho_\ell^{(\tau)}-B_\ell^{(\tau)}=\frac1{\ell-1}>0$, and since
$\kappa<B_\ell^{(\tau)}/(2^{h-1}-1)$, we have
$\rho_\ell^{(\tau)}-(2^{h-1}-1)\kappa>0$. Hence, for every $i$ such that
\[
n_i\geq\frac{c_h}{\rho_\ell^{(\tau)}-(2^{h-1}-1)\kappa},
\]
if we rearrange the terms in that inequality, we get that the number of groups is greater than or equal to $\kappa n_i\geq\lfloor\kappa n_i\rfloor=K_i$. Thus, there are at least $K_i$ such groups for all sufficiently large $i$. Discarding finitely many initial terms and re-indexing the family, we may assume that this holds for every $i$. Retain $K_i$ of them and let $P_i$ be the number of coordinates they contain. Then $K_i\leq P_i\leq(2^{h-1}-1)K_i$, since each group has cardinality at most $2^{h-1}-1$. Also $P_i<\min\{k_i,d(\cC_i^{(\tau)}),d((\cC_i^{(\tau)})^\perp)\}$, because $P_i\leq (2^{h-1}-1)\kappa n_i<B_\ell^{(\tau)}n_i<\min\{\rho_\ell^{(\tau)},A_\ell^{(\tau)}\}n_i$. Let $S_i=\bigsqcup_{a=1}^{K_i}J_a$, with $\size{S_i}=P_i$, and, after permuting coordinates, write a generator matrix of $\cC_i^{(\tau)}$ in systematic form on $S_i$:
\[
\begin{pmatrix}
I_{P_i}&H_{1,i}\\
0&H_{0,i}
\end{pmatrix}.
\]
Put $L_i=n_i-P_i$ and
\[
\cA_{1,i}
=\operatorname{rowsp}_{\F_{2^m}}(H_{1,i},H_{0,i}),
\;
\cA_{2,i}
=\operatorname{rowsp}_{\F_{2^m}}(H_{0,i}),
\]
so that $\cA_{1,i}$ and $\cA_{2,i}$ are respectively the puncture and the shortening of $\cC_i^{(\tau)}$ on $S_i$. Define
\[
 \Phi_i:\F_{2^m}^{L_i}\longrightarrow\F_2^{sL_i}, \; \Phi_i(a_1,\dots,a_{L_i})=(\phi(a_1),\dots,\phi(a_{L_i})),
\]
and consider $C_{2,i}=\Phi_i(\cA_{2,i})$. For $p\in S_i$, let $h_p$ be the row of $H_{1,i}$ corresponding to $p$. For each selected set $J_a$, define
\[
\lambda_a
=u\sum_{p\in J_a}(e_p\mid h_p)
=\left(u\one_{J_a}\,\middle|\,
u\sum_{p\in J_a}h_p\right)\in\cC_i^{(\tau)}
\]
and
\[
g^{(a)}
=\Phi_i\left(u\sum_{p\in J_a}h_p\right)
\in\F_2^{sL_i}.
\]
Thus $g^{(a)}$ is obtained from
$\widehat g^{(a)}:=\widehat{\Phi}_i(\lambda_a)\in\mathcal D_i$ by deleting the
blocks indexed by $S_i$. Set
\[
C_{1,i}
=C_{2,i}+\Span_{\F_2}
\{g^{(1)},\ldots,g^{(K_i)}\}.
\]
Let $\nu_i'$ be the restriction of $\nu_i$ to the remaining $sL_i$
binary coordinates, and let
\[
\eta_a
\equiv\sum_{p\in J_a}\omega_p\bmod2^h,
\; \eta_a\in\{1,2^h-1\}.
\]
Since $\widehat g^{(a)}\in\mathcal D_i$, we have $\wt_{\nu_i}(\widehat g^{(a)})\equiv0\bmod2^h$ (recall that we choose $\nu_i$ from \cref{lem:higher-odd-completion}). Its deleted blocks are
equal to $\phi(u)$ precisely at the coordinates in $J_a$. Therefore
\begin{equation}\label{eq:logical-weight-after-puncture}
\wt_{\nu_i'}(g^{(a)})
\equiv-\sum_{p\in J_a}\omega_p
\equiv-\eta_a
=:\epsilon_a\bmod2^h,
\; \epsilon_a\in\{1,2^h-1\}.
\end{equation}
On the other hand, every $v\in C_{2,i}$ has a lift
$\widehat v\in\mathcal D_i$ that vanishes on all the deleted blocks.
Consequently,
\begin{equation}\label{eq:stabilizer-weight-after-puncture}
\wt_{\nu_i'}(v)
=\wt_{\nu_i}(\widehat v)
\equiv0\bmod2^h.
\end{equation}

The rest of the proof is similar to that of \cref{thm:general-puncture}. Let $G_i$ be the matrix whose first rows are given by $g^{(1)},\ldots,g^{(K_i)}$, and the rest of its rows are given by a basis of $C_{2,i}$. We will prove the overlap conditions for this matrix. By a binary lift of a row $x$, we mean the unique word $\widehat x\in\mathcal D_i$ that gives $x$ after deleting the blocks indexed by $S_i$.

First, note that a word $v\in\cA_{2,i}$ has the shortened lift
\(
 (0,\ldots,0\mid v)\in\cC_i^{(\tau)},
\)
so that the binary lift of $x=\Phi_i(v)\in C_{2,i}$ is
\(
 \widehat x
 =\widehat{\Phi}_i\bigl((0,\ldots,0\mid v)\bigr)
 =(\mathbf{0}_{sP_i}\mid x)\in\mathcal D_i,
\)
whereas the binary lift associated with $g^{(a)}$ is
\(
 \widehat g^{(a)}=\widehat{\Phi}_i(\lambda_a)\in\mathcal D_i.
\)
Let $\pi_i:\cC_i^{(\tau)}\to\F_{2^m}^{L_i}$ puncture the selected
coordinates. The lift is unique since both $\pi_i$ and $\Phi_i$ are
injective, the former because $P_i<d(\cC_i^{(\tau)})$.

For clarity, we start with two rows, and then move to the general case.
Let $x$ and $x'$ be two distinct rows of $G_i$, with binary lifts
$\widehat x,\widehat x'\in\mathcal D_i$. If either $x$ or $x'$ belongs
to $C_{2,i}$, its binary lift is zero on all the deleted blocks, so
$\widehat x\star\widehat x'$ is also zero there. It remains to consider
the case in which
\(
 x=g^{(a)}, x'=g^{(b)}, a\neq b.
\)
Then $\widehat x=\widehat{\Phi}_i(\lambda_a)$ and
$\widehat x'=\widehat{\Phi}_i(\lambda_b)$. On the deleted blocks,
$\widehat x$ is equal to $\phi(u)$ at the coordinates in $J_a$ and
zero elsewhere, while $\widehat x'$ is equal to $\phi(u)$ at the
coordinates in $J_b$ and zero elsewhere. Since $J_a$ and $J_b$ are
disjoint, their coordinate-wise product is again zero on all the
deleted blocks. Thus, by \cref{eq:higher-weighted-congruences},
\begin{equation}\label{eq:punctured-weighted-pair}
 \wt_{\nu_i'}(x\star x')
 =\wt_{\nu_i}(\widehat x\star\widehat x')
 \equiv0\bmod2^{h-1}.
\end{equation}

Similarly, for the overlap of $r$ distinct rows, with $3\leq r\leq h$, take their binary lifts $\widehat x^{(1)},\ldots,\widehat x^{(r)}\in\mathcal D_i$. The product of any two of these lifts is zero on the deleted blocks, as we have just proved, so the product of all $r$ lifts is also zero there. Consequently, \cref{eq:higher-weighted-congruences} gives
\begin{equation}\label{eq:punctured-weighted-overlap}
 \wt_{\nu_i'}(x^{(1)}\star\cdots\star x^{(r)})
 =\wt_{\nu_i}(\widehat x^{(1)}\star\cdots\star\widehat x^{(r)})
 \equiv0\bmod2^{h-r+1}.
\end{equation}
This proves the required weighted overlap conditions.

Since every coordinate of $\nu_i'$ is odd, reduction modulo $2$ in \cref{eq:logical-weight-after-puncture,eq:stabilizer-weight-after-puncture,eq:punctured-weighted-pair,eq:punctured-weighted-overlap} shows that $\wt(g^{(a)})\equiv1\bmod2,$ $\wt(v)\equiv0\bmod2$, for $v\in C_{2,i}$. Every overlap of $r$ distinct rows of $G_i$, for $2\leq r\leq 2^{h-1}$, has even weight. Indeed, we have just proven that the overlap of the corresponding lifts is zero on the deleted blocks, so the weight of the overlap is the same before and after the lift. Since $\mathcal{D}_i^{\star (2^{h-1}-1)}\subset \mathcal{D}_i^\perp$, this overlap is even (we can repeat factors if $r<2^{h-1}$). These rows therefore form a $2^{h-1}$-orthogonal matrix. They are also independent. Indeed, if
\(
 \sum_{a=1}^{K_i}\xi_ag^{(a)}\in C_{2,i}\), for some $\xi_a\in\F_2$, then taking the inner product with $g^{(b)}$ gives $\xi_b=g^{(b)}\cdot\sum_{a=1}^{K_i}\xi_ag^{(a)}=0$ because $g^{(b)}\cdot g^{(b)}=1$, while $g^{(b)}$ is orthogonal to $C_{2,i}$ and to every $g^{(a)}$ with $a\neq b$. Thus $G_i$ is full rank and $\dim_{\F_2}(C_{1,i}/C_{2,i})=K_i$. 

The code $\cQ_i=\operatorname{CSS}(C_{1,i},C_{2,i})$ has length $N_i=sL_i=s(n_i-P_i)$, which proves \cref{eq:correction-free-length}. By \cref{lem:higher-weighted-phase},
$\bigotimes_{p=1}^{N_i}R_h^{(\nu_i')_p}$ acts logically as
\[
 \overline R_h^{\epsilon_1}\otimes\cdots\otimes
 \overline R_h^{\epsilon_{K_i}}.
\]
If $\epsilon_a=2^h-1$, interchange the labels of the two computational basis states of the corresponding logical qubit. Indeed,
\[
 X R_h^{-1}X=\zeta_{2^h}^{-1}R_h,
\]
so this changes the logical $R_h^{-1}$ into $R_h$ up to a global phase.

It remains to bound the distances. Since
$C_{2,i}=\Phi_i(\cA_{2,i})$ and
$C_{1,i}\subset\Phi_i(\cA_{1,i})$,
\cref{lem:puncturing-distances} gives
\begin{align}
 d_X\bigl(\operatorname{CSS}(C_{1,i},C_{2,i})\bigr)
 &\geq d_{\rm in}\bigl(d(\cC_i^{(\tau)})-P_i\bigr)\notag\\
 &\geq d_{\rm in}\bigl(A_\ell^{(\tau)}n_i-P_i\bigr),
 \notag\\
 d_Z\bigl(\operatorname{CSS}(C_{1,i},C_{2,i})\bigr)
 &\geq d((\cC_i^{(\tau)})^\perp)-P_i\notag\\
 &\geq B_\ell^{(\tau)}n_i-P_i.
 \notag
\end{align}
This proves \cref{eq:correction-free-x,eq:correction-free-z}.
Since $P_i\geq K_i$ and $P_i\leq(2^{h-1}-1)K_i$, division by
\cref{eq:correction-free-length} gives
\cref{eq:correction-free-rate}. Also
$A_\ell^{(\tau)}>B_\ell^{(\tau)}$ and $d_{\rm in}\geq1$, so the smaller
distance bound is \cref{eq:correction-free-z}. The function
$(B_\ell^{(\tau)}-x)/(1-x)$ is decreasing for
$0\leq x<B_\ell^{(\tau)}$, which gives
\cref{eq:correction-free-relative-distance}.
\end{proof}

It is also possible to obtain a similar construction in which we apply the same diagonal operator $R_h$ on each physical qubit by repeating coordinates, as we show in the next result. 

\begin{corollary}\label{cor:correction-free-uniform}
Under the hypotheses of \cref{thm:correction-free-general}, there is an explicit family $\{\widetilde{\cQ}_i\}_i$ of CSS codes with parameters $[[\widetilde N_i,K_i,\widetilde\Delta_i]]_2$, with $2^{h-1}$-orthogonal generator matrices and
\begin{equation}\label{eq:uniform-correction-free-length}
N_i\leq\widetilde N_i\leq(2^h-1)N_i
=(2^h-1)s(n_i-P_i).
\end{equation}
The distance bounds in \cref{eq:correction-free-x,eq:correction-free-z} hold with $\cQ_i$ replaced by $\widetilde{\cQ}_i$. In a suitable choice of logical basis, $R_h^{\otimes\widetilde N_i}$ acts as $\overline R_h^{\otimes K_i}$ up to a global phase, with no subsequent correction. Moreover,
\begin{align}
\liminf_{i\to \infty}\frac{K_i}{\widetilde N_i}
&\geq\frac{\kappa}{(2^h-1)s(1-\kappa)},
\notag\\
\liminf_{i\to \infty}\frac{\widetilde\Delta_i}{\widetilde N_i}
&\geq\frac{B_\ell^{(\tau)}-(2^{h-1}-1)\kappa}
{(2^h-1)s\bigl(1-(2^{h-1}-1)\kappa\bigr)}.
\notag
\end{align}
\end{corollary}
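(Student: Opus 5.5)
Starting from the code $\cQ_i=\operatorname{CSS}(C_{1,i},C_{2,i})$ and the weight vector $\nu_i'\in\{1,3,\ldots,2^h-1\}^{N_i}$ of \cref{thm:correction-free-general}, I would replace each physical coordinate $p$ by $(\nu_i')_p$ identical copies. Concretely, let $\psi:\F_2^{N_i}\to\F_2^{\widetilde N_i}$, with $\widetilde N_i=\sum_p(\nu_i')_p$, be the linear repetition map sending $x_p$ to the block $(x_p,\ldots,x_p)$ of length $(\nu_i')_p$. Set $\widetilde C_j=\psi(C_{j,i})$ and $\widetilde{\cQ}_i=\operatorname{CSS}(\widetilde C_1,\widetilde C_2)$. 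Since $1\leq(\nu_i')_p\leq 2^h-1$, we get $N_i\leq\widetilde N_i\leq(2^h-1)N_i$, which is \cref{eq:uniform-correction-free-length}. The map $\psi$ is injective, so $K_i$ is unchanged.

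The key identity is that $\psi$ is a Schur-product homomorphism. We have $\psi(x)\star\psi(y)=\psi(x\star y)$, and $\wt(\psi(z))=\wt_{\nu_i'}(z)$. Hence, for rows $\psi(g^{(a)})$ of the image matrix $\psi(G_i)$, the unweighted overlap of $r$ distinct rows equals the $\nu_i'$-weighted overlap of the original rows. These were shown in the proof of \cref{thm:correction-free-general} to satisfy three conditions:
\begin{itemize}
\item $\equiv\epsilon_a\bmod 2^h$ for a single logical row;
\item $\equiv 0\bmod 2^h$ for a single stabilizer row;
\item $\equiv0\bmod 2^{h-r+1}$ for $2\leq r\leq h$ distinct rows.
\end{itemize}
Applying \cref{lem:higher-weighted-phase} with the all-ones weight vector on $\widetilde N_i$ qubits then shows that $R_h^{\otimes\widetilde N_i}$ acts as $\overline R_h^{\epsilon_1}\otimes\cdots\otimes\overline R_h^{\epsilon_{K_i}}$. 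The same logical relabeling via $XR_h^{-1}X=\zeta_{2^h}^{-1}R_h$ gives $\overline R_h^{\otimes K_i}$ up to global phase. The $2^{h-1}$-orthogonality also transfers. Unweighted overlaps of the $\psi$-rows are $\nu_i'$-weighted overlaps, and their parities equal the unweighted parities because every $(\nu_i')_p$ is odd. Thus they are even for $2\leq r\leq 2^{h-1}$, and parities of single rows are preserved.

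For distances, the $X$-distance only increases: $\wt(\psi(x))\geq\wt(x)$ for every $x\in C_{1,i}\setminus C_{2,i}$.

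The $Z$-distance is the step needing care, since $\widetilde C_2^\perp$ contains many new low-weight words. Any $y$ supported in a single repetition block and of even weight there is orthogonal to all of $\psi(\F_2^{N_i})$. I would handle this by defining the folding map $\sigma:\F_2^{\widetilde N_i}\to\F_2^{N_i}$, which sums each block. It satisfies $y\cdot\psi(x)=\sigma(y)\cdot x$ and $\wt(\sigma(y))\leq\wt(y)$. Then for $y\in\widetilde C_2^\perp\setminus\widetilde C_1^\perp$ we get $\sigma(y)\in C_{2,i}^\perp\setminus C_{1,i}^\perp$. Hence $\wt(y)\geq d_Z(\cQ_i)\geq B_\ell^{(\tau)}n_i-P_i$, so both bounds \cref{eq:correction-free-x,eq:correction-free-z} persist.

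Finally, the asymptotic rate and relative distance follow by dividing by $\widetilde N_i\leq(2^h-1)s(n_i-P_i)$ and repeating the limit computation from \cref{thm:correction-free-general}. That computation uses $K_i\leq P_i\leq(2^{h-1}-1)K_i$ and the monotonicity of $(B_\ell^{(\tau)}-x)/(1-x)$.
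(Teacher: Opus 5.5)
Your proposal is correct and follows the paper's proof essentially step for step. It uses the odd repetition map $\Rep_{\nu_i'}$, odd repetition preserving parities for $2^{h-1}$-orthogonality, $\wt(\Rep_{\nu_i'}(x))\geq\wt(x)$ for $d_X$, and the block-folding map with $z\cdot\Rep_{\nu_i'}(x)=\Sigma_{\nu_i'}(z)\cdot x$ for $d_Z$. The only cosmetic difference is in the logical action: you re-verify the hypotheses of \cref{lem:higher-weighted-phase} with unit weights on the repeated code, whereas the paper observes that $R_h^{\otimes\widetilde N_i}$ gives $\lvert\Rep_{\nu_i'}(x)\rangle$ the same phase $\zeta_{2^h}^{\wt_{\nu_i'}(x)}$ as the weighted gate gives $\lvert x\rangle$, and that repetition maps cosets to cosets. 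Both arguments are valid.
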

\begin{proof}
For $\nu\in\{1,3,\ldots,2^h-1\}^N$, define
\[
\Rep_\nu:\F_2^N\longrightarrow\F_2^{\sum_{j=1}^N\nu_j},
\;
\Rep_\nu(x_1,\ldots,x_N)
=\bigl(\underbrace{x_1,\ldots,x_1}_{\nu_1\text{ times}},
\ldots,\underbrace{x_N,\ldots,x_N}_{\nu_N\text{ times}}\bigr).
\]
Let $\nu'_i$ be as in the proof of \cref{thm:correction-free-general}. Apply $\Rep_{\nu_i'}$ to both $C_{1,i}$ and $C_{2,i}$ and write
\[
\widetilde C_{r,i}=\Rep_{\nu_i'}(C_{r,i}),
\; r=1,2.
\]
Since $\Rep_{\nu_i'}$ is injective,
$\widetilde{\cQ}_i=\operatorname{CSS}(\widetilde C_{1,i},\widetilde C_{2,i})$
still encodes $K_i$ qubits. Its length is
\[
\widetilde N_i
=\sum_{j=1}^{N_i}(\nu_i')_j
\leq(2^h-1)N_i
=(2^h-1)s(n_i-P_i),
\]
and $\widetilde N_i\geq N_i$, which proves
\cref{eq:uniform-correction-free-length}. Moreover, for every
$x\in\F_2^{N_i}$,
\[
\wt(\Rep_{\nu_i'}(x))=\wt_{\nu_i'}(x),
\;
R_h^{\otimes\widetilde N_i}|\Rep_{\nu_i'}(x)\rangle
=\zeta_{2^h}^{\wt_{\nu_i'}(x)}|\Rep_{\nu_i'}(x)\rangle.
\]
Hence $R_h^{\otimes\widetilde N_i}$ on the repeated coordinates has exactly the same phase as $\bigotimes_{p=1}^{N_i}R_h^{(\nu_i')_p}$ before repetition. Repetition sends each coset of $C_{2,i}$ in $C_{1,i}$ to the corresponding coset of $\widetilde C_{2,i}$ in $\widetilde C_{1,i}$, so the logical action is unchanged. Since every coordinate of $\nu_i'$ is odd, repetition preserves the parity of each row and of each overlap. In particular, the repeated generator matrix is still $2^{h-1}$-orthogonal.

Finally, odd repetition cannot decrease either CSS distance. For the $X$-distance this follows from
\[
\wt(\Rep_{\nu_i'}(x))\geq\wt(x).
\]
For the $Z$-distance, for $z\in \F_2^{\widetilde N_i}$ (which we can write as $z=(z_{j,a})$, $1\leq j \leq N_i$, $1\leq a \leq (\nu'_i)_j$), we define the linear map 
$$
\Sigma_{\nu_i'}:\F_2^{\widetilde N_i}\longrightarrow \F_2^{N_i}, \; \Sigma_{\nu_i'}(z)_j
=\sum_{a=1}^{(\nu_i')_j}z_{j,a}.
$$
For $x\in \F_2^{N_i}$, it satisfies
\(
z\cdot \Rep_{\nu_i'}(x)
=\Sigma_{\nu_i'}(z)\cdot x\) and \(
\wt(\Sigma_{\nu_i'}(z))\leq\wt(z).
\)
Thus
\[
z\in\widetilde C_{2,i}^{\perp}
\setminus\widetilde C_{1,i}^{\perp}
\;\Longrightarrow\;
\Sigma_{\nu_i'}(z)
\in C_{2,i}^{\perp}\setminus C_{1,i}^{\perp},
\]
and the distance bounds follow. The rate and relative-distance bounds follow from $\widetilde N_i\leq(2^h-1)N_i$,
$\widetilde\Delta_i\geq\Delta_i$, and
\cref{eq:correction-free-rate,eq:correction-free-relative-distance}.
\end{proof}

\begin{corollary}\label{thm:higher-correction-free}
For every fixed $h\geq3$, there is an explicit asymptotically good family of binary CSS codes with parameters $[[\widetilde N_i,K_i,\widetilde \Delta_i]]_2$, with $2^{h-1}$-orthogonal generator matrices, and for which the gate $R_h^{\otimes \widetilde N_i}$ acts logically as $\overline R_h^{\otimes K_i}$, up to a global phase, with no subsequent correction.
\end{corollary}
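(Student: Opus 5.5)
The plan is to instantiate \cref{cor:correction-free-uniform} with concrete parameters, exactly as \cref{thm:higher-with-correction} instantiated \cref{thm:general-asymptotic}. Following the last row of the table in \cref{rem:four-cases}, I would take $j=2^{h-1}$ binary factors and $\tau=2^{2^{h-1}-1}$. I would then choose an even $m$ large enough that $\ell=2^{m/2}\geq 2\tau$; concretely, any even $m\geq 2^h$ works, since $\ell\geq 2^{2^{h-1}}=2\tau$. Because $h\geq3$ gives $\tau\geq 8\geq 4$, the family $\{\cC_i^{(\tau)}\}_i$ of \cref{prop:outer} is available over $\F_{2^m}$, and $\tau$ is a power of $2$ with $4\leq\tau\leq\ell/2$. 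This fixes the hypotheses on $m$, $\ell$ and $\tau$ needed in \cref{thm:correction-free-general}.

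Next, I apply \cref{prop:higher-map} with $j=2^{h-1}$. This produces an additive injective map $\phi:\F_{2^m}\to\F_2^s$, with $s$ depending only on $m$ and $h$, satisfying \cref{eq:higher-map-identity} with $H_{2^{h-1}}$. The polynomial $H_{2^{h-1}}$ is homogeneous of degree $2^{2^{h-1}-1}=\tau$. Hence the identity and homogeneity condition of \cref{lem:schur-transfer} hold with $r=2^{h-1}$ and this $\tau$, which is exactly what \cref{thm:correction-free-general} requires. The remaining hypothesis is an element $u$ with $\wt(\phi(u))$ odd. By \cref{eq:higher-map-parity}, any $u$ with $\Tr(u)=1$ works, and such $u$ exists because the trace is a nonzero functional.

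Finally, I fix any $\kappa$ with $0<\kappa<B_\ell^{(\tau)}/(2^{h-1}-1)$. This interval is nonempty because $B_\ell^{(\tau)}=\frac{\ell-(2\tau-1)}{\tau(\ell-1)}>0$ when $\ell\geq2\tau$. \cref{cor:correction-free-uniform} then yields the family $\widetilde{\cQ}_i$, with $2^{h-1}$-orthogonal generator matrices, on which $R_h^{\otimes\widetilde N_i}$ acts as $\overline R_h^{\otimes K_i}$ up to a global phase with no correction. It remains to check asymptotic goodness. Since $\widetilde N_i\geq N_i=s(n_i-P_i)\geq s(1-(2^{h-1}-1)\kappa)n_i$ and $n_i\to\infty$, we get $\widetilde N_i\to\infty$. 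The liminf bounds of the corollary are positive constants: the rate bound is $\kappa/((2^h-1)s(1-\kappa))>0$, and the relative distance bound has numerator $B_\ell^{(\tau)}-(2^{h-1}-1)\kappa>0$ by the choice of $\kappa$.

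Explicitness follows because the tower codes are explicit. The map $\phi$ from \cref{prop:higher-map} is computable, as it is constructed in the appendix. The vectors $\nu_i$ come from \cref{lem:higher-odd-completion}, which is computable, and the grouping, puncturing and repetition steps are deterministic. I expect no real obstacle. The only care points are ensuring $\ell\geq2\tau$ for this doubly exponential $\tau$ and confirming that the degree of $H_{2^{h-1}}$ matches $\tau$; both are handled by the choice of $m$ above.
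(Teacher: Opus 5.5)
Your proposal is correct and takes essentially the same approach as the paper: take $j=2^{h-1}$, $\tau=2^{j-1}$, and an even $m\geq 2j=2^h$ so that $\ell\geq 2\tau$. Then use the map from \cref{prop:higher-map}, whose trace-parity identity supplies the odd-weight element, apply \cref{thm:correction-free-general} with $0<\kappa<B_\ell^{(\tau)}/(2^{h-1}-1)$, and finish with \cref{cor:correction-free-uniform}. Your checks that $\widetilde N_i\to\infty$, that the rate and relative-distance bounds are positive, and that each step is explicit simply spell out what the paper's proof leaves implicit.
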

\begin{proof}
Take $j=2^{h-1}$, $\tau=2^{j-1}$, and an even $m\geq2j$, so that $\ell=2^{m/2}\geq2\tau$. A map from \cref{prop:higher-map} has the required identity and has an element of odd image weight by \cref{eq:higher-map-parity}. Apply \cref{thm:correction-free-general} with any $0<\kappa<B_\ell^{(\tau)}/(2^{h-1}-1)$, and then apply \cref{cor:correction-free-uniform}.
\end{proof}

\begin{remark}\label{rem:addressable-rotations}
Applying \cref{thm:higher-correction-free} at level $h+1$, together with the standard hierarchy-lowering argument, gives asymptotically good binary CSS codes for which logical $R_h$ rotations are transversally addressable, without a correction; see \cite[Rem.~1.3]{wills_csst_decreasing_improved} and the references therein, and \cite{kashyap_z_rotations} for a general framework for addressable logical diagonal rotations. The parameters are those of our construction at level $h+1$.
\end{remark}

The constants obtained by applying \cref{cor:correction-free-uniform} with the maps from \cref{prop:higher-map} deteriorate quickly with $h$, but remain positive for every fixed value of $h$. For the $T$ gate, the shorter explicit maps in the next section give better parameters.

\section{Explicit constructions for the \texorpdfstring{$T$}{T} gate}\label{sec:families}
We now specialize \cref{thm:general-asymptotic,thm:correction-free-general} to $h=3$. The two arguments have already been proved; it remains to give the explicit maps and compatible sets, and substitute their parameters. All the computations from this section have been checked with SageMath \cite{sagemath}.

\subsection{The cubic identity and compatible sets}\label{ss:T-identities}
We first consider the symmetric $\F_2$-trilinear form obtained from \cref{eq:higher-map-identity} with $j=3$:
\begin{equation*}\label{eq:theta-general}
 \Theta_{\F_{2^m}}(a,b,c)
 =\Tr\bigl(H_3(a,b,c)\bigr)
 =\Tr\bigl(abc(a+b+c)\bigr).
\end{equation*}
Note that  $\Theta_{\F_{2^m}}$ satisfies the identity
\begin{equation*}\label{eq:theta-collision}
 \Theta_{\F_{2^m}}(a,a,b)=\Theta_{\F_{2^m}}(a,b,b)=\Tr(ab).
\end{equation*}
For an additive map $\phi:\F_{2^m}\to\F_2^s$, consider the $\F_2$-trilinear form
\[
 \mathcal{T}_\phi(a,b,c)=
 \sum_{j=1}^s\phi_j(a)\phi_j(b)\phi_j(c).
\]
For the construction with a Clifford correction, we use injective maps
$\phi$ such that
\begin{align}
 \phi(a)\cdot\phi(b)&=\Tr(ab),
 \label{eq:phi-bilinear}\\
 \mathcal{T}_\phi(a,b,c)&=\Theta_{\F_{2^m}}(a,b,c)
 \label{eq:phi-trilinear}
\end{align}
for all $a,b,c\in\F_{2^m}$.

For $h=3$ and a map $\phi$ satisfying
\cref{eq:phi-bilinear,eq:phi-trilinear}, the conditions in
\cref{def:frame} become
\begin{equation}\label{eq:frame}
\operatorname{Tr}(u_au_b)=\delta_{ab},\text{ for all }a,b,
\;
\Theta_{\F_{2^m}}(u_a,u_b,u_c)=0,\text{ for pairwise distinct }a,b,c.
\end{equation}
In this section, a compatible set always means one satisfying \cref{eq:frame}. For $\tau=4$, write $\cC_i=\cC_i^{(4)}$, $A_\ell=A_\ell^{(4)}$, and $B_\ell=B_\ell^{(4)}$. Finding a map whose image is a code of small length (small $s$) or large minimum distance can improve the resulting parameters of \cref{thm:general-asymptotic,thm:correction-free-general}. We now give the explicit maps used for the $T$-gate constructions.

\subsection{A construction over \texorpdfstring{$\F_{64}$}{F64}}\label{ss:explicit-maps}

We first give the shorter map over $\F_{64}$. For $m=6$, write
\[
 \F_{64}=\F_2[\alpha]/(\alpha^6+\alpha+1),
 \;
 a=x_0+x_1\alpha+\cdots+x_5\alpha^5.
\]

\begin{proposition}\label{prop:inner64}
Let $M\in\F_2^{6\times14}$ be
\[
\renewcommand{\arraystretch}{0.92}
M=\left(
\begin{array}{rrrrrrrrrrrrrr}
0&0&1&1&1&1&0&1&0&0&1&0&0&0\\
1&0&0&0&1&1&0&0&0&0&0&1&0&0\\
0&0&1&0&1&0&1&1&0&1&1&1&0&1\\
0&1&1&1&1&0&0&0&1&1&1&1&0&0\\
0&0&0&1&1&0&0&0&0&0&0&0&1&1\\
0&0&0&0&0&1&1&1&1&1&1&1&1&1
\end{array}\right).
\]
Define $\phi_{64}:\F_{64}\to\F_2^{14}$ by $\phi_{64}(a)=(x_0,\ldots,x_5)M$. Then $\mathcal{T}_{\phi_{64}}=\Theta_{\F_{64}}$, and its image is a binary $[14,6,4]$ code with weight enumerator
\begin{equation}\label{eq:weight-enumerator64}
 1+3z^4+10z^5+11z^6+12z^7+12z^8+10z^9+5z^{10}.
\end{equation}
Moreover, the following is a compatible set of size four:
\begin{align*}
 u_1&=\alpha^5,\\
 u_2&=1+\alpha^3+\alpha^5,\\
 u_3&=1+\alpha^2+\alpha^4+\alpha^5,\\
 u_4&=1+\alpha+\alpha^2+\alpha^4+\alpha^5.
\end{align*}
\end{proposition}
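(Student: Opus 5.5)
This is a finite verification over $\F_{64}$, and I would reduce every claim to a check on the six basis elements $1,\alpha,\ldots,\alpha^5$ (or on a handful of field elements), so that it can be done by hand in principle and is in practice confirmed by the SageMath computation.

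\emph{Step 1: the trilinear identity.} Both sides of $\mathcal T_{\phi_{64}}=\Theta_{\F_{64}}$ are symmetric and $\F_2$-trilinear. For $\mathcal T_{\phi_{64}}$ this holds because $\phi_{64}$ is linear. For $\Theta_{\F_{64}}$, recall from \cref{eq:canonical-overlap-product} that $H_3$ is additive in each variable in characteristic two, and $\Tr$ is additive. Hence it suffices to compare them on basis triples $(\alpha^i,\alpha^j,\alpha^k)$ with $0\le i\le j\le k\le5$, which gives $56$ cases. On the left we compute $\sum_p M_{ip}M_{jp}M_{kp}\bmod 2$, the parity of the overlap of three rows of $M$ (with repetitions allowed). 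On the right we compute $\Tr(\alpha^{i+j+k}(\alpha^i+\alpha^j+\alpha^k))$, reducing with $\alpha^6=\alpha+1$.

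The repeated-index cases are controlled by the collision identity. For $i=j$, it reduces to $\Tr(\alpha^{i+k})$ versus the inner product of rows $i$ and $k$. So we first check the bilinear identity $MM^{\mathsf T}=(\Tr(\alpha^{i+j}))_{i,j}$, the trace-form Gram matrix. This also yields \cref{eq:phi-bilinear}. The remaining $20$ triples with distinct indices are then checked directly.

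\emph{Step 2: the image code.} Injectivity, i.e. dimension $6$, follows from $M$ having rank $6$; a row reduction shows this. The weight enumerator \cref{eq:weight-enumerator64} is obtained by enumerating the $64$ combinations of rows. From it we read off $d=4$. As a consistency check, the odd-weight words number $10+12+10=32$, matching $\wt(\phi(x))\equiv\Tr(x)$, which follows from the identity with $a=b=c=x$.

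\emph{Step 3: the compatible set.} By \cref{eq:frame}, we need two things. First, $\Tr(u_au_b)=\delta_{ab}$; writing each $u_a$ as a coordinate vector $v_a$, this means $v_a\,MM^{\mathsf T}v_b^{\mathsf T}=\delta_{ab}$. Second, $\Theta(u_a,u_b,u_c)=0$ for the four triples of distinct indices, which amounts to checking that the overlap of $\phi(u_a),\phi(u_b),\phi(u_c)$ has even weight.

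The only real obstacle is bookkeeping: the $56$ basis triples and the weight enumeration. No conceptual difficulty arises, so I would state the reduction to basis elements and delegate the arithmetic to the computer check.
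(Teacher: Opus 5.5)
Your proposal is correct and follows the paper's proof. Both reduce the $\F_2$-multilinear identities to the polynomial basis $1,\alpha,\ldots,\alpha^5$ (reducing with $\alpha^6=\alpha+1$), enumerate the $64$ codewords to get the weight enumerator, and check the trace-pairing matrix and the four distinct-triple conditions for $u_1,\ldots,u_4$ by finite calculation; your use of the collision identity $\Theta_{\F_{64}}(a,a,b)=\Tr(ab)$ to handle all $36$ repeated-index triples through the single Gram check $MM^{\mathsf T}=(\Tr(\alpha^{i+j}))_{i,j}$ is a small, valid economy on top of that.
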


\begin{proof}
Both the identities in \cref{eq:phi-bilinear,eq:phi-trilinear} are multilinear over $\F_2$, so it is enough to evaluate them on $1,\alpha,\ldots,\alpha^5$, reducing products by $\alpha^6=\alpha+1$. Enumerating directly gives \cref{eq:weight-enumerator64}. The same finite calculation gives
\[
 \bigl(\Tr(u_au_b)\bigr)_{a,b=1}^4=I_4
\]
and $\Theta_{\F_{64}}(u_a,u_b,u_c)=0$ for the four triples of distinct elements in the compatible set.
\end{proof}

For $\ell=8$, the bounds in \cref{prop:outer} are respectively $5/28$, $19/28$, and $1/28$.

\begin{corollary}\label{thm:F64}
Fix $0<\kappa<1/28$ and put $P_i=\lfloor\kappa n_i\rfloor$. There is an explicit family $\{\cQ_i\}_i$ with parameters $[[N_i,K_i,\Delta_i]]_2$, $N_i=14(n_i-P_i)$, $K_i=4P_i$, for which
\begin{align*}
 d_X(\cQ_i)&\geq4\left(\frac{19}{28}n_i-P_i\right),\\
 d_Z(\cQ_i)&\geq\frac1{28}n_i-P_i,
\end{align*}
and
\begin{align*}
 \liminf_{i\to \infty}\frac{K_i}{N_i}
 &\geq\frac{4\kappa}{14(1-\kappa)},\\
 \liminf_{i\to \infty}\frac{\Delta_i}{N_i}
 &\geq\frac{1/28-\kappa}{14(1-\kappa)}.
\end{align*}
For each $i$, a diagonal Clifford correction $U_i$ can be computed from the defining triorthogonal matrix of $\cQ_i$ such that $U_iT^{\otimes N_i}$ acts logically as $\overline T^{\otimes K_i}$.
\end{corollary}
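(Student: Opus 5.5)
This is a direct specialization of \cref{thm:general-asymptotic} with $h=3$, $m=6$, $\ell=8$, $\tau=4$ and $\phi=\phi_{64}$. The only work is to check that $\phi_{64}$ and the set $u_1,\dots,u_4$ meet that theorem's hypotheses, and then to substitute the constants.

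First, the parameter constraints. We have $m=6\geq 6$ even and $\ell=2^{3}=8$. The value $\tau=4$ is a power of $2$ with $4\leq\tau\leq\ell/2=4$, so \cref{prop:outer} applies. Substituting gives
\[
\rho_8^{(4)}=5/28,\qquad A_8^{(4)}=19/28,\qquad B_8^{(4)}=1/28,
\]
as recorded after \cref{prop:inner64}. Hence the admissible range $0<\kappa<B_\ell^{(\tau)}$ is exactly $0<\kappa<1/28$.

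Next, the map. By \cref{prop:inner64}, $\phi_{64}$ is additive, being $\F_2$-linear by a matrix. It is injective, since its image is a $[14,6,4]$ code, so $M$ has rank $6$. This also gives $s=14$ and $d_{\rm in}=4$. For the hypothesis of \cref{lem:schur-transfer} with $r=h=3$, I use $\mathcal T_{\phi_{64}}=\Theta_{\F_{64}}=\Tr(H_3(a,b,c))$. Here $H_3(a,b,c)=abc(a+b+c)$ is homogeneous of degree $4=\tau$, so the identity holds with $P=H_3$.

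Then the compatible set. \Cref{def:frame} with $h=3$ requires $\phi(u_a)\cdot\phi(u_b)=\delta_{ab}$ and even triple overlaps for distinct indices. \Cref{prop:inner64} only states that the $u_a$ satisfy \cref{eq:frame}, which is phrased in terms of traces. So I must also confirm that $\phi_{64}$ satisfies the bilinear identity \cref{eq:phi-bilinear}. I expect this to be the only real (and minor) obstacle. It follows from the trilinear identity by setting $a=b$: on one side $\mathcal T_\phi(a,a,c)=\phi(a)\cdot\phi(c)$, since binary entries are idempotent. On the other side $\Theta(a,a,c)=\Tr(ac)$ by \cref{eq:theta-collision}. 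Polarization is not even needed, because the identity is obtained directly for all $a,c$. Consequently the conditions of \cref{def:frame} reduce to \cref{eq:frame}, and the four $u_a$ form a compatible set with $t=4$.

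Finally, I apply \cref{thm:general-asymptotic} with $s=14$, $t=4$, $d_{\rm in}=4$, $A_\ell=19/28$ and $B_\ell=1/28$. This yields $N_i=14(n_i-P_i)$, $K_i=4P_i$, the two distance bounds, and the liminf bounds $\frac{4\kappa}{14(1-\kappa)}$ and $\frac{1/28-\kappa}{14(1-\kappa)}$. The theorem also gives the correction $U_i\in\mathfrak C_{2}$, which is a diagonal Clifford, such that $U_iT^{\otimes N_i}=\overline T^{\otimes K_i}$. The generator matrix is $3$-orthogonal, i.e.\ triorthogonal by \cref{def:binary-h-orthogonality}.
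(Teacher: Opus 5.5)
Your proposal is correct and follows the paper's proof, which is exactly the one-line application of \cref{thm:general-asymptotic} with $h=3$, $\tau=4$, $m=6$, $\ell=8$, $s=14$, $d_{\rm in}=4$, $t=4$, using the map and compatible set of \cref{prop:inner64}. Your extra step derives $\phi_{64}(a)\cdot\phi_{64}(c)=\Tr(ac)$ from the trilinear identity by setting $a=b$, since $\Theta_{\F_{64}}(a,a,c)=\Tr(a^2c^2)=\Tr(ac)$; this correctly supplies the link, left implicit in the paper's statement of \cref{prop:inner64}, between the trace conditions checked there and the compatible-set conditions of \cref{def:frame}.
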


\begin{proof}
Apply \cref{thm:general-asymptotic} with $h=3$, $\tau=4$, $m=6$, $\ell=8$, $s=14$, $d_{\rm in}=4$, $t=4$, using the inner map and compatible set from \cref{prop:inner64}.
\end{proof}

\subsection{A construction over \texorpdfstring{$\F_{256}$}{F256}}\label{ss:explicit256}

For the second map, take $m=8$ and write
\[
 \F_{256}
 =\F_2[\alpha]/(\alpha^8+\alpha^4+\alpha^3+\alpha+1).
\]
For \(v=(v_0,\ldots,v_7)\in\F_2^8\), define $\ell_v\left(\sum_{i=0}^7 a_i\alpha^i\right)=\sum_{i=0}^7 v_i a_i$. We represent \(v\) by the binary string \(v_7\cdots v_0\).

\begin{proposition}\label{prop:inner256}
Let \(\phi_{256}:\F_{256}\to\F_2^{25}\) have as its coordinate
functionals the \(\ell_v\) indexed by the following binary strings
\(v_7\cdots v_0\):
\[
\begin{array}{rrrrr}
00000001&00000011&00000110&00010000&00101010\\
00110011&01000100&01011101&01011110&01100000\\
01101110&01110010&01110101&01111000&01111010\\
01111100&10000011&10000111&10001111&10010000\\
10010111&10101111&10110100&11000111&11001101.
\end{array}
\]
Then $\mathcal{T}_{\phi_{256}}=\Theta_{\F_{256}}$, and its image is a binary $[25,8,7]$
code with weight enumerator
\begin{align}
1&+z^7+8z^8+24z^9+19z^{10}+31z^{11}+48z^{12}+32z^{13}\notag\\
 &+34z^{14}+31z^{15}+15z^{16}+8z^{17}+3z^{18}+z^{19}.
\end{align}
The five field elements \(u=\sum_{i=0}^7u_i\alpha^i\) whose binary
coefficient strings \(u_7\cdots u_0\) are
\begin{equation}\label{eq:frame256}
11011111,\quad11011010,\quad11010000,\quad
11000110,\quad01110011
\end{equation}
form a compatible set. 
\end{proposition}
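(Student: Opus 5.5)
This is a finite verification, so the plan mirrors the proof of \cref{prop:inner64}: reduce every claim to finitely many checks on basis elements and then run them (by hand structure plus SageMath).

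\textbf{Step 1: the trilinear identity.} Both sides of $\mathcal{T}_{\phi_{256}}=\Theta_{\F_{256}}$ are symmetric $\F_2$-trilinear forms on $\F_{256}$.
\begin{itemize}
\item The left side is trilinear because each coordinate $\ell_v$ is $\F_2$-linear.
\item The right side is trilinear because $H_3$ is additive in each variable in characteristic two, and $\Tr$ is additive.
\end{itemize}
So it suffices to compare the two forms on triples $(\alpha^i,\alpha^j,\alpha^k)$ with $0\le i\le j\le k\le 7$, which gives $\binom{10}{3}=120$ triples.
\begin{itemize}
\item The left side at such a triple is the number of the $25$ strings $v$ with $v_i=v_j=v_k=1$, taken mod $2$. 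Repeated indices collapse: for example, $(\alpha^i,\alpha^i,\alpha^k)$ counts strings with $v_i=v_k=1$.
\item The right side is $\Tr\bigl(\alpha^{i+j+k}(\alpha^i+\alpha^j+\alpha^k)\bigr)$, computed from the traces $\Tr(\alpha^e)$, $0\le e\le 28$. These are obtained by reducing with $\alpha^8=\alpha^4+\alpha^3+\alpha+1$ and using linearity of the trace together with $\Tr(\alpha^e)=\Tr(\alpha^{2e})$.
\end{itemize}
Since $\Theta(a,a,b)=\Tr(ab)$, the diagonal cases also give \cref{eq:phi-bilinear}: $\phi_{256}(a)\cdot\phi_{256}(b)=\Tr(ab)$.

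\textbf{Step 2: the image code.} The generator matrix of the image is the $8\times25$ matrix whose column indexed by $v$ is $v$ read as $(v_0,\dots,v_7)$.
\begin{itemize}
\item Injectivity of $\phi_{256}$ (dimension $8$) holds because the listed strings span $\F_2^8$; for instance, the block $00000001,00000011,00000110,00010000,\ldots$ already contains enough to reach rank $8$.
\item Enumerating all $256$ codewords gives the stated weight enumerator.
\item The enumerator then shows the minimum distance is $7$.
\end{itemize}
A useful consistency check is parity. By \cref{eq:phi-bilinear} with $a=b$, we have $\wt(\phi(a))\equiv\Tr(a^2)=\Tr(a)\pmod 2$. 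The odd-weight coefficients sum to $1+24+31+32+31+8+1=128$, exactly half of $256$, matching the number of trace-one elements.

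\textbf{Step 3: the compatible set.} Using \cref{eq:frame}, this reduces to two field computations for the five given $u_a$:
\begin{itemize}
\item the $5\times5$ Gram matrix $\bigl(\Tr(u_au_b)\bigr)_{a,b}$ must equal $I_5$;
\item the ten values $\Theta_{\F_{256}}(u_a,u_b,u_c)$ over distinct triples must vanish.
\end{itemize}
Trilinearity again allows working coordinate-wise: write each $u_a$ as a bit vector, form the trilinear tensor from Step 1, and contract.

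\textbf{Main obstacle.} Nothing here is conceptually hard. The risk is entirely arithmetic: building the trace table correctly under the modulus $\alpha^8+\alpha^4+\alpha^3+\alpha+1$, and keeping the bit ordering $v_7\cdots v_0$ straight. I would therefore do all of Steps 1–3 in SageMath, as stated at the start of \cref{sec:families}, rather than by hand. I would also cross-check Step 1 against the parity identity above and against the diagonal cases of \cref{eq:phi-bilinear}.
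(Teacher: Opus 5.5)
Your proposal is correct and follows essentially the paper's proof: check the trilinear identity on polynomial-basis triples, enumerate the $256$ inputs for the weight enumerator, and verify that the $5\times5$ trace Gram matrix of the $u_a$ is $I_5$ and that $\Theta_{\F_{256}}$ vanishes on the ten distinct triples. The differences are minor bookkeeping: you use symmetry to reduce the paper's $8^3$ checks to $120$, and you explicitly obtain \cref{eq:phi-bilinear} from $\Theta_{\F_{256}}(a,a,b)=\Tr(ab)$, which is exactly what justifies reducing the compatible-set conditions to \cref{eq:frame}.
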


\begin{proof}
The trilinear identity is checked on the $8^3$ triples of polynomial-basis
elements. Enumerating the $256$ inputs gives the stated weight enumerator.
The $5\times5$ trace-pairing matrix of the elements in
\cref{eq:frame256} is the identity, and the cubic tensor vanishes on all ten
triples of distinct elements.
\end{proof}

For $\ell=16$, the bounds in \cref{prop:outer} are respectively $13/60$, $43/60$, and $3/20$. The substantial increase in the last quantity is the principal source of the improved asymptotic constant over $\F_{256}$ that we obtain below.

\begin{corollary}\label{thm:F256}
Fix $0<\kappa<3/20$ and put $P_i=\lfloor\kappa n_i\rfloor$. There is an explicit family $\{\cQ_i\}_i$, with parameters $[[N_i,K_i,\Delta_i]]_2$, $N_i=25(n_i-P_i)$, $K_i=5P_i$, for which
\begin{align*}
 d_X(\cQ_i)&\geq7\left(\frac{43}{60}n_i-P_i\right),\\
 d_Z(\cQ_i)&\geq\frac3{20}n_i-P_i,
\end{align*}
and
\begin{align*}
 \liminf_{i\to \infty}\frac{K_i}{N_i}
 &\geq\frac{5\kappa}{25(1-\kappa)},\\
 \liminf_{i\to \infty}\frac{\Delta_i}{N_i}
 &\geq\frac{3/20-\kappa}{25(1-\kappa)}.
\end{align*}
For each $i$, a diagonal Clifford correction $U_i$ can be computed from the defining triorthogonal matrix of $\cQ_i$ such that $U_iT^{\otimes N_i}$ acts logically as $\overline T^{\otimes K_i}$.
\end{corollary}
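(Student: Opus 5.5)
The plan is to apply \cref{thm:general-asymptotic} directly, exactly as in the proof of \cref{thm:F64}, with the parameters $h=3$, $\tau=4$, $m=8$, $\ell=16$, $s=25$, $d_{\rm in}=7$ and $t=5$. The inner map and the compatible set are those of \cref{prop:inner256}.

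First we check the hypotheses of that theorem. The condition $m\geq6$ even holds, and $\tau=4$ is a power of $2$ with $4\leq\tau\leq\ell/2=8$.

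Next, $\phi_{256}$ is additive, since it is given by linear functionals. It is injective because its image is an $[25,8,7]$ code, so it has dimension $8=m$.

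The identity in \cref{lem:schur-transfer} with $r=h=3$ is $\mathcal T_{\phi_{256}}=\Theta_{\F_{256}}=\Tr(H_3)$. This is supplied by \cref{prop:inner256}, and $H_3$ is homogeneous of degree $4=\tau$.

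The compatible-set conditions of \cref{def:frame} for $h=3$ need $\phi(u_a)\cdot\phi(u_b)=\delta_{ab}$. From the trilinear identity, $\phi(a)\cdot\phi(b)=\mathcal T_\phi(a,a,b)=\Theta(a,a,b)=\Tr(ab)$, using $\phi_p(a)^2=\phi_p(a)$. So this condition is exactly the identity trace-pairing matrix verified in \cref{prop:inner256}. The triple condition is the vanishing of $\Theta$ on distinct triples, which is also verified there.

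The minimum distance $d_{\rm in}=7$ is read off from the weight enumerator.

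Then we substitute $\ell=16$ into \cref{prop:outer}, which gives $A_\ell^{(4)}=43/60$ and $B_\ell^{(4)}=3/20$. With $0<\kappa<3/20=B_\ell^{(4)}$ and $P_i=\lfloor\kappa n_i\rfloor$, \cref{thm:general-asymptotic} yields the following:
\begin{itemize}
\item $N_i=25(n_i-P_i)$ and $K_i=5P_i$;
\item $d_X\geq 7(\tfrac{43}{60}n_i-P_i)$ and $d_Z\geq\tfrac{3}{20}n_i-P_i$;
\item $K_i/N_i\to$ at least $5\kappa/(25(1-\kappa))$, and $\Delta_i/N_i$ at least $(3/20-\kappa)/(25(1-\kappa))$.
\end{itemize}
The logical statement also follows from that theorem. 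For $h=3$, the correction lies in $\mathfrak C_2$, so it is a diagonal Clifford correction. A $3$-orthogonal matrix is triorthogonal, and $R_3=T$.

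No step is a real obstacle. The only point needing care is that the pairwise condition of the compatible set follows from the trilinear identity, and this is the short computation above.
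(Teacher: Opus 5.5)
Your proposal is correct and matches the paper's proof: it applies \cref{thm:general-asymptotic} directly with $h=3$, $\tau=4$, $m=8$, $\ell=16$, $s=25$, $d_{\rm in}=7$, $t=5$, using the map and compatible set of \cref{prop:inner256}. Your check that the pairwise compatibility condition follows from the trilinear identity, via $\phi_p(a)^2=\phi_p(a)$ and $\Theta_{\F_{256}}(a,a,b)=\Tr(ab)$, is correct and fills in a detail the paper leaves implicit.
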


\begin{proof}
Apply \cref{thm:general-asymptotic} with $h=3$, $\tau=4$, and $m=8$, $\ell=16$, $s=25$, $d_{\rm in}=7$, $t=5$, using the inner map and compatible set from
\cref{prop:inner256}.
\end{proof}

\subsection{Further extension fields and limitations}\label{ss:further-fields}
For a fixed extension field and fixed $s,d_{\rm in},t$, the
construction gives
\[
 \mathcal{R}=\frac{t\kappa}{s(1-\kappa)},
 \;
 \delta_{\rm lb}
 =\frac{B_\ell-\kappa}{s(1-\kappa)},
 \; 0<\kappa<B_\ell.
\]
Eliminating
$\kappa$ gives the rate--distance tradeoff
\begin{equation*}\label{eq:general-tradeoff}
 \delta_{\rm lb}(\mathcal{R})
 =
 \frac{B_\ell}{s}-\frac{1-B_\ell}{t}\mathcal{R},
 \;
 0<\mathcal{R}<\frac{tB_\ell}{s(1-B_\ell)}.
\end{equation*}
For example, for the construction over $\F_{64}$, we obtain \[
\delta_{\rm lb}(\mathcal{R})
=\frac1{392}-\frac{27}{112}\mathcal{R},
\;
0<\mathcal{R}<\frac2{189},
\]
while for the construction over $\F_{256}$, we obtain
\[
\delta_{\rm lb}(\mathcal{R})
=\frac3{500}-\frac{17}{100}\mathcal{R},
\;
0<\mathcal{R}<\frac3{85}.
\]
Over the entire range of the $\F_{64}$ construction, the $\F_{256}$ construction has better relative parameters (in terms of the bounds we have obtained). However, it remains unclear whether increasing the extension degree leads to better trade-offs, since the parameters also depend on $s$ and $t$.

\subsection{A correction-free construction over \texorpdfstring{$\F_{256}$}{F256}}

For an additive map
$\psi:\F_{2^m}\to\F_2^s$, define
\[
 \sigma_\psi(a,b,c,d)
 =\sum_{j=1}^s\psi_j(a)\psi_j(b)\psi_j(c)\psi_j(d).
\]

In the correction-free construction, we use maps $\psi$ satisfying the general identity in \cref{eq:higher-map-identity} with $j=4$:
\begin{equation}\label{eq:fourth-map}
 \sigma_\psi(a,b,c,d)=\Tr\bigl(H_4(a,b,c,d)\bigr)
\end{equation}
for $a,b,c,d\in\F_{2^m}$. Since $H_4$ is homogeneous of total degree $8$, \cref{lem:schur-transfer} applies with $r=4$ and $\tau=8$. By \cref{eq:canonical-overlap-collision} and invariance of the trace under squaring, the same argument as in the proof of \cref{prop:higher-map} gives
\begin{align}
 \psi(a)\cdot\psi(b)
 &=\Tr(ab),
 \label{eq:fourth-pairing}\\
 \wt(\psi(a))
 &\equiv\Tr(a)\pmod2.
 \notag
\end{align}
In particular, $\psi$ is injective, and exactly half of the
field elements have odd image weight.

We use the representation of $\F_{256}$ in the basis $\{1,\alpha,\ldots,\alpha^7\}$ and the functionals $\ell_v$ fixed in \cref{ss:explicit256}.

\begin{proposition}\label{prop:fourth-map-256}
Let $\psi_{256}:\F_{256}\to\F_2^{93}$ have as its coordinate functionals the $\ell_v$ indexed by the following binary coefficient strings $v_7\cdots v_0$:
\[
\renewcommand{\arraystretch}{0.94}
\begin{array}{rrrrrr}
00110100&10111011&10001111&01111100&01000011&01110111\\
10000100&00011110&00101010&11001100&10100101&10000111\\
10010010&00111100&11001111&11111011&10101101&11100101\\
01111111&00010110&10001100&10111000&11111100&11001000\\
00111011&10100001&10010101&10000000&00101110&11100010\\
01111000&01001100&10011110&11000011&11010101&01001111\\
00010010&00100110&11000000&10011101&11001011&01110000\\
11111001&00111110&01000010&10110001&10001110&00100000\\
11010011&11100111&01111101&10101111&00000001&01101000\\
11110010&11000110&11100100&00100011&00010111&11000101\\
01101011&10011000&10010011&11111010&01010100&00011100\\
00110001&01011000&00010000&01001101&01111001&11011100\\
10110101&10000001&00011011&01010011&11001001&00001110\\
10100000&00111010&10110110&11011111&01110001&00111001\\
10100011&10010111&01010000&11000001&11110101&10011100\\
00000110&10101000&01111010&&&
\end{array}
\]
Then $\psi_{256}$ satisfies \cref{eq:fourth-map}. Its image is a binary $[93,8,37]$ code, and $\wt(\psi_{256}(\alpha^5))=45$.
\end{proposition}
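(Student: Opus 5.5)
This is a finite verification, so the plan is to reduce every claim to a finite check over a basis and then compute, as in the proofs of \cref{prop:inner64,prop:inner256}.

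\textbf{The identity \cref{eq:fourth-map}.} Both sides are $\F_2$-multilinear in $(a,b,c,d)$:
\begin{itemize}
\item the left side because each coordinate functional $\ell_v$ is $\F_2$-linear;
\item the right side because $H_4$ is additive in each variable in characteristic two, and $\Tr$ is additive.
\end{itemize}
It therefore suffices to check equality on the $8^4=4096$ quadruples of basis elements $(\alpha^{i_1},\alpha^{i_2},\alpha^{i_3},\alpha^{i_4})$. By symmetry of both sides, only the $\binom{11}{4}=330$ multisets of exponents are needed. For each quadruple:
\begin{enumerate}
\item Compute $H_4(\alpha^{i_1},\ldots,\alpha^{i_4})=abcd(a+b+c)(a+b+d)(a+c+d)(b+c+d)$ in $\F_2[\alpha]/(\alpha^8+\alpha^4+\alpha^3+\alpha+1)$.
\item Take its trace, using the precomputed values $\Tr(\alpha^k)$ for $0\le k\le 7$.
\item Compare with the left side. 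Since $\ell_v(\alpha^i)=v_i$, this is the number of the 93 listed strings $v$ having $v_{i_1}=v_{i_2}=v_{i_3}=v_{i_4}=1$, taken mod $2$.
\end{enumerate}

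\textbf{The code parameters.} Injectivity of $\psi_{256}$ follows from the pairing identity \cref{eq:fourth-pairing}, which the text derives from \cref{eq:fourth-map}: if $\psi(a)=0$ then $\Tr(ab)=0$ for all $b$, so $a=0$. Hence the image has dimension $8$. For the minimum distance, I will enumerate the $255$ nonzero field elements, compute $\wt(\psi_{256}(a))$, and confirm that the minimum is $37$ (attained at least once). I also read off $\wt(\psi_{256}(\alpha^5))=45$ directly: it equals the number of listed strings with $v_5=1$.

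As a consistency check, $45$ being odd matches $\Tr(\alpha^5)=1$ via $\wt(\psi(a))\equiv\Tr(a)$, and this can be cross-checked against the trace table.

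\textbf{Main obstacle.} The only real difficulty is bookkeeping: reducing the degree-8 product modulo the defining polynomial correctly in all 330 cases, and reading the bit strings in the stated order $v_7\cdots v_0$. I would carry this out in SageMath, as the paper does for the earlier propositions, since a hand computation would be error-prone. There is no conceptual gap, because multilinearity reduces everything to this finite check.
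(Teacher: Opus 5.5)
Your proposal is correct and follows the paper's proof: you reduce \cref{eq:fourth-map} by $\F_2$-multilinearity to polynomial-basis quadruples, enumerate the $255$ nonzero inputs to get the minimum weight $37$, and evaluate at $\alpha^5$ directly. Your additions are valid minor refinements: the symmetry reduction to $330$ multisets, obtaining $\dim=8$ from the pairing identity \cref{eq:fourth-pairing}, and reading $\wt(\psi_{256}(\alpha^5))$ as the number of strings with $v_5=1$ (this count is indeed $45$, consistent with $\Tr(\alpha^5)=1$).
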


\begin{proof}
The identity in \cref{eq:fourth-map} is four-linear over $\F_2$, so it is enough to check the $8^4$ quadruples of polynomial-basis elements. Enumerating the $255$ nonzero inputs gives minimum weight $37$, and direct evaluation at $\alpha^5$ gives weight $45$.
\end{proof}

For $\ell=16$, the constants in \cref{prop:outer}, with $\tau=8$, are
\[
 \rho_{16}^{(8)}=\frac3{40},\;
 A_{16}^{(8)}=\frac{103}{120},\;
 B_{16}^{(8)}=\frac1{120}.
\]
Thus \cref{thm:correction-free-general,cor:correction-free-uniform} give the following family.

\begin{corollary}\label{cor:F256-correction-free}
Fix $0<\kappa<1/360$. There is an explicit family $\{\widetilde{\cQ}_i\}_i$ of CSS codes with parameters $[[\widetilde N_i,K_i,\widetilde\Delta_i]]_2$, $K_i=\lfloor\kappa n_i\rfloor$, with $4$-orthogonal generator matrices, and $K_i\leq P_i\leq3K_i$ such that
\begin{align*}
 \widetilde N_i&\leq651(n_i-P_i),\\
 d_X(\widetilde{\cQ}_i)
 &\geq37\left(\frac{103}{120}n_i-P_i\right),\\
 d_Z(\widetilde{\cQ}_i)
 &\geq\frac1{120}n_i-P_i.
\end{align*}
Then $T^{\otimes\widetilde N_i}$ acts logically as $\overline T^{\otimes K_i}$ in the logical basis fixed in \cref{thm:correction-free-general}, up to a global phase and without a Clifford correction, and
\begin{align*}
 \liminf_{i\to \infty}\frac{K_i}{\widetilde N_i}
 &\geq\frac{\kappa}{651(1-\kappa)},\\
 \liminf_{i\to \infty}\frac{\widetilde\Delta_i}{\widetilde N_i}
 &\geq\frac{1/120-3\kappa}{651(1-3\kappa)}.
\end{align*}
\end{corollary}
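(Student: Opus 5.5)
The plan is to apply \cref{thm:correction-free-general} and then \cref{cor:correction-free-uniform}, with $h=3$, $m=8$, $\ell=16$, $\tau=8$, inner map $\psi=\psi_{256}$ from \cref{prop:fourth-map-256}, $s=93$ and $d_{\rm in}=37$. All that remains is to check the hypotheses and substitute the constants.

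First we check the hypotheses. With $h=3$ we have $2^{h-1}=4$, so the identity and homogeneity condition of \cref{lem:schur-transfer} are needed with $r=4$. This is exactly \cref{eq:fourth-map}, because $H_4$ is homogeneous of degree $8=\tau$. The power $\tau=8$ of $2$ satisfies $4\leq\tau\leq\ell/2=8$, and $m=8\geq6$ is even, so the family $\{\cC_i^{(8)}\}_i$ of \cref{prop:outer} is available. Injectivity of $\psi_{256}$ follows from \cref{eq:fourth-pairing}, or directly from the fact that its image has dimension $8$. An element of odd image weight is supplied by $\wt(\psi_{256}(\alpha^5))=45$. The bound $d_{\rm in}=37$ is the minimum distance of the $[93,8,37]$ image code.

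Next we substitute the constants. Since $B_{16}^{(8)}=1/120$, the admissible range $0<\kappa<B_\ell^{(\tau)}/(2^{h-1}-1)$ becomes $0<\kappa<1/360$, and the bound $P_i\leq(2^{h-1}-1)K_i$ becomes $P_i\leq 3K_i$. \Cref{thm:correction-free-general} then yields the $4$-orthogonal generator matrices, $N_i=93(n_i-P_i)$, and the bounds $d_X\geq37(\tfrac{103}{120}n_i-P_i)$ and $d_Z\geq\tfrac1{120}n_i-P_i$. \Cref{cor:correction-free-uniform} transfers these distance bounds unchanged to $\widetilde{\cQ}_i$ and gives $\widetilde N_i\leq(2^3-1)\cdot93(n_i-P_i)=651(n_i-P_i)$. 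The corollary also shows that $T^{\otimes\widetilde N_i}=R_3^{\otimes\widetilde N_i}$ acts as $\overline T^{\otimes K_i}$ up to a global phase, with no correction.

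The asymptotic rate and relative-distance bounds come directly from the two displays in \cref{cor:correction-free-uniform}, with $(2^h-1)s=651$ and $2^{h-1}-1=3$.

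The step needing the most care is the verification of \cref{prop:fourth-map-256}, namely the four-linear identity, the minimum weight $37$, and the odd weight at $\alpha^5$. That proposition is already established earlier in the paper, so the corollary itself reduces to bookkeeping. The only point to double-check is the constant $651=7\cdot93$ arising from the odd repetition factor $2^h-1$.
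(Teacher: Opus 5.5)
Your proposal is correct and follows the paper's proof exactly: apply \cref{thm:correction-free-general} with $h=3$, $\tau=8$, $m=8$, $s=93$, $d_{\rm in}=37$, $u=\alpha^5$ and the map $\psi_{256}$ from \cref{prop:fourth-map-256}, then pass to uniform $T$ via \cref{cor:correction-free-uniform}. Your constants also check out: $B_{16}^{(8)}/3=1/360$, $A_{16}^{(8)}=103/120$, and $(2^3-1)\cdot 93=651$.
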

\begin{proof}
Apply \cref{thm:correction-free-general} with $h=3$, $\tau=8$, $s=93$, $d_{\rm in}=37$, and $u=\alpha^5$, using \cref{prop:fourth-map-256}, and then apply \cref{cor:correction-free-uniform}.
\end{proof}

\subsection{Constant-overhead \texorpdfstring{$T$}{T}-state distillation}
\label{ss:constant-overhead}
In this section, we assume ideal stabilizer operations. Let $\ket{T}=T\ket{+}$ and $\tau_T=\ket{T}\bra{T}$. This exposition follows an argument similar to the one in the supplementary material of \cite{wills_constant_overhead_msd}. After independent Clifford twirling \cite{bravyiTriorthogonalOriginal}, the inputs are a mixture of $Z^e\ket{T}^{\otimes N}$, with $e\in\F_2^N$. We assume local stochastic noise of strength $p$, meaning that
\[
 \Pr(F\subset\supp(e))\leq p^{|F|}
\]
for every $F\subset\{1,\ldots,N\}$. This includes independent input errors of probability $p$. The following corollary applies the correction-based argument of \cite{wills_constant_overhead_msd} to the binary triorthogonal distillation protocol of \cite{bravyiTriorthogonalOriginal}.

\begin{corollary}\label{cor:constant-overhead-T}
Consider either family in \cref{thm:F64,thm:F256}, with a fixed admissible $\kappa$. There is $p_0>0$ such that, for every fixed $0<p<p_0$ and all sufficiently large $i$, a protocol using ideal stabilizer operations always returns $K_i$ states from $N_i$ noisy $T$ states, whose joint output state $\rho_i$ satisfies
\begin{equation*}\label{eq:trace_distance}
 \frac12\bigl\|\rho_i-\tau_T^{\otimes K_i}\bigr\|_1
 \leq e^{-cN_i}
\end{equation*}
for some $c>0$ independent of $i$. The input cost per output is exactly $N_i/K_i=O(1)$. Thus these families give direct constant-overhead $T$-state distillation.
\end{corollary}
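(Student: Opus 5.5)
The plan is to adapt the Bravyi--Haah distillation protocol for triorthogonal codes, replacing post-selection by decoding, in the spirit of \cite{wills_constant_overhead_msd}. Fix $i$ and write $\cQ_i=\operatorname{CSS}(C_{1,i},C_{2,i})$ with its triorthogonal matrix and the diagonal Clifford correction $U_i$ from \cref{thm:general-puncture}. The protocol proceeds as follows.
\begin{enumerate}
\item Prepare the logical state $\ket{\overline{+}}^{\otimes K_i}$ of $\cQ_i$ with ideal stabilizer operations.
\item Apply $T^{\otimes N_i}$ by gate teleportation: each physical $T$ consumes one noisy input $Z^{e_j}\ket{T}$ together with a Clifford fix-up, so that a phase error $e$ on the inputs becomes a physical $Z^e$ error after the ideal transversal $T$.
\item Apply $U_i$.
\item Measure the $X$-type stabilizers, that is, the syndrome of $e$ with respect to $C_{1,i}$. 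Decode by choosing a minimum-weight (or any bounded-distance) correction $\hat e$ with the same syndrome, apply $Z^{\hat e}$, and decode the logical qubits to obtain $K_i$ output states.
\end{enumerate}

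First I would check correctness in the noiseless case. By \cref{thm:F64,thm:F256}, $U_iT^{\otimes N_i}$ acts as $\overline T^{\otimes K_i}$, so the output is $\ket{T}^{\otimes K_i}$. With errors, because $Z^e$ commutes with the diagonal operators $T^{\otimes N_i}$ and $U_i$, the encoded state is $Z^e\overline{T}^{\otimes K_i}\ket{\overline +}$. Hence the output is exact whenever $e+\hat e\in C_{1,i}^\perp$ (the $Z$-stabilizers, which act trivially). Twirling makes the input error exactly of $Z$ type, which is what justifies this reduction. The joint output is therefore a mixture whose trace distance from $\tau_T^{\otimes K_i}$ is at most $\Pr[e+\hat e\notin C_{1,i}^\perp]$.

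It then suffices to show that this failure probability is $e^{-cN_i}$. With a minimum-weight decoder, failure requires $\wt(e)\geq d_Z/2$, since $e+\hat e\in C_{2,i}^\perp\setminus C_{1,i}^\perp$ has weight at least $d_Z(\cQ_i)\geq \delta N_i$ with $\delta>0$ by the corollaries. A weight-$w$ error hitting such a logical class must satisfy $F\subset\supp(e)$ for some set $F$ of size at least $d_Z/2$. Under local stochastic noise, a union bound over $F$ gives
\[
\Pr[\text{fail}]\leq\sum_{w\geq \delta N_i/2}\binom{N_i}{w}p^w\leq 2^{N_i}p^{\delta N_i/2},
\]
which is $e^{-cN_i}$ once $p<p_0:=2^{-4/\delta}$. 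The count can be sharpened, but crude counting suffices because $\delta$ is a fixed constant.

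The main obstacle is making the gate-teleportation step and the twirling argument rigorous. One has to confirm that consuming noisy $\ket{T}$ states to implement $T$ introduces only $Z$ errors (the teleportation correction is an $S$ Clifford, which commutes with $Z$). One must also check that the measurement outcomes and classical decoding are stabilizer operations, so the output is deterministic, with $K_i$ states always returned. Finally, the rate $N_i/K_i\to s(1-\kappa)/(t\kappa)=O(1)$ follows from the corollaries, giving constant overhead.
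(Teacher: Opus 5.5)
Your proposal is correct and follows essentially the same route as the paper. You inject $T^{\otimes N_i}$ so that input errors become $Z^e$, apply the commuting diagonal correction $U_i$, decode the $Z$ errors by minimum weight, and bound the failure probability by $\Pr(\wt(e)\geq d_Z/2)$ with a union bound under local stochastic noise. Your cruder estimate $2^{N_i}p^{\delta N_i/2}$ only yields a different threshold $p_0$ than the paper's $(\mathrm{e}p/a)^{aN_i}$.

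There is one small slip. The $X$-type stabilizers are $X^c$ with $c\in C_{2,i}$, so the syndrome is taken with respect to $C_{2,i}$, not $C_{1,i}$; measuring all of $C_{1,i}$ would include logical operators. Your later condition $e+\hat e\in C_{2,i}^\perp$ already assumes the $C_{2,i}$ syndrome.
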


\begin{proof}
Prepare encoded $\ket{+}^{\otimes K_i}$ and inject the physical $T^{\otimes N_i}$ using the input states. An input error $e$ gives $Z^eT^{\otimes N_i}$. Apply the diagonal Clifford correction $U_i$ from \cref{lem:higher-correction}. Since $U_i$ commutes with $Z^e$, the resulting state is $Z^e\ket{\overline T^{\otimes K_i}}$. Correct the $Z$ errors using minimum-weight syndrome decoding, and decode the corrected block to obtain the $K_i$ output qubits.

Write $d_{Z,i}=d_Z(\cQ_i)$. Minimum-weight recovery corrects every error of weight at most $r_i=\lfloor(d_{Z,i}-1)/2\rfloor$. Our distance bounds give $d_{Z,i}=\Omega(N_i)$, so there is $a>0$ such that $w_i:=r_i+1\geq aN_i$ for all sufficiently large $i$.
For $p<a/\mathrm e$,
\[
 \mathfrak{q}_i:=\Pr(\wt(e)\geq w_i)
 \leq\binom{N_i}{w_i}p^{w_i}
 \leq(\mathrm ep/a)^{aN_i}.
\]
On successful recovery, the output is exactly $\tau_T^{\otimes K_i}$, so this probability bounds the error of all $K_i$ outputs jointly. The output state can be written as  $\rho_i=(1-\mathfrak{q}_i)\tau_T^{\otimes K_i}+\mathfrak{q}_i\sigma_i$ for some state $\sigma_i$. Then 
$$
\frac12\bigl\|\rho_i-\tau_T^{\otimes K_i}\bigr\|_1
\leq \mathfrak{q}_i
\leq \left(\frac{\mathrm ep}{a}\right)^{aN_i}.
$$
Take $p_0=a/\mathrm e$ and $c=a\log(a/(\mathrm ep))$. The fact that $K_i=\Theta(N_i)$ gives the claimed input cost.
\end{proof}

\begin{remark}
The AG decoder in \cref{app:z-decoding} also gives a linear guaranteed $Z$-error correction radius and runs in polynomial time. For $h>3$, the lower-level corrections, including those arising from state injection, are generally non-Clifford. The same distillation argument applies if operations from $\mathfrak C_{h-1}$ are assumed ideal and are not counted in the overhead.
\end{remark}

\section{AI Usage}
The explicit inner maps from \cref{sec:families} were found with ChatGPT 5.6 Sol. This AI was also used to help organize drafts, clarify proofs, and write code to check the results. All content, claims, and conclusions have been reviewed and verified by the author, and they remain the responsibility of the author.

\appendix

\section{The general alphabet-reduction map}\label{app:general-map}

We first prove the properties of the polynomials in \cref{eq:canonical-overlap-product}, and then construct the map $\phi$ by solving a triangular system over $\F_2$.

\begin{lemma}\label{lem:overlap-polynomial}
For $j\geq1$, the polynomial $H_j$ is symmetric, homogeneous of degree $2^{j-1}$, and additive in each variable in characteristic two. Moreover,
\begin{equation}\label{eq:canonical-overlap-collision}
 H_j(X_1,\ldots,X_{j-2},Y,Y)
 =H_{j-1}(X_1,\ldots,X_{j-2},Y)^2,
\end{equation}
for $j\geq 2$. 
\end{lemma}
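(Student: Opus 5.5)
Symmetry and homogeneity are immediate from the definition \cref{eq:canonical-overlap-product}. A permutation of the variables permutes the odd-size subsets $S$ among themselves, so it only reorders the factors. The number of odd-size subsets of a $j$-element set is $2^{j-1}$, and each factor is linear. Hence $H_j$ is a product of $2^{j-1}$ linear forms and is homogeneous of degree $2^{j-1}$.

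For additivity, I will use a Moore-determinant type characterization. Fix $X_2,\dots,X_j$ and view $H_j$ as a polynomial in $X_1$. Group the odd subsets according to whether they contain $1$:
\[
H_j=\prod_{\substack{S\subset\{2,\ldots,j\}\\ |S|\text{ even}}}\Bigl(X_1+\sum_{a\in S}X_a\Bigr)\cdot\prod_{\substack{S\subset\{2,\ldots,j\}\\ |S|\text{ odd}}}\Bigl(\sum_{a\in S}X_a\Bigr).
\]
The second factor does not depend on $X_1$. The first factor is $\prod_{v\in W}(X_1+v)$, where $W$ is the set of sums $\sum_{a\in S}X_a$ with $|S|$ even. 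Working in the field of rational functions in $X_2,\dots,X_j$ over $\F_2$, the set $W$ is an $\F_2$-linear subspace, since symmetric differences of even sets are even. The $2^{j-2}$ elements indexed by even subsets are pairwise distinct, because the $X_a$ are independent indeterminates. It is classical (Ore) that $\prod_{v\in W}(X+v)$ is a linearized polynomial $\sum_k c_kX^{2^k}$ whenever $W$ is a finite $\F_2$-subspace. The argument is by induction on $\dim W$: if $W=W'\oplus\F_2 w$, then
\[
L_W(X)=L_{W'}(X)L_{W'}(X+w)=L_{W'}(X)\bigl(L_{W'}(X)+L_{W'}(w)\bigr),
\]
which equals $L_{W'}(X)^2+L_{W'}(w)L_{W'}(X)$ and is again additive. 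Hence $H_j$ is additive in $X_1$ as a polynomial identity, and by symmetry it is additive in every variable. The only care needed is the case $j=1$ ($H_1=X_1$) and the degenerate subspace when $j=2$ ($W=\{0\}$), both of which are trivial.

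For the collision identity \cref{eq:canonical-overlap-collision}, I will set $X_{j-1}=X_j=Y$ and match factors by a pairing argument. Write each odd $S\subset\{1,\dots,j\}$ as $S=T\cup U$ with $T\subset\{1,\dots,j-2\}$ and $U\subset\{j-1,j\}$; the factor then becomes $\sum_{a\in T}X_a+|U|\,Y$. Consider the two cases for $|T|$.

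If $|T|$ is odd, $|U|$ must be even, so $U=\varnothing$ or $U=\{j-1,j\}$. Both choices give $\sum_{a\in T}X_a$, since $2Y=0$ in characteristic two. This yields the square of the factor $\sum_{a\in T}X_a$, which occurs in $H_{j-1}(X_1,\ldots,X_{j-2},Y)$ for odd $T$ not using the last variable.

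If $|T|$ is even, $U$ is $\{j-1\}$ or $\{j\}$. Both give $\sum_{a\in T}X_a+Y$, again a square. These match the factors of $H_{j-1}$ indexed by $T\cup\{j-1\}$, which is odd.

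Every odd subset of $\{1,\dots,j-1\}$ arises exactly once in this way, so $H_j$ with the last two variables equal to $Y$ is $H_{j-1}(X_1,\ldots,X_{j-2},Y)^2$. I expect this bookkeeping to be routine; the main subtlety is the additivity step, specifically justifying that the roots of $H_j$ in $X_1$ form a subspace of size $2^{j-2}$ with no multiplicity issues.
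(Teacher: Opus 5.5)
Your proposal is correct and essentially follows the paper: symmetry and homogeneity come from the definition, and additivity comes from splitting the odd subsets according to one variable (you use $X_1$, the paper uses $X_j$), applying the same induction $L_{W'\oplus\langle w\rangle}(X)=L_{W'}(X)^2+L_{W'}(w)L_{W'}(X)$, and then using symmetry. The only difference is in the collision identity: you check it by direct factor pairing, whereas the paper reads it off from its factorization $H_j=H_{j-1}(X_1,\ldots,X_{j-1})\,L_W(X_j)$ together with $L_W(X_{j-1})=H_{j-1}(X_1,\ldots,X_{j-1})$, obtained via the bijection $S\mapsto S\triangle\{j-1\}$; both arguments are complete.
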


\begin{proof}
Symmetry and homogeneity follow from the definition. For a finite binary subspace $W$ of a field of characteristic two, the polynomial $L_W(Z)=\prod_{w\in W}(Z+w)$ is additive. Indeed, $L_{\{0\}}(Z)=Z$, and, if $u\notin W$, induction gives
\[
 L_{W\oplus\langle u\rangle}(Z)
 =L_W(Z)L_W(Z+u)
 =L_W(Z)^2+L_W(u)L_W(Z),
\]
which is again additive. For $j\geq2$, work over $\F_2[X_1,\ldots,X_{j-1}]$ and take
\[
 W=\left\{\sum_{a\in S}X_a:
 S\subset\{1,\ldots,j-1\},\ |S|\text{ even}\right\}=\Span_{\F_2}\{X_1+X_{j-1},\dots,X_{j-2}+X_{j-1}\}.
\]
Separating the odd subsets according to whether they contain $j$ gives
\[
 H_j(X_1,\ldots,X_j)
 =H_{j-1}(X_1,\ldots,X_{j-1})L_W(X_j).
\]
Thus $H_j$ is additive in its last variable, and hence in every variable by symmetry. Also note that $H_1(X)=X$ is immediate.

Finally, the map $S\mapsto S\triangle\{j-1\}$ is a bijection between the even and odd subsets of $\{1,\ldots,j-1\}$. Hence
\[
 L_W(X_{j-1})
 =\prod_{\substack{S\subset\{1,\ldots,j-1\}\\ |S|\text{ even}}}
 \left(X_{j-1}+\sum_{a\in S}X_a\right)
 =H_{j-1}(X_1,\ldots,X_{j-1}).
\]
Substituting $X_j=X_{j-1}$ into the preceding factorization gives
\[
 H_j(X_1,\ldots,X_{j-2},X_{j-1},X_{j-1})
 =H_{j-1}(X_1,\ldots,X_{j-1})^2,
\]
which proves \cref{eq:canonical-overlap-collision}.
\end{proof}

\begin{proof}[Proof of \cref{prop:higher-map}]
For $1\leq r\leq j$, consider
\[
\Gamma_r(x_1,\ldots,x_r)
:=\Tr\bigl(H_r(x_1,\ldots,x_r)\bigr).
\]
These forms are symmetric and multilinear over $\F_2$. By
\cref{eq:canonical-overlap-collision} and invariance of the trace
under squaring, they satisfy
\[
\Gamma_r(x_1,\ldots,x_{r-2},y,y)
=\Gamma_{r-1}(x_1,\ldots,x_{r-2},y),
\]
for $2\leq r \leq j$. Choose a binary basis $\beta_1,\ldots,\beta_m$ of $\F_{2^m}$, and let
\[
\mathcal I=\{I\subset\{1,\ldots,m\}:1\leq|I|\leq j\}.
\]
For $I\in\mathcal I$, put $t_I=\Gamma_{|I|}((\beta_i)_{i\in I})$, where the order is irrelevant by symmetry. Choose $c_U\in\F_2$, $U\in\mathcal I$, satisfying
\begin{equation}\label{eq:binary-moment-system}
\sum_{\substack{U\in\mathcal I\\ I\subset U}}c_U=t_I,
\end{equation}
for $I\in\mathcal I$. This system has a unique solution: ordering subsets by decreasing cardinality makes its coefficient matrix triangular with diagonal entries equal to one. Indeed, put $d=\min\{j,m\}$. For every $I$ with $|I|=d$, the equation gives $c_I=t_I$. For $|I|=d-1$, it gives
\[
c_I+\sum_{\substack{I\subsetneq U\\ |U|=d}}c_U=t_I,
\]
which determines $c_I$, since all coefficients in the sum have already been determined. Continuing by decreasing cardinality determines all the coefficients uniquely.
For $U\in\mathcal I$, define the binary linear functional
\[
\ell_U\!\left(\sum_{i=1}^m a_i\beta_i\right)=\sum_{i\in U}a_i,
\]
where $a_i \in \F_2$ for $1\leq i \leq m$, and set $\phi(x)=(\ell_U(x))_{U\in\mathcal I:\,c_U=1}$. Note that $\ell_U(\beta_i)=1$ if $i\in U$, and it is 0 otherwise. For any tuple of basis elements $\beta_{i_1},\ldots,\beta_{i_r}$, with $1\leq r\leq j$, let $I=\{i_1,\ldots,i_r\}$. Its binary overlap is
\[
\wt\!\left(
\mathop{\bigstar}_{a=1}^r\phi(\beta_{i_a})
\right)\bmod 2= \sum_{U\in\mathcal I}c_U\prod_{a=1}^r\ell_U(\beta_{i_a})
=\sum_{I\subset U}c_U=t_I
=\Gamma_r(\beta_{i_1},\ldots,\beta_{i_r}),
\]
where the last equality follows by repeatedly removing a repeated argument using the collision identity. Multilinearity extends the equality to all inputs, proving \cref{eq:higher-map-identity} and, for $r=1$, \cref{eq:higher-map-parity}. For $r=2$, the same identity gives
\[
\phi(x)\cdot\phi(y)=\Tr(xy).
\]
Since this pairing is nondegenerate, $\phi$ is injective. Finally,
\[
s\leq|\mathcal I|=\sum_{a=1}^{\min\{j,m\}}\binom ma.
\]
\end{proof}

\begin{remark}
The construction uses only finite-field arithmetic and back substitution in \cref{eq:binary-moment-system}, so it is explicit.
\end{remark}

\section{Higher-level odd completion}\label{app:higher-completion}

\begin{proof}[Proof of \cref{lem:higher-odd-completion}]
We argue by induction on $h$. The case $h=2$ is a special case of \cite[Lem.~III.4]{haah_towers_divisible_codes}. We give a direct proof of this case and then argue by induction on $h$. Thus, let $h=2$. The hypothesis is
$D\subset D^\perp$. Hence
\[
 \lambda(x)=\frac{\wt(x)}2\bmod2
\]
is linear on $D$. Extend it to $\F_2^N$, represent the extension by
$s\in\F_2^N$, and take $\nu=\one-2s\bmod4$, with every coordinate
represented by $1$ or $3$. Then
\[
 \wt_\nu(x)=0\bmod4,
 \;
 \wt_\nu(x\star y)=0\bmod2
\]
for $x,y\in D$.

Suppose now that the result holds at level $h-1$, and put
$P=D^{\star2}$. We have
\[
 P^{\star(2^{h-2}-1)}\subset P^\perp.
\]
Indeed, to test orthogonality to $P$, it is enough to take products generating $P^{\star(2^{h-2}-1)}$ and $P$. Their inner product is the sum of a product of $2^{h-1}$ words of $D$, and is therefore zero. By induction, there is a vector with odd entries $\nu^{(0)}$ modulo $2^{h-1}$ satisfying the required congruences on $P$. In particular,
\begin{equation}\label{eq:higher-completion-pairs}
 \wt_{\nu^{(0)}}(x\star y)=0\bmod2^{h-1},
\end{equation}
for $x,y\in D$. We prove the remaining congruences by induction on the number of codewords \(r\), for \(3\le r\le h\). Suppose the claim holds for \(r-1\), and consider
\[
 z=x^{(1)}\star x^{(3)}\star\cdots\star x^{(r)},
 \;
 z'=x^{(2)}\star x^{(3)}\star\cdots\star x^{(r)}.
\]
Then $z\star z'=x^{(1)}\star \cdots \star x^{(r)}$ and $z+z'=(x^{(1)}+x^{(2)})\star x^{(3)}\cdots \star x^{(r)}$ (in particular, $z+z'$ is a product of $r-1$ codewords). If we consider 
\[
 \wt_{\nu^{(0)}}(z+z')
 =\wt_{\nu^{(0)}}(z)+\wt_{\nu^{(0)}}(z')
 -2\wt_{\nu^{(0)}}(z\star z'),
\]
since $\wt_{\nu^{(0)}}(z+z')\equiv \wt_{\nu^{(0)}}(z)\equiv \wt_{\nu^{(0)}}(z')\equiv 0\bmod 2^{h-r+2}$, we obtain 
$$
\wt_{\nu^{(0)}}(x^{(1)}\star \cdots \star x^{(r)})=\wt_{\nu^{(0)}}(z\star z')\equiv0\bmod 2^{h-r+1}.
$$

Taking $y=x$ in \cref{eq:higher-completion-pairs} shows that
$\wt_{\nu^{(0)}}(x)$ is divisible by $2^{h-1}$. Moreover,
\[
 \mu(x)=\frac{\wt_{\nu^{(0)}}(x)}{2^{h-1}}\bmod2
\]
is linear on $D$, again by \cref{eq:higher-completion-pairs} and the weight
identity. Extend $\mu$ to $\F_2^N$, represent it by $s\in\F_2^N$, and set
\[
 \nu=\nu^{(0)}-2^{h-1}s\bmod2^h.
\]
Every coordinate of $\nu$ is odd, and the new term makes the weighted sum of
every word of $D$ zero modulo $2^h$. For a product of $r\geq2$ words, the
change is a multiple of $2^{h-1}$ and therefore preserves the congruence
modulo $2^{h-r+1}$. This completes the induction.
\end{proof}

\section{\texorpdfstring{$Z$}{Z}-error correction using the trace adjoint}
\label{app:z-decoding}
The map $\rho_\phi$ from \cref{eq:adjoint}, used to bound $d_Z$ in \cref{lem:puncturing-distances}, also allows us to reduce the $Z$-error-correction step in distillation to classical syndrome decoding.

\begin{lemma}\label{lem:trace-adjoint-decoding}
Let $\phi:\F_{2^m}\to\F_2^s$ be additive and injective, and consider $I=\phi(\F_{2^m})$ and 
\[
 \Phi:\F_{2^m}^L\longrightarrow\F_2^{sL}, \; \Phi(a_1,\dots,a_L)=(\phi(a_1),\dots,\phi(a_L)). 
\]
Let $\cA_2\subset\F_{2^m}^L$ be $\F_{2^m}$-linear and suppose that
\[
 C_2=\Phi(\cA_2)\subset C_1\subset I^L.
\]
A decoder for $\cA_2^\perp$ correcting every error of symbol weight at most $r$ induces a $Z$ decoder for $\operatorname{CSS}(C_1,C_2)$ correcting every binary error of weight at most $r$. 
\end{lemma}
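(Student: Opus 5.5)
The plan is to push the binary syndrome through the block-wise trace adjoint $\rho_{\phi,L}$ from the proof of \cref{lem:puncturing-distances}. This turns it into a syndrome for $\cA_2^\perp$ over $\F_{2^m}$, decode there, and then pull the answer back to a binary $Z$-correction. A $Z$ error $e\in\F_2^{sL}$ is detected by measuring the $X$-type stabilizers, which correspond to $C_2$. Concretely, we learn the inner products $e\cdot c$ for $c\in C_2=\Phi(\cA_2)$, that is, the linear functional $a\mapsto e\cdot\Phi(a)$ on $\cA_2$. By \cref{eq:adjoint} this equals $\Tr\bigl(\sum_r\rho_\phi(e^{(r)})a_r\bigr)$. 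Using $\F_{2^m}$-linearity of $\cA_2$, exactly as in \cref{eq:adjoint-prod-general}, these binary values for $a$ ranging over $\lambda\cdot(\text{basis of }\cA_2)$ determine the $\F_{2^m}$-linear functional $a\mapsto\sum_r\rho_\phi(e^{(r)})a_r$ on $\cA_2$. This functional is precisely the $\F_{2^m}$-syndrome of $\rho_{\phi,L}(e)$ with respect to a parity-check description of $\cA_2^\perp$, i.e.\ its coset modulo $\cA_2^\perp$.

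The key steps, in order, are as follows. (i) Compute the field syndrome from the binary syndrome as above, a linear-algebra step using non-degeneracy of the trace. (ii) If $\wt(e)\le r$, then $\wt(\rho_{\phi,L}(e))\le|\{r':e^{(r')}\neq0\}|\le\wt(e)\le r$, so the given decoder for $\cA_2^\perp$ applied to this syndrome returns some $\hat\epsilon\in\F_{2^m}^L$ with $\hat\epsilon-\rho_{\phi,L}(e)\in\cA_2^\perp$. In fact it returns $\hat\epsilon=\rho_{\phi,L}(e)$ exactly, under the usual convention that the decoder outputs the error itself. (iii) Lift $\hat\epsilon$ to a binary correction $\hat e\in\F_2^{sL}$ with $\rho_{\phi,L}(\hat e)=\hat\epsilon$. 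This is possible block by block because $\rho_\phi$ is surjective, since $\phi$ is injective and $\rho_\phi$ is its adjoint. (iv) Show that $\hat e+e$ acts trivially on the code, i.e.\ $\hat e+e\in C_1^\perp$.

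Step (iv) is the main obstacle, because $\rho_\phi$ has a kernel: binary errors are only determined modulo $\ker\rho_{\phi,L}$. I would handle this using the hypothesis $C_1\subset I^L$. For $y\in\ker\rho_{\phi,L}$ and $c=\Phi(a)\in I^L$ we get $y\cdot c=\Tr\bigl(\sum\rho_\phi(y^{(r)})a_r\bigr)=0$, so $\ker\rho_{\phi,L}\subset (I^L)^\perp\subset C_1^\perp$. Errors in this kernel are therefore stabilizers or trivial and need no correction. Since $\rho_{\phi,L}(\hat e+e)=\hat\epsilon-\rho_{\phi,L}(e)=0$ by step (ii), we conclude $\hat e+e\in\ker\rho_{\phi,L}\subset C_1^\perp$, and the residual is a $Z$-stabilizer.

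If the decoder only guarantees the correct coset $\hat\epsilon-\rho_{\phi,L}(e)\in\cA_2^\perp$, I would instead argue via minimality: because the symbol weight of $\rho_{\phi,L}(e)$ is at most $r$, unique decoding within radius $r$ forces equality. So I would state the decoder hypothesis as returning the unique error of weight at most $r$ in the coset, which makes the argument close. The decoder's running time is polynomial overall, with the only nontrivial cost being the call to the $\cA_2^\perp$ decoder.
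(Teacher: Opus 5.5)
Your proposal is correct and follows essentially the same route as the paper. You convert the binary $X$-syndrome into the $\F_{2^m}$-syndrome of $z=\rho_{\phi,L}(e)$ via the trace pairing, decode in $\cA_2^\perp$ using $\wt_q(z)\le\wt(e)$, lift back block-wise through a right inverse of the surjective map $\rho_\phi$, and note that the residual lies in $\ker\rho_{\phi,L}=(I^\perp)^L\subset C_1^\perp$. The differences are cosmetic: you call a syndrome decoder directly, whereas the paper solves $Hy^{\mathsf T}=h$ and decodes the word $y$ (remarking that a syndrome decoder would also do), and you derive surjectivity of $\rho_\phi$ from injectivity of $\phi$ rather than from the dimension count $\ker\rho_\phi=I^\perp$.
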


\begin{proof}
By \cref{eq:adjoint},
\[
 e\cdot\phi(x)=\Tr(\rho_\phi(e)x).
\]
The kernel of $\rho_\phi$ is $I^\perp$, of binary dimension $s-m$, and thus, its image has dimension $m$, i.e., $\rho_\phi$ is surjective. Fix an $\F_2$-linear map $\sigma:\F_{2^m}\to\F_2^s$ such that $\rho_\phi\circ \sigma = \text{id}_{\F_{2^m}}$ (this can be done by choosing one preimage for each element of a basis of $\F_{2^m}$ over $\F_2$).

Let $v_1,\ldots,v_a$ be an $\F_{2^m}$-basis of $\cA_2$, and let $\beta_1,\ldots,\beta_m$ and $\beta_1^*,\ldots,\beta_m^*$ be trace-dual binary bases of $\F_{2^m}$. The vectors $\Phi(\beta_jv_b)$, $1\leq j \leq m$, $1\leq b \leq a$, form an $\F_2$-basis of $C_2$. For an unknown binary error $e$, put $z=\rho_{\phi,L}(e)$. Because of our conventions for the CSS construction, the corresponding syndrome is obtained by multiplying with basis elements of $C_2$:
\[
 s_{b,j}=e\cdot\Phi(\beta_jv_b)= \Tr\left( \beta_j\sum_{c=1}^L (v_b)_c\rho_{\phi,L}(e)_c \right)
 =\Tr(\beta_jh_b)
\]
where $h_b=\sum_{c=1}^L(v_b)_cz_c$. Hence $h_b=\sum_{j=1}^m s_{b,j}\beta_j^*$. These are the syndrome coordinates over $\F_{2^m}$ of $z$ for the parity-check matrix $H$ whose rows are $v_b$.

Solve $Hy^{\mathsf T}=h$ for any $y\in\F_{2^m}^L$. Then $y-z\in\cA_2^\perp$, so a decoder for $\cA_2^\perp$ applied to $y$ recovers $z$ whenever $\wt_q(z)\leq r$ (we could have directly applied a syndrome decoder instead without looking for $y$, but this will be relevant for the next proposition). Every nonzero coordinate of $z$ comes from a nonzero binary block of $e$, and therefore
\(
 \wt_q(z)\leq\wt_2(e).
\)
Now we have
\[
 e+(\sigma(z_1),\dots,\sigma(z_L))\in
 \ker\rho_{\phi,L}=(I^\perp)^L\subset C_1^\perp.
\]
This is a $Z$ stabilizer in our convention, and thus, we recovered the error (up to stabilizers).
\end{proof}

\begin{remark}
For the two inner maps used in \cref{thm:F64,thm:F256}, \cref{eq:phi-bilinear} gives $\rho_\phi(\phi(x))=x$. Thus we may take $\sigma=\phi$. 
\end{remark}

By \cref{lem:trace-adjoint-decoding}, we reduced the problem of correcting $Z$ errors to decoding $\mathcal{A}_2^\perp$. Since our construction uses AG codes, we can use standard AG decoders for this, as we show next. We use the notation introduced in \cref{sec:outer,ss:ag_codes}.

\begin{proposition}\label{prop:AG-Z-decoder}
Consider the construction in \cref{thm:general-puncture} using an outer code $C_L(D,G)$ from \cref{prop:outer}. Let $S$ be the divisor of the $P\geq1$ deleted information coordinates and put $D'=D-S$, $L=n-P$, and $R=(du_0/u_0)$. If $d_*=\deg G-P+2-2g>0$, the corresponding binary code has a $Z$ decoder correcting every error of weight at most
\(
 \left\lfloor\frac{d_*-1}{2}\right\rfloor.
\)
For fixed $q$ and inner map, its complexity is $O(N^3)$.
\end{proposition}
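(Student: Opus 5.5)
By \cref{lem:trace-adjoint-decoding}, it suffices to decode $\cA_2^\perp$ over $\F_{2^m}$ up to symbol weight $\lfloor(d_*-1)/2\rfloor$; the lemma turns such a decoder into a binary $Z$ decoder with the same radius. Its hypotheses hold here. In the construction of \cref{thm:general-puncture}, $C_2=\Phi(\cA_2)$ and $C_1\subset\Phi(\cA_1)\subset I^L$. The plan is first to identify $\cA_2^\perp$ as an AG code on $D'$, and then to apply a standard AG decoder to it.

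\textbf{Identifying $\cA_2^\perp$.} Write $\cA_2$ as the shortening of $C_L(D,G)$ on $S$. A function $f\in L(G)$ vanishes at the places of $S$ exactly when $f\in L(G-S)$. Hence $\cA_2=C_L(D',G-S)$. Next I use the differential $du_0/u_0$. It still has simple poles with residue one at the places of $D'$. Its divisor can be written as $(R+S)-D'$ relative to $D'$, where $R+D=(R+S)+D'$. Applying \cref{eq:ag-dual} with $D'$ in place of $D$ and the differential $du_0/u_0$ gives
\[
 \cA_2^\perp=C_L(D',R+D-G).
\]
Equivalently, $\cA_2^\perp$ is a residue code $C_\Omega(D',G-S)$. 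Its designed distance is $\deg(G-S)-(2g-2)=\deg G-P+2-2g=d_*$. This matches the bound $d(\cA_2^\perp)\geq d(\cC^\perp)-P$ from \cref{lem:puncturing-distances}, and the hypothesis $d_*>0$ makes the description nontrivial. One point needs checking: the support of $G-S$ must be disjoint from $D'$. This holds because $\supp G$ avoids $D$ and $S$ consists of the deleted places only.

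\textbf{Decoding.} I would invoke a known decoder for residue AG codes that corrects up to half the designed distance. The natural choice is the Feng--Rao / majority-voting decoder, or the Duursma or Ehrhard algorithm. Each of these corrects $\lfloor(d_*-1)/2\rfloor$ errors for $C_\Omega(D',G-S)$ using the syndrome alone. That suits our setting, because \cref{lem:trace-adjoint-decoding} provides only syndromes $h_b$ (or a coset representative $y$). These decoders run in $O(L^3)$ field operations, and the needed bases of Riemann--Roch spaces are computable for the explicit tower. All the reductions in \cref{lem:trace-adjoint-decoding} are linear algebra of size $O(N)\times O(N)$ over a fixed field: computing $h$, solving $Hy^{\mathsf T}=h$, and applying $\sigma$ blockwise. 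Since $q$ and $s$ are fixed, $N=sL$ is proportional to $L$, so the total cost is $O(N^3)$.

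\textbf{Main obstacle.} The hardest step is the precise invocation of the AG decoder. Feng--Rao-type decoders reach half the designed distance only under suitable conditions on the divisor, typically $\deg(G-S)\geq 2g-1$ or a comparable inequality, and they also need an auxiliary divisor. Basic algorithms such as Skorobogatov--Vl\u{a}du\c{t} lose roughly $g/2$ in the radius. My first attempt would be a syndrome-based algorithm of Duursma or majority-voting type, which achieves $\lfloor(d_*-1)/2\rfloor$ exactly for every $G-S$ with $d_*>0$. I would check that its prerequisites are met by explicit bases of $L(\cdot)$ along the tower; this precomputation also fits within the cubic complexity for fixed $q$.
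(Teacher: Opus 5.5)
Your plan follows the paper's proof. You reduce via \cref{lem:trace-adjoint-decoding} to syndrome decoding of $\cA_2^\perp=C_\Omega(D',G-S)$, observe that its designed distance is $\deg(G-S)+2-2g=d_*$, and apply an AG decoder that corrects up to half the designed distance. Your identification of $\cA_2^\perp$ is correct, although the sentence about the divisor of $du_0/u_0$ is garbled. Its divisor is still $R$. All that matters is that it has simple poles with residue one on $D'$, so \cref{eq:ag-dual} on $D'$ gives $R+D'-(G-S)=R+D-G$.

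The item you leave open in your ``main obstacle'' paragraph is exactly what the paper's proof settles, and it is where the hypothesis $P\geq1$ is used. The decoders that reach $\lfloor(d_*-1)/2\rfloor$ for a general $C_\Omega(D',G-S)$ need two things: an auxiliary rational place $Q\notin\supp D'$, and $2g-2<\deg(G-S)<L$. This covers majority-voting decoders and the key-equation decoder \cite[Thm.~3]{farran_decoding_ag} that the paper invokes, which costs $O(L^{2.81})$. The conditions are checked as follows:
\begin{itemize}
\item Any deleted place in $S$ serves as $Q$, since it is rational and not in $D'$.
\item The lower bound on $\deg(G-S)$ is just $d_*>0$.
\item The upper bound is $\deg G-P<n-P=L$, which follows from $\deg G<n$ in \cref{prop:outer}.
\end{itemize}
You should carry out these checks rather than defer them; with them your argument is complete.

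Finally, the paper's $O(N^3)$ is the decoder's cost once the AG code is given. The cubic term comes from computing a right inverse of $H$ in \cref{lem:trace-adjoint-decoding}. The Riemann--Roch precomputation is only claimed to be polynomial. Your assertion that it also fits within cubic time is unsupported and not needed for the statement.
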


\begin{proof}
Shortening on $S$ gives
\[
 \cA_2=C_L(D',G-S).
\]
Consequently, we have
\[
 \cA_2^\perp=C_\Omega(D',G-S),
\]
which is the differential AG code associated to $D'$ and $G-S$. We apply the general AG decoder of \cite[Thm. 3]{farran_decoding_ag}, which has complexity $O(L^{2.81})$. It requires an extra rational point $Q$ outside of $D'$, which we can take from $S$. It decodes any error with weight at most $\left\lfloor \frac{\deg(G-S)+2-2g-1}{2}\right\rfloor $. Since we have
$$
\deg(G-S)+2-2g=\deg G-P+2-2g=d_*,
$$
it decodes any error with weight at most $\left \lfloor \frac{ d_*-1}{2} \right\rfloor$. With respect to the rest of the hypotheses of \cite[Thm. 3]{farran_decoding_ag}, just note that $d_*>0\iff \deg(G-S)>2g-2$ and $\deg(G-S)<L$, which follows from $\deg G<n$. The binary recovery statement follows from \cref{lem:trace-adjoint-decoding}, and the total complexity is $O(N^3)$ if we need to compute a right inverse of $H$ in the proof of \cref{lem:trace-adjoint-decoding} to obtain a solution $y$ (it would be $O(N^{2.81})$ if this is precomputed). 
\end{proof}

\begin{remark}
For the asymptotic families considered above, the choice of $P$ gives $d_*\geq B_\ell^{(\tau)}n-P>0$ (this corresponds to the estimate for the $Z$-distance).
\end{remark}

\begin{remark}
The complexity in \cref{prop:AG-Z-decoder} does not take into account the construction of the actual AG codes. This previous step can be carried out in polynomial time using standard algorithms for computing Riemann--Roch spaces, e.g., see \cite[Alg.~13 and Rem.~14]{hess_computing_RS_spaces}.
\end{remark}


\end{document}